\newcommand{\Sl}{\ensuremath{\mathcal{S}}}
\newcommand{\Bl}{\ensuremath{\mathcal{B}}}
\newcommand{\Wl}{\ensuremath{\mathcal{W}}}
\newcommand{\Fl}{\ensuremath{\mathcal{F}}} 
\newcommand{\Dl}{\ensuremath{\mathcal{D}}}
\newcommand{\Rl}{\ensuremath{\mathcal{R}}}
\newcommand{\Wr}{\ensuremath{\mathfrak{W}}}
\newcommand{\rr}{\ensuremath{\mathfrak{r}}}
\newcommand{\NN}{\ensuremath{\mathbb{N}}}
\newcommand{\QQ}{\ensuremath{\mathbb{Q}}}
\newcommand{\RAND}{\ensuremath{\syn{rand}\,}}
\newcommand{\lang}{\ensuremath{\mathbf{F}^{\mu,\oplus}}}
\newcommand{\syn}[1]{\textnormal{\texttt{#1}}}
\newcommand{\bnfeq}{\ensuremath{::=}}
\newcommand{\hmid}{\ensuremath{\;|\;}}
\newcommand{\BOOL}{\ensuremath{\mathbf{2}}}
\newcommand{\TRUE}{\ensuremath{\syn{true}}}
\newcommand{\FALSE}{\ensuremath{\syn{false}}}
\newcommand{\ONE}{\ensuremath{\mathbf{1}}}
\newcommand{\plus}{\ensuremath{\mathop{+}}}
\newcommand{\NAT}{\syn{nat}}
\newcommand{\ALL}[2]{\ensuremath{\forall#1.#2}}
\newcommand{\EX}[2]{\ensuremath{\exists#1.#2}}
\newcommand{\REC}[2]{\ensuremath{\mu#1.#2}}
\newcommand{\INL}{\ensuremath{\syn{inl}\,}}
\newcommand{\INR}{\ensuremath{\syn{inr}\,}}
\newcommand{\FOLD}{\ensuremath{\syn{fold}\,}}
\newcommand{\UNFOLD}{\ensuremath{\syn{unfold}\,}}
\newcommand{\PROJ}[1]{\ensuremath{\syn{proj}_{#1}}}
\newcommand{\unitval}{\langle\rangle}
\newcommand{\PAIR}[2]{\ensuremath{\langle#1,#2\rangle}}
\newcommand{\FUN}[2]{\ensuremath{\lambda#1.#2}}
\newcommand{\TFUN}[1]{\ensuremath{\Lambda .#1}}
\newcommand{\TAPP}[1]{\ensuremath{#1[]}}
\newcommand{\PACK}[1]{\ensuremath{\syn{pack}\,#1}}
\newcommand{\UNPACK}[3]{\ensuremath{\syn{unpack}~#1~\syn{as}~#2~\syn{in}~#3}}
\newcommand{\MATCH}[5]{\ensuremath{\syn{match}\left(#1, #2.#3, #4.#5\right)}}
\newcommand{\IFZERO}[3]{\ensuremath{\syn{if}_1~#1~\syn{then}~#2~\syn{else}~#3}}
\newcommand{\PRED}{\ensuremath{\syn{P}\,}}
\newcommand{\SUCC}{\ensuremath{\syn{S}\,}}
\newcommand{\IFT}[3]{\ensuremath{\syn{if}~#1~\syn{then}~#2~\syn{else}~#3}}
\newcommand{\FIX}{\syn{fix}\xspace}
\newcommand{\REFTY}{\ensuremath{\syn{ref}\xspace}}
\newcommand{\LOOKUP}[1]{\ensuremath{{!{#1}}}}
\newcommand{\LOC}{\ensuremath{\syn{Loc}}}
\newcommand{\REF}{\ensuremath{\syn{ref}\,}}
\newcommand{\sat}[1]{\ensuremath{\left\lfloor #1 \right\rfloor}}
\newcommand{\ASSIGN}[2]{\ensuremath{{#1 := #2}}}
\newcommand{\basicstepsto}[1]{\ensuremath{\mathrel{\overset{#1}{\longmapsto}}}}
\newcommand{\stepsto}[1]{\ensuremath{\mathrel{\overset{#1}{\leadsto}}}}
\newcommand{\stepstopure}{\ensuremath{\mathrel{\overset{\text{cf}}{\Longrightarrow}}}}
\newcommand{\stepstozero}{\ensuremath{\mathrel{\overset{\text{uff}}{\Longrightarrow}}}}
\newcommand{\stepstozeropure}{\ensuremath{\mathrel{\overset{\text{cuff}}{\Longrightarrow}}}}
\newcommand{\conf}[2]{\ensuremath{\left\langle #1, #2 \right\rangle}}
\newcommand{\empeval}{\ensuremath{\mathop{-}}}
\newcommand{\subst}[2]{\ensuremath{[{#2}{/}{#1}]}}
\newcommand{\emp}{\ensuremath{\varnothing}}
\newcommand{\ftv}[1]{\ensuremath{\mathit{ftv}(#1)}}
\newcommand{\dom}[1]{\ensuremath{\mathrm{dom}\left(#1\right)}}
\newcommand{\extend}[3]{\ensuremath{{#1\left[#2\mapsto #3\right]}}}
\newcommand{\ecomp}[2]{\ensuremath{#1 \circ #2}}
\newcommand{\hastype}[4]{\ensuremath{{#1 \hmid #2 \vdash #3 : #4}}}
\newcommand{\dists}{\ensuremath{\mathbf{Dist}}}
\newcommand{\redpath}[3]{\ensuremath{#1 : #2 \leadsto^* #3}}
\newcommand{\last}[1]{\ensuremath{\ell\left(#1\right)}}
\newcommand{\len}[1]{\ensuremath{\mathbf{len}\left(#1\right)}}
\newcommand{\redsname}{\ensuremath{\mathbf{Red}}}
\newcommand{\reds}[1]{\ensuremath{\redsname\left(#1\right)}}
\newcommand{\paths}{\ensuremath{\mathfrak{R}}}
\newcommand{\weight}[1]{\ensuremath{\Wr\left(#1\right)}}
\newcommand{\interval}{\ensuremath{\mathcal{I}}}
\newcommand{\PR}[1]{\ensuremath{\mathfrak{P}^\Downarrow\left(#1\right)}}
\newcommand{\PRk}[2]{\ensuremath{\mathfrak{P}^\Downarrow_{#1}\left(#2\right)}}
\newcommand{\DPRk}[1]{\ensuremath{\mathbf{\mathfrak{P}\rr}^k}}
\newcommand{\msn}[1]{\mathbf{#1}}
\newcommand{\VRel}[2]{\msn{VRel}\left(#1, #2\right)}
\newcommand{\WRel}[2]{\msn{WRel}\left(#1, #2\right)}
\newcommand{\SRel}[2]{\msn{SRel}\left(#1, #2\right)}
\newcommand{\ERel}[2]{\msn{TRel}\left(#1, #2\right)} 
\newcommand{\VRelD}[1]{\msn{VRel}\left(#1\right)}
\newcommand{\Type}{\mathfrak{T}\xspace}
\newcommand{\Exp}[1]{\msn{Tm}\left( #1 \right)}
\newcommand{\Val}[1]{\msn{Val}\left( #1 \right)}
\newcommand{\Stk}[1]{\msn{Stk}\left( #1 \right)}
\newcommand{\Exps}{\msn{Tm}}
\newcommand{\Vals}{\msn{Val}}
\newcommand{\Ectxt}{\msn{Stk}\xspace}
\newcommand{\Ectxts}{\msn{Stk}\xspace}
\newcommand{\Stks}{\msn{Stk}\xspace}
\newcommand{\Subst}{\msn{Subst}\xspace}
\newcommand{\R}{\ensuremath{\mathcal{R}}\xspace} 
\newcommand{\ctxapproxrel}{\ensuremath{\mathrel{\lesssim}^{\textit{ctx}}}}
\newcommand{\ctxapprox}[5]{\ensuremath{#1 \hmid #2 \vdash #3
    \ctxapproxrel #4 : #5}}
\newcommand{\ctxequivrel}{\ensuremath{\mathrel{=}^{\textit{ctx}}}}
\newcommand{\ctxequiv}[5]{\ensuremath{#1 \hmid #2 \vdash #3
    \ctxequivrel #4 : #5}}
\newcommand{\logapproxrel}{\ensuremath{\mathrel{\lesssim}^{\textit{log}}}}
\newcommand{\ciuapproxrel}{\ensuremath{\mathrel{\lesssim}^{\textit{CIU}}}}
\newcommand{\ciuapprox}[5]{\ensuremath{#1 \hmid #2 \vdash #3
    \ciuapproxrel #4 : #5}}
\newcommand{\logapprox}[5]{\ensuremath{#1 \hmid #2 \vdash #3
    \logapproxrel #4 : #5}}
\newcommand{\logequivrel}{\ensuremath{\mathrel{\cong}^{log}}}
\newcommand{\logequiv}[5]{\ensuremath{#1 \hmid #2 \vdash #3
    \logequivrel #4 : #5}}
\newcommand{\TT}[1]{\ensuremath{{#1}^{\top\!\!\top}}}
\newcommand{\T}[1]{\ensuremath{{#1}^{\top}}}
\newcommand{\Te}[1]{\ensuremath{{#1}^{\bot}}}
\newcommand{\pow}[1]{\ensuremath{\mathcal{P}\left(#1\right)}}
\renewcommand{\implies}{\to}
\newcommand{\isetsep}{\ensuremath{\ \big|\ }}
\renewcommand{\phi}{\varphi}
\newcommand{\comp}{\ensuremath{\circ}}
\newcommand{\den}[1]{
 \left\llbracket #1
 \right\rrbracket}
\begin{document}

\frontmatter          
\pagestyle{headings}  

\mainmatter              
\title{Step-Indexed Logical Relations for Probability}
\author{Ale\v{s} Bizjak \and Lars Birkedal}
\authorrunning{Ale\v{s} Bizjak and Lars Birkedal} 
\institute{Aarhus University\\
  \email{\{abizjak,birkedal\}@cs.au.dk}}

\maketitle              

\begin{abstract}

  It is well-known that constructing models of higher-order probabilistic programming
  languages is challenging.  We show how to construct step-indexed logical relations for a
  probabilistic extension of a higher-order programming language with impredicative
  polymorphism and recursive types. We show that the resulting logical relation is sound
  and complete with respect to the contextual preorder and, moreover, that it is
  convenient for reasoning about concrete program equivalences. Finally, we extend the
  language with dynamically allocated first-order references and show how to extend the
  logical relation to this language. We show that the resulting relation remains useful
  for reasoning about examples involving both state and probabilistic choice.
\end{abstract}

\section{Introduction}
\label{sec:introduction}

It is well known that it is challenging to develop techniques for
reasoning about programs written in probabilistic higher-order
programming languages.  A probabilistic program evaluates to a
distribution of values, as opposed to a set of values in the case of
nondeterminism or a single value in the case of deterministic
computation. Probability distributions form a monad. This observation
has been used as a basis for several denotational domain-theoretic
models of probabilistic languages and also as a guide for designing
probabilistic languages with monadic
types~\cite{Jones:powerdomain-evaluations,SahebDjahromi:powerdomains,Ramsey:stochastic-lambda-calc}.
Game semantics has also been used to give models of
probabilistic programming
languages~\cite{Danos:prob-game-semantics,Ehrhard:computational-meaning-probch}
and a fully abstract model using coherence spaces for PCF with probabilistic choice
was recently
presented~\cite{Ehrhard:prob-coherence-fully-abstract}.

The majority of models of probabilistic programming languages have
been developed using denotational semantics.  However, Johann
et.al.~\cite{johann-et-al:operational-metatheory} developed
operationally-based logical relations for a polymorphic programming
language with effects. Two of the effects they considered were
probabilistic choice and \emph{global} ground store.  However, as
pointed out by the authors~\cite{johann-et-al:operational-metatheory},
extending their construction to local store and, in particular,
higher-order local store, is likely to be problematic. 
Recently, operationally-based bisimulation techniques have been
extended to probabilistic extensions of
PCF~\cite{Cruiblle:bisim-prob-CBV,DalLago:bisim-prob-CBN}.
The operational semantics of probabilistic higher-order
programming languages has been investigated
in~\cite{dallago:prob-operational-sem}.

Step-indexed logical
relations~\cite{Ahmed:step-indexed-quantified,Appel:2001:indexed-model}
have proved to be a successful method for proving contextual
approximation and equivalence for programming languages with a wide
range of features, including computational effects.

In this paper we show how to extend the method of step-indexed logical
relations to reason about contextual approximation and equivalence of
probabilistic higher-order programs.  To define the logical relation
we employ
biorthogonality~\cite{Pitts00parametricpolymorphism,Pitts:step-indexed-biorthogonality}
and step-indexing. Biorthogonality is used to ensure
completeness of the logical relation with respect to contextual
equivalence, but it also makes it possible to keep the value relations
simple, see Fig.~\ref{fig:logical-relation}.
Moreover, the definition using biorthogonality makes it possible to
``externalize'' the reasoning in many cases when proving example
equivalences. By this we mean that the reasoning reduces to
algebraic manipulations of probabilities. This way, the quantitative
aspects do not complicate the reasoning much, compared to the
usual reasoning with step-indexed logical relations.
To define the
biorthogonal lifting we use two notions of observation;
the termination probability and its stratified version approximating it. We define these
and prove the required properties in Section~\ref{sec:term-relat}.

We develop our step-indexed logical relations for the
call-by-value language $\lang$. This is system $\mathbf{F}$ with recursive
types, extended with a single probabilistic choice primitive
$\RAND$. The primitive $\RAND$ takes a
natural number $n$ and reduces with uniform probability to one of $1,
2, \ldots, n$. Thus $\RAND n$ represents the uniform probability
distribution on the set $\{1, 2, \ldots, n\}$. We choose to add
$\RAND$ instead of just a single coin flip primitive to make the
examples easier to write.

To show that the model is useful we use it to prove some example equivalences in
Section~\ref{sec:examples}. We show two examples based on parametricity.
In the first example, we characterize elements of the
universal type $\ALL{\alpha}{\alpha\to\alpha}$. In a deterministic language, and even in a
language with nondeterministic choice, the only interesting element of this type is the
identity function. However, since in a probabilistic language we not only observe the
end result, but also the likelihood with which it is returned, it turns out that there are
many more elements. Concretely, we show that the elements of the type
$\ALL{\alpha}{\alpha\to\alpha}$ that are of the form $\Lambda\alpha . {\FUN{x}{e}}$,
correspond precisely to \emph{left-computable} real numbers in the interval $[0,1]$.  In
the second example we show a free theorem involving functions on lists. 
We show additional equivalences in the Appendix, including the
correctness of von Neumann's procedure for generating a fair sequence
of coin tosses from an unfair coin, and equivalences from the
recent papers using
bisimulations~\cite{Cruiblle:bisim-prob-CBV,DalLago:bisim-prob-CBN}.

We add dynamically allocated references to the language and extend the logical relation to the new
language in Section~\ref{sec:extension-references}. For simplicity we only sketch how
to extend the construction with first-order state. This already suggests that an extension
with general references can be done in the usual way for step-indexed logical
relations. We conclude the section by proving a representation independence result
involving both state and probabilistic choice.

All the references to the Appendix in this paper refer to appendix in the online
version~\cite{bizjak-birkedal:long-version}. 

\section{The language $\lang$}
\label{sec:language}

The language is a standard pure functional language with recursive, universal and
existential types with an additional choice primitive $\RAND$. The base types include the
type of natural numbers $\NAT$ with some primitive operations.
The grammar of terms $e$ is
\begin{align*}
  e  &\bnfeq x\hmid  \unitval\hmid \RAND e \hmid \underline{n}  \hmid \IFZERO{e}{e_1}{e_2} \hmid \PRED e
  \hmid \SUCC e \hmid \PAIR {e_1}{e_2}\hmid \PROJ{i}\,e
  \\ &\hspace{4ex}
  \hmid \FUN xe \hmid e_1\,e_2\hmid \INL\,e \hmid \INR\,e \hmid \MATCH{e}{x_1}{e_1}{x_2}{e_2}
  \hmid \TFUN e \hmid \TAPP e\\
  &\hspace{4ex} \hmid \PACK{e} \hmid
   \UNPACK{e_1}{x}{e_2} \hmid \FOLD{e} \hmid \UNFOLD{e}
\end{align*}%
We write $\underline n$ for the numeral
representing the natural
number $n$ and $\SUCC$ and $\PRED$ are the successor and predecessor functions,
respectively.
For convenience, numerals start at $\underline 1$. Given a numeral $\underline
n$, the term $\RAND \underline n$ evaluates to one of the numerals $\underline 1,\ldots,
\underline n$ with uniform probability. There are no types in the syntax of terms,
e.g., instead of $\Lambda \alpha . e$ and $e\,\tau$ we have $\TFUN{e}$ and $\TAPP{e}$. This
is for convenience only.

We write $\alpha, \beta, \ldots$ for \emph{type variables}
and $x, y,\ldots$ for \emph{term variables}. The notation
$\tau\subst{\vec\alpha}{\vec\tau}$ denotes the simultaneous capture-avoiding
substitution of types $\vec\tau$ for the free type variables $\vec\alpha$ in the type
$\tau$;
$e\subst{\vec x}{\vec v}$ denotes simultaneous capture-avoiding substitution of
values $\vec v$ for the free term variables $\vec x$ in the term $e$.

We write $\Stks$ for the set of evaluation contexts given by the call-by-value reduction
strategy. Given two evaluation contexts $E, E'$ we define their composition
$\ecomp{E}{E'}$ by induction on $E$ in the natural way. Given an evaluation context $E$
and expression $e$ we write $E[e]$ for the term obtained by plugging $e$ into $E$. For any
two evaluation contexts $E$ and $E'$ and a term $e$ we have $E[E'[e]] =
(\ecomp{E}{E'})[e]$.

For a type variable context $\Delta$, the judgment $\Delta \vdash \tau$ expresses that the
free type variables in $\tau$ are included in $\Delta$. 
The typing judgments are entirely standard with the addition of the
typing of $\RAND$ which is given by the rule
\begin{align*}
  \inferrule{\Delta\hmid \Gamma \vdash e : \NAT}{\Delta\hmid\Gamma\vdash \RAND\,e : \NAT}.
\end{align*}%
The complete set of typing rules are in the Appendix.
We write $\Type(\Delta)$ for the set of types well-formed in context $\Delta$, and $\Type$
for the set of \emph{closed} types $\tau$.  We write
$\Val{\tau}$ and $\Exp{\tau}$ for the sets of \emph{closed} values and terms of type
$\tau$, respectively.  We write $\Vals$ and $\Exps$ for the set of \emph{all}\footnote{In
  particular, we do not require them to be typeable.} \emph{closed}
values and closed terms, respectively. $\Stk{\tau}$ denotes the set of $\tau$-accepting
evaluation contexts, i.e., evaluation contexts $E$, such that given any closed term $e$ of
type $\tau$, $E[e]$ is a typeable term. $\Ectxt$ denotes the set of all evaluation
contexts.

For a typing context $\Gamma = x_1{:}\tau_1,\ldots,x_n{:}\tau_n$ with
$\tau_1,\ldots,\tau_n\in\Type$, let $\Subst(\Gamma)$
denote the set of type-respecting value substitutions, i.e. for all $i$, $\gamma(x_i) \in
\Val{\tau_i}$. 
In particular, if  $\Delta\hmid\Gamma\vdash e:\tau$ then $\emp\hmid\emp\vdash
e\gamma : \tau\delta$ for any $\delta\in \Type^\Delta$ and
$\gamma\in\Subst(\Gamma\delta)$, and the type system satisfies 
standard properties of progress and preservation and a canonical forms lemma.

The operational semantics of the language is a standard call-by-value semantics but
weighted with $p \in [0,1]$ which denotes the likelihood of that reduction. We write
$\stepsto{p}$ for the one-step reduction relation. All the usual
$\beta$ reductions have weight equal to $1$ and the reduction from $\RAND \underline n$ is
\vspace{-0.2cm}
\begin{align*}
  \RAND\underline{n} \stepsto{\frac{1}{n}} \underline k \qquad\ \
  \text{for } k \in \{1,2,\ldots,n\}.
\end{align*}
The rest of the rules are given in Fig.~\ref{app:fig:semantics} in the Appendix. The
operational semantics thus gives rise to a Markov chain with closed terms as states. In
particular for each term $e$ we have $\sum_{ e'\,\mid\,e \stepsto{p} e'}p \leq 1$.

\section{Observations and biorthogonality}
\label{sec:term-relat}

We will use biorthogonality to define the logical relation. This section provides the
necessary observation predicates used in the definition of the biorthogonal lifting of
value relations to expression relations. Because of the use of biorthogonality the value
relations (see Fig.~\ref{fig:logical-relation}) remain as simple as for a language
without probabilistic choice. The new quantitative aspects only appear in the definition
of the biorthogonal lifting ($\top\top$-closure) defined in
Section~\ref{sec:logical-relation}. Two kinds of observations are used.  The probability
of termination, $\PR{e}$, which is the actual probability that $e$ terminates, and its
approximation, the \emph{stratified} termination probability $\PRk{k}{e}$, where $k \in
\NN$ denotes, intuitively, the number of computation steps. The stratified termination
probability provides the link between steps in the operational semantics and the indexing in the
definition of the interpretation of types.

The probability of termination, $\PR{\cdot}$, is a function of type $\Exps \to \interval$
where $\interval$ is the unit interval $[0,1]$. Since $\interval$ is a pointed
$\omega$-cpo for the usual order, so is the space of all functions
$\Exps\to\interval$ with pointwise ordering. We define $\PR{\cdot}$ as a fixed point
of the continuous function $\Phi$ on this $\omega$-cpo:
Let $\Fl = \Exps \to \interval$ and define $\Phi : \Fl \to \Fl$ as
\begin{align*}
  \Phi(f)(e) =
  \begin{cases}
    1 &\text{if } e \in \Vals\\
    \displaystyle\sum_{e \stepsto{p}{e'}} p \cdot f\left(e'\right) & \text{ otherwise}
  \end{cases}
\end{align*}
Note that if $e$ is stuck then $\Phi(f)(e) = 0$ since the empty sum is $0$.

The function $\Phi$ is monotone and preserves suprema of $\omega$-chains. The proof is
straightforward and can be found in the Appendix. Thus $\Phi$ has a least
fixed point in $\Fl$ and we denote this fixed point by
$\PR{\cdot}$, i.e., $\PR{e} = \sup_{n\in\omega}\Phi^n(\bot)(e)$. 

To define the stratified observations we need the notion of a path.
Given terms $e$ and $e'$ a path $\pi$ from $e$ to $e'$, written $\redpath{\pi}{e}{e'}$, is
a sequence $e \stepsto{p_1} e_1 \stepsto{p_2} e_2 \stepsto{p_3} \cdots \stepsto{p_n} e'$.
The \emph{weight} $\weight{\pi}$ of a path $\pi$ is the product of the weights of
reductions in $\pi$.  We write $\paths$ for the set of all paths and $\cdot$ for their
concatenation (when defined).  For a non-empty path $\pi \in \paths$ we write $\last{\pi}$
for its last expression.

We call reductions of the form $\UNFOLD{(\FOLD{v})} \stepsto{1} v$ \emph{unfold-fold}
reductions and reductions of the form $\RAND\underline n \stepsto{\frac{1}{n}} \underline
k$ \emph{choice} reductions.  If \emph{none} of
the reductions in a path $\pi$ is a choice reduction we call $\pi$ \emph{choice-free} and
similarly if none of the reductions in $\pi$ is an unfold-fold reductions we call $\pi$
\emph{unfold-fold free}.

We define the following types of multi-step reductions which we use in the definition of
the logical relation.
\begin{itemize}
\item $e \stepstopure e'$ if there is a \emph{choice-free} path from $e$ to $e'$
\item $e \stepstozero e'$ if there is an \emph{unfold-fold} free path from $e$ to $e'$.
\item $e \stepstozeropure e'$ if $e \stepstopure e'$ and $e \stepstozero e'$.
\end{itemize}

The following useful lemma states that all but choice reductions preserve the
probability of termination. As a consequence, we will see that all but choice reductions
preserve equivalence. 
\begin{lemma}
  \label{lem:choice-free-reductions-dont-change}
  Let $e, e' \in \Exps$ and $e\stepstopure e'$. Then $\PR{e} = \PR{e'}$.
\end{lemma}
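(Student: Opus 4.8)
The plan is to reduce the claim to a single-step statement and then induct on the length of the choice-free path. Since $\PR{\cdot}$ is by construction the least fixed point of $\Phi$, it satisfies the fixed-point equation; in particular, for every non-value $e$ we have $\PR{e} = \sum_{e \stepsto{p} e''} p \cdot \PR{e''}$. This identity is essentially the only property of $\PR{\cdot}$ the proof would use.

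First I would establish the single-step lemma: if $e \stepsto{1} e'$ and this reduction is not a choice reduction, then $\PR{e} = \PR{e'}$. (Note that a non-choice reduction necessarily has weight $1$, since the operational semantics assigns weight $1$ to every reduction other than a choice reduction.) Because $e$ admits a reduction it is not a value, so the fixed-point equation applies to it. The crucial observation is that a non-choice reduction from $e$ is the \emph{only} reduction from $e$: inspecting the call-by-value semantics, every non-value closed term decomposes uniquely as $E[r]$ for a redex $r$, the reductions out of $e$ branch exactly when $r$ is of the form $\RAND\underline n$, and in every other case there is a single successor, reached with weight $1$. Hence when $e \stepsto{1} e'$ is non-choice, the sum in the fixed-point equation consists of the single summand $1 \cdot \PR{e'}$, giving $\PR{e} = \PR{e'}$.

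For the general statement, recall that $e \stepstopure e'$ means there is a path $e \stepsto{p_1} e_1 \stepsto{p_2} \cdots \stepsto{p_n} e'$ none of whose steps is a choice reduction; as noted, each $p_i = 1$. I would argue by induction on $n$. The case $n = 0$ is $e = e'$ and is trivial. For the inductive step, write the path as $e \stepsto{1} e_1 \stepstopure e'$, where the tail from $e_1$ is again a choice-free path; the single-step lemma gives $\PR{e} = \PR{e_1}$ and the induction hypothesis gives $\PR{e_1} = \PR{e'}$, whence $\PR{e} = \PR{e'}$.

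The main obstacle is the determinism claim used in the single-step lemma, namely that a term taking a non-choice step has no other reduction. This is where the specific shape of the call-by-value semantics must be invoked: one needs unique redex decomposition together with the fact that $\RAND\underline n$ is the sole source of branching, so that choice-freeness of the step forces the successor to be unique and of weight $1$ and thus collapses the fixed-point sum to a single term. Once this is in place, the remainder is a direct application of the fixed-point equation and a short induction on path length.
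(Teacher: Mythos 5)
Your proposal is correct and takes essentially the same route as the paper: the paper's proof is a one-line sketch---induction on the length of the choice-free reduction path, with the strengthened hypothesis that all terms along the path have the same termination probability---which is exactly your induction packaged slightly differently. Your single-step lemma, obtained from the fixed-point equation for $\PR{\cdot}$ together with unique redex decomposition (so that a non-choice step is the \emph{only} step out of the term and the sum collapses to one summand), is the natural way to discharge the inductive step that the paper leaves implicit.
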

The proof proceeds on the length of the reduction path with the strengthened induction
hypothesis stating that the probabilities of termination of all elements on the path are
the same. To define the stratified probability of termination that approximates
$\PR{\cdot}$ we need an auxiliary notion.
\begin{definition}
  \label{def:red-tree}
  For a closed expression $e\in\Exps$ we define $\reds{e}$ as the (unique) set of paths
  containing \emph{exactly one} unfold-fold or choice reduction and \emph{ending} with such a
  reduction. More precisely, we define the function 
  $\redsname : \Exps \to \pow{\paths}$
  as the least function satisfying
  \begin{align*}
    \reds{e} = \begin{cases}
      \{ e \stepsto{1} e' \} &\text{if } e = E[\UNFOLD(\FOLD v)]\\
      \{ e \stepsto{p} E[\underline k] \isetsep p = \frac{1}{n}, k \in \{1,2,\ldots, n\} \} &
      \text{if } e = E[\RAND \underline{n}]\\
      \left\{ (e\stepsto{1} e')\cdot \pi \isetsep \pi \in \reds{e'}\right\} &
      \text{if } e \stepsto{1} e' \text{ and } e\stepstozeropure e'\\
      \emptyset &\text{otherwise}
    \end{cases}
  \end{align*}
  where we order the power set $\pow{\paths}$ by subset inclusion.
\end{definition}

Using $\reds{\cdot}$ we define a monotone map $\Psi : \Fl \to \Fl$
that preserves $\omega$-chains.
\begin{align*}
  \Psi(f)(e) =
  \begin{cases}
    1 &\text{if } \exists v \in \Vals, e \stepstozeropure v\\
    \displaystyle\sum_{\pi \in \reds{e}} \weight{\pi} \cdot f\left(\last{\pi}\right) &\text{otherwise}
  \end{cases}
\end{align*}
and then define $\PRk{k}{e} = \Psi^k(\bot)(e)$. The intended meaning of $\PRk{k}{e}$ is the
probability that $e$ terminates within $k$ unfold-fold and choice reductions. Since $\Psi$
is monotone we have that $\PRk{k}{e} \leq \PRk{k+1}{e}$ for any $k$ and $e$.

The following lemma is the reason for counting only certain reductions,
cf.\cite{Dreyer-Ahmed-Birkedal:LSLR}. It allows us to
stay at the same step-index even when taking steps in the operational semantics. As a
consequence we will get a more extensional logical relation. The proof is by case analysis
and can be found in the Appendix.
\begin{lemma}
  \label{lem:pure-reductions-dont-change}
  Let $e, e' \in \Exps$. If $e \stepstozeropure e'$ then for all $k$, $\PRk{k}{e} =
  \PRk{k}{e'}$.
\end{lemma}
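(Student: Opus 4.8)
The plan is to reduce the statement to a single-step claim and then lift it through the defining equation $\PRk{k}{e} = \Psi^k(\bot)(e)$. A path witnessing $e \stepstozeropure e'$ is both choice-free and unfold-fold free, so every one of its steps has weight $1$ and is neither a choice nor an unfold-fold reduction; call such a step \emph{pure}. Since single-step equalities of $\PRk{k}{\cdot}$ chain along the path, it suffices by induction on the length of the path to handle a single pure step $e \stepsto{1} e'$. For that case I will prove the stronger fact that $\Psi(f)(e) = \Psi(f)(e')$ for \emph{every} $f \in \Fl$, from which the single-step conclusion follows as described below.

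To prove $\Psi(f)(e) = \Psi(f)(e')$ I case split on whether $e$ reaches a value along a pure path. Here I use that pure reduction is deterministic: the redex decomposition $e = E[r]$ is unique and a pure redex is neither an unfold-fold nor a $\RAND$ redex, so $e'$ is the unique pure successor of $e$. If there is $v \in \Vals$ with $e \stepstozeropure v$, then the first step of that path must be $e \stepsto{1} e'$, whence $e' \stepstozeropure v$ as well; the first clause of $\Psi$ then applies to both terms and $\Psi(f)(e) = 1 = \Psi(f)(e')$. Otherwise neither $e$ nor $e'$ reaches a value purely, so both fall under the summation clause. Because $e$ takes a pure step it is of neither of the forms $E[\UNFOLD(\FOLD v)]$ nor $E[\RAND \underline{n}]$, so the third clause of Definition~\ref{def:red-tree} gives $\reds{e} = \{(e \stepsto{1} e')\cdot\pi \isetsep \pi \in \reds{e'}\}$. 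The map $\pi \mapsto (e \stepsto{1} e')\cdot\pi$ is a bijection from $\reds{e'}$ onto $\reds{e}$, and prepending a weight-$1$ step changes neither $\weight{\pi}$ nor $\last{\pi}$, so the sums defining $\Psi(f)(e)$ and $\Psi(f)(e')$ coincide term by term.

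Granting $\Psi(f)(e) = \Psi(f)(e')$ for all $f$, the single-step conclusion $\PRk{k}{e} = \PRk{k}{e'}$ is immediate: for $k = 0$ both sides equal $\bot(e) = 0 = \bot(e')$, and for $k \geq 1$ I instantiate the equation at $f = \Psi^{k-1}(\bot)$ to obtain $\Psi^k(\bot)(e) = \Psi(\Psi^{k-1}(\bot))(e) = \Psi(\Psi^{k-1}(\bot))(e') = \Psi^k(\bot)(e')$. I expect the main obstacle to be the bookkeeping in the summation case: one must verify carefully that a single pure step lands $e$ in exactly the third clause of $\reds{\cdot}$ (this is precisely where determinism and uniqueness of the redex decomposition enter) and that prepending the weight-$1$ step yields an honest weight- and last-expression-preserving bijection between $\reds{e'}$ and $\reds{e}$, rather than a mere inclusion. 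The value-reaching case is comparatively routine once determinism is established.
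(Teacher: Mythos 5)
Your proof is correct and matches the paper's own argument in all essentials: the same case split on whether a value is reached by a pure (cuff) path, and the same weight- and last-expression-preserving bijection between $\reds{e}$ and $\reds{e'}$ given by prepending the reduction from $e$. Your two refinements---factoring the path into single pure steps and strengthening the step case to $\Psi(f)(e)=\Psi(f)(e')$ for arbitrary $f\in\Fl$---merely make explicit the induction along the path that the paper performs implicitly when it prepends the whole prefix $\pi'$ at once.
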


The following is immediate from the definition of the chain
$\left\{\PRk{k}{e}\right\}_{k=0}^{\infty}$ and the fact that $\RAND\underline n$ reduces
with uniform probability.
\begin{lemma}
  \label{lem:prob-unfold-fold-choice}
  Let $e$ be a closed term. If $e \stepsto{1} e'$ and the reduction is an unfold-fold reduction then
    $\PRk{k+1}{e} = \PRk{k}{e'}$. If the reduction from $e$ is a choice reduction, then
    $\PRk{k+1}{e} = \frac{1}{|\reds{e}|}\sum_{\pi\in\reds{e}}\PRk{k}{\last{\pi}}$.
\end{lemma}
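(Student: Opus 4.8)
The plan is to unfold the single application $\Psi^{k+1}(\bot) = \Psi(\Psi^k(\bot))$ and to compute $\reds{e}$ directly in each of the two cases, after first checking that the value clause of $\Psi$ does not fire. In both cases $e$ is not itself a value, and since the call-by-value strategy fixes a unique redex, the only reduction available from $e$ is the stated unfold-fold (resp.\ choice) reduction. Consequently every nonempty reduction path out of $e$ begins with a reduction that is by definition not unfold-fold free (resp.\ not choice-free), so there is no $v \in \Vals$ with $e \stepstozeropure v$. Hence $\Psi(\Psi^k(\bot))(e)$ is computed by its summation clause $\sum_{\pi \in \reds{e}} \weight{\pi}\cdot \PRk{k}{\last{\pi}}$, and the two halves of the lemma become direct calculations.

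For the unfold-fold case I would note that $e = E[\UNFOLD(\FOLD v)]$ and $e' = E[v]$, so the first clause of Definition~\ref{def:red-tree} gives $\reds{e} = \{e \stepsto{1} e'\}$: a single path of weight $1$ whose last expression is $e'$. Substituting into the summation clause yields $\PRk{k+1}{e} = 1\cdot\PRk{k}{e'} = \PRk{k}{e'}$. For the choice case, $e = E[\RAND\underline n]$, and the second clause of Definition~\ref{def:red-tree} makes $\reds{e}$ the collection of $n$ single-step paths $e \stepsto{1/n} E[\underline j]$ for $j \in \{1,\ldots,n\}$, each of weight $\frac{1}{n}$, so that $|\reds{e}| = n$. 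Since $\RAND$ reduces uniformly, every weight equals $\frac{1}{|\reds{e}|}$, and the summation clause gives $\PRk{k+1}{e} = \frac{1}{|\reds{e}|}\sum_{\pi\in\reds{e}}\PRk{k}{\last{\pi}}$, as claimed.

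The hard part will be nothing computational but rather the side condition that the value clause of $\Psi$ does not apply; this is the only place where an argument is genuinely needed, and it rests on determinacy of the reduction relation together with the observation that the disqualifying unfold-fold (resp.\ choice) reduction occurs already at the first step, ruling out any choice-free and unfold-fold-free path from $e$ to a value.
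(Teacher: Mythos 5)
Your proposal is correct and matches the paper's treatment: the paper declares this lemma ``immediate from the definition of the chain $\PRk{k}{e}$ and the fact that $\RAND\underline{n}$ reduces with uniform probability,'' and your argument is precisely that immediacy spelled out --- unfold one application of $\Psi$, verify via unique decomposition that the value clause cannot fire (since the first reduction already disqualifies any path from being $\stepstozeropure$), and read off $\reds{e}$ from the first two clauses of Definition~\ref{def:red-tree}.
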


The following proposition is needed to prove adequacy of the logical relation with respect
to contextual equivalence. It is analogous to the property used to prove adequacy of
step-indexed logical relations for deterministic and nondeterministic languages. Consider
the case of may-equivalence. To prove adequacy in this case (cf. \cite[Theorem
4.8]{Birkedal-et-al:countable-nondet}) we use the fact
that if $e$ may-terminates, then there is a natural number $n$ such
that $e$ terminates in $n$ steps. This property does not hold 
in the probabilistic case, but the property analogous to it that is sufficient to prove
adequacy still holds.
\begin{proposition}
  \label{prop:prob-smaller}
  For each $e \in \Exps$ we have $\PR{e} \leq \sup_{k\in\omega}\left(\PRk{k}{e}\right)$.
\end{proposition}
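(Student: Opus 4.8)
The plan is to exhibit the pointwise supremum of the stratified observations as a \emph{pre-fixed point} of $\Phi$ and then invoke the minimality of $\PR{\cdot}$. Concretely, I would define $g \in \Fl$ by $g(e) = \sup_{k\in\omega}\PRk{k}{e} = \sup_k \Psi^k(\bot)(e)$; this is well-defined in $\interval$ because $\PRk{k}{e} \le \PRk{k+1}{e}$ is an increasing chain. The desired inequality $\PR{e} \le \sup_k\PRk{k}{e}$ is exactly $\PR{\cdot} \le g$, so it suffices to prove $\Phi(g) \le g$. Indeed, since $\Phi$ is monotone and $\PR{\cdot} = \sup_n \Phi^n(\bot)$ is its least fixed point, a routine induction ($\Phi^0(\bot)=\bot\le g$, and $\Phi^{n+1}(\bot)=\Phi(\Phi^n(\bot)) \le \Phi(g)\le g$ by monotonicity) would then give $\PR{\cdot}=\sup_n\Phi^n(\bot) \le g$.

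Next I would verify $\Phi(g)(e) \le g(e)$ by cases on the closed term $e$, using that a non-value closed term is either stuck, or is of the form $E[\RAND\,\underline n]$, or has a unique unfold-fold or ordinary (``boring'', i.e. choice-free and unfold-fold-free) redex. If $e\in\Vals$ then $\Phi(g)(e)=1$, and also $\Psi(\bot)(e)=1$ via the empty path $e \stepstozeropure e$, so $g(e)=1$. If $e$ is stuck then $\Phi(g)(e)=0\le g(e)$. If $e\stepsto{1}e'$ is a boring step, Lemma~\ref{lem:pure-reductions-dont-change} gives $\PRk{k}{e}=\PRk{k}{e'}$ for all $k$, hence $g(e)=g(e')$, while determinism of this step gives $\Phi(g)(e)=g(e')$. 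If the step is unfold-fold, Lemma~\ref{lem:prob-unfold-fold-choice} gives $\PRk{k+1}{e}=\PRk{k}{e'}$, so on taking suprema $g(e)=g(e')=\Phi(g)(e)$, the index shift being harmless since $\PRk{0}{e}=0$. In each of these cases one in fact gets equality $\Phi(g)(e)=g(e)$.

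The only case carrying real content is $e = E[\RAND\,\underline n]$, and this is where I expect the main (though still modest) obstacle to lie. Here $\Phi(g)(e)=\sum_{k=1}^n \frac1n\, g(E[\underline k])$, whereas Lemma~\ref{lem:prob-unfold-fold-choice} yields $\PRk{m+1}{e}=\frac1n\sum_{k=1}^n \PRk{m}{E[\underline k]}$, so $g(e)=\sup_m \frac1n\sum_{k=1}^n\PRk{m}{E[\underline k]}$. The crux is to commute this supremum past the finite sum: since each $\PRk{m}{E[\underline k]}$ is nondecreasing in $m$ and there are only $n$ summands, $\sup_m$ distributes over the sum, giving $g(e)=\frac1n\sum_{k=1}^n g(E[\underline k])=\Phi(g)(e)$. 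This finite interchange is the one analytic step and the place I would be most careful (contrast an infinitely-branching primitive, where genuine $\omega$-continuity of the sum would be needed). Assembling the cases yields $\Phi(g)\le g$ — indeed equality off the stuck terms — and minimality of $\PR{\cdot}$ then finishes the proof.
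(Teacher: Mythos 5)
Your proposal is correct and is essentially the paper's own proof: the paper performs Scott induction on the set $\Sl = \left\{ f \in \Fl \mid \forall e,\ f(e) \leq \sup_{k\in\omega}\PRk{k}{e} \right\}$, which is precisely the down-set of your $g$, and its inductive step is exactly your case analysis (values, deterministic non-counted steps via Lemma~\ref{lem:pure-reductions-dont-change}, unfold-fold and choice steps via Lemma~\ref{lem:prob-unfold-fold-choice}, with the same finite sup--sum interchange in the choice case). Recasting this as exhibiting $g$ as a pre-fixed point of $\Phi$ and invoking minimality of the least fixed point is only a cosmetic repackaging of that Scott induction, since monotonicity gives $\Phi(f) \leq \Phi(g) \leq g$ for every $f \leq g$.
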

\begin{proof}
  We only give a sketch; the full proof can be found in the Appendix.
  We use Scott induction on the set
  $\Sl = {\left\{ f \in \Fl \isetsep \forall e, f(e) \leq \sup_{k\in\omega}\left(\PRk{k}{e}\right) \right\}}$.
  It is easy to see that $\Sl$ is closed under limits of $\omega$-chains and that $\bot \in \Sl$
  so we only need to show that $\Sl$ is closed under $\Phi$. We can do this by considering
  the kinds of reductions from $e$ when considering $\Phi(f)(e)$ for $f \in \Sl$.
\end{proof}

\section{Logical, CIU and contextual approximation relations}
\label{sec:approx-relations}

The contextual and CIU (\textbf{c}losed \textbf{i}nstantiations of
\textbf{u}ses~\cite{Pitts:05}) approximations are defined in a way analogous to the one
for deterministic programming languages. We require some auxiliary notions.  A
\emph{type-indexed relation} $\mathrel{\R}$ is a set of tuples $(\Delta,\Gamma,e,e',\tau)$
such that $\Delta \vdash \Gamma$ and $\Delta \vdash \tau$ and
$\hastype{\Delta}{\Gamma}{e}{\tau}$ and $\hastype{\Delta}{\Gamma}{e'}{\tau}$.  We write
$\Delta\hmid\Gamma\vdash e\mathrel{\R} e':\tau$ for $(\Delta,\Gamma,e,e',\tau)\in\R$.

\begin{definition}
  [Precongruence]
  \label{def:tir-congruence}
A type-indexed relation \R is \emph{reflexive} if 
$\Delta\hmid\Gamma\vdash e:\tau$ implies $\Delta\hmid\Gamma\vdash e\mathrel\R e:\tau$. 
It is \emph{transitive} if 
$\Delta\hmid\Gamma\vdash e\mathrel\R e':\tau$ and $\Delta\hmid\Gamma\vdash e'\mathrel\R e'':\tau$ implies $\Delta\hmid\Gamma\vdash e\mathrel\R e'':\tau$. 
It is \emph{compatible} if it is closed under the term forming rules, e.g.,\footnote{We only show a
  few rules, the rest are analogous and can be found in the Appendix.}
\begin{align*}
  \inferrule{\Delta\hmid\Gamma,x{:}\tau_1\vdash e\mathrel\R e':\tau_2
  }{\Delta\hmid\Gamma\vdash\FUN xe\mathrel\R\FUN x{e'}:\tau_1\to\tau_2} \qquad
  \inferrule{\Delta\hmid\Gamma\vdash e \mathrel\R e': \NAT }{\Delta\hmid\Gamma\vdash \RAND
    e\mathrel\R\RAND e' :\NAT}
\end{align*}
A \emph{precongruence} is a reflexive, transitive and compatible type-indexed relation.
\end{definition}

The compatibility rules guarantee that a compatible relation is sufficiently big, i.e., at
least reflexive. In contrast, the notion of adequacy, which relates the operational
semantics with the relation, guarantees that it is not too big. In the deterministic case,
a relation $\Rl$ is adequate if when $e \mathrel\Rl e'$ are two related closed terms, then
if $e$ terminates so does $e'$. Here we need to compare probabilities of termination
instead, since these are our observations.
\begin{definition}
  \label{def:adequacy}
  A type-indexed relation $\R$ is \emph{adequate} if for all $e, e'$ such that
  $\emp\hmid\emp \vdash e \mathrel\R e' : \tau$ we have
  $\PR{e} \leq \PR{e'}$.
\end{definition}
The \emph{contextual approximation relation}, written $\ctxapprox{\Delta}{\Gamma}{e}{e'}{\tau}$,
is defined as the \emph{largest adequate precongruence} and the \emph{CIU approximation
  relation}, written $\ciuapprox{\Delta}{\Gamma}{e}{e'}{\tau}$, is 
defined using evaluation contexts in the usual way, e.g.~\cite{Pitts:05}, using $\PR{\cdot}$ for observations.
The fact that the largest adequate precongruence exists is proved as in~\cite{Pitts:05}.

\subsubsection{Logical relation}
\label{sec:logical-relation}

We now define the step-indexed logical relation. We present the construction in the
elementary way with explicit indexing instead of using a logic with guarded recursion as
in~\cite{Dreyer-Ahmed-Birkedal:LSLR} to remain self-contained.

Interpretations of types will be defined as decreasing sequences of relations on
\emph{typeable} values. For \emph{closed types} $\tau$ and $\sigma$ we define the sets
$\VRel{\tau}{\sigma}$, $\SRel{\tau}{\sigma}$ and $\ERel{\tau}{\sigma}$ to be the sets of
decreasing sequences of relations on typeable values, evaluation contexts and expressions
respectively. The types $\tau$ and $\sigma$ denote the types of the left-hand side and the
right-hand side respectively, i.e. if $(v, u) \in \phi(n)$ for $\phi \in
\VRel{\tau}{\sigma}$ then $v$ has type $\tau$ and $u$ has type $\sigma$.
The order relation $\leq$ on these sets is defined pointwise, e.g. for
$\phi, \psi \in \VRel{\tau}{\sigma}$ we write $\phi \leq \psi$ if $\forall
n \in \NN, \phi(n) \subseteq \psi(n)$. We implicitly use the inclusion from
$\VRel{\tau}{\sigma}$ to $\ERel{\tau}{\sigma}$. The reason for having relations on values
and terms of different types on the left and right-hand sides is so we are able to prove
parametricity properties in Section~\ref{sec:examples}.

We define maps $\T{\cdot}_{\tau,\sigma} : \VRel{\tau}{\sigma} \to \SRel{\tau}{\sigma}$ and
$\Te{\cdot}_{\tau,\sigma} : \SRel{\tau}{\sigma} \to \ERel{\tau}{\sigma}$. We usually omit
the type indices when they can be inferred from the context. The maps are defined
as follows
\begin{align*}
  \T{r}_{\tau,\sigma}(n) &= \left\{(E, E') \isetsep
    \forall k \leq n, \forall (v, v') \in r(k), \PRk{k}{E[v]} \leq \PR{E'[v']}\right\}
\end{align*}
and $\Te{r}_{\tau,\sigma}(n) = \left\{(e, e') \isetsep
  \forall k \leq n, \forall (E, E') \in r(k), \PRk{k}{E[e]} \leq \PR{E'[e']}\right\}.$
Note that we only count steps evaluating the left term in defining $\T{r}$ and $\Te{r}$.
We write $\TT{r} = \Te{\T{r}}$ for their composition from $\VRel{\tau}{\sigma}$ to
$\ERel{\tau}{\sigma}$.
The function $\T{\cdot}$ is order-reversing and $\TT{\cdot}$ is
order-preserving and inflationary.
\begin{lemma}
  \label{lem:tt-closure}
  Let $\tau, \sigma$ be closed types and $r, s \in \VRel{\tau}{\sigma}$. Then $r \leq
  \TT{r}$ and if $r \leq s$ then $\T{s} \leq \T{r}$ and $\TT{r} \leq \TT{s}$.
\end{lemma}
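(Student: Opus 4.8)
The plan is to handle the three claims in the order that exposes their logical dependence: first the order-reversal of the single closure $\T{\cdot}$, then derive order-preservation of $\TT{\cdot}$ essentially for free, and finally the inflationary property $r \leq \TT{r}$, which is the only part carrying real content.

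First I would show that $\T{\cdot}$ (and symmetrically $\Te{\cdot}$) is order-reversing. Suppose $r \leq s$, so $r(k) \subseteq s(k)$ for every $k$. Fix $n$ and take any $(E,E') \in \T{s}(n)$; by definition this says $\PRk{k}{E[v]} \leq \PR{E'[v']}$ for all $k \leq n$ and all $(v,v') \in s(k)$. Since $r(k) \subseteq s(k)$, the same inequality holds a fortiori for every $(v,v') \in r(k)$, so $(E,E') \in \T{r}(n)$; as $n$ was arbitrary, $\T{s} \leq \T{r}$. The argument for $\Te{\cdot}$ is identical with the roles of values and evaluation contexts interchanged. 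Order-preservation of $\TT{\cdot}$ is then immediate: if $r \leq s$ then $\T{s} \leq \T{r}$, and applying the order-reversing $\Te{\cdot}$ once more gives $\Te{\T{r}} \leq \Te{\T{s}}$, i.e.\ $\TT{r} \leq \TT{s}$.

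The substantive part is $r \leq \TT{r}$. Fix $n$ and a pair $(v,v') \in r(n)$, viewed as expressions via the implicit inclusion $\VRel{\tau}{\sigma} \hookrightarrow \ERel{\tau}{\sigma}$; I must show $(v,v') \in \Te{\T{r}}(n)$. Unfolding, this requires that for every $k \leq n$ and every $(E,E') \in \T{r}(k)$ we have $\PRk{k}{E[v]} \leq \PR{E'[v']}$. The key move is to feed the pair $(v,v')$ itself into the definition of membership in $\T{r}(k)$: that definition quantifies over indices $j \leq k$ and pairs in $r(j)$, and taking $j = k$ it suffices to know $(v,v') \in r(k)$.

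This is exactly where the \emph{decreasing} nature of the sequences is used: since $k \leq n$ and $r$ is a decreasing sequence of relations, $r(n) \subseteq r(k)$, so from $(v,v') \in r(n)$ we obtain $(v,v') \in r(k)$. The definition of $(E,E') \in \T{r}(k)$ applied to $j = k$ and this pair then yields precisely $\PRk{k}{E[v]} \leq \PR{E'[v']}$, completing the argument. I expect the main (and essentially only) obstacle to be getting the index bookkeeping right — in particular, remembering that moving from the outer index $n$ down to the inner index $k \leq n$ is legitimate precisely because the value relations are antitone in the step index; no quantitative reasoning about the termination probabilities $\PRk{k}{\cdot}$ or $\PR{\cdot}$ is needed beyond treating them as fixed numbers satisfying the stated inequalities.
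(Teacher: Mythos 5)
Your proof is correct: the paper states this lemma without proof (treating it as routine), and your argument is exactly the standard one it implicitly relies on — order-reversal of $\T{\cdot}$ and $\Te{\cdot}$ by shrinking the quantification domain, composition giving monotonicity of $\TT{\cdot}$, and the inflationary property by feeding $(v,v')$ back into the definition of $\T{r}(k)$ at index $j=k$. You also correctly pinpoint the one non-trivial ingredient, namely that passing from $r(n)$ to $r(k)$ for $k \leq n$ is licensed by the requirement that elements of $\VRel{\tau}{\sigma}$ are \emph{decreasing} sequences of relations.
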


For a type-variable context $\Delta$ we define $\VRelD{\Delta}$ using
$\VRel{\cdot}{\cdot}$ as
\begin{align*}
  \VRelD{\Delta} = \left\{(\phi_1, \phi_2, \phi_r) \isetsep \phi_1, \phi_2 \in \Type^\Delta,
  \forall \alpha \in \Delta, \phi_r(\alpha) \in \VRel{\phi_1(\alpha)}{\phi_2(\alpha)}
                          \right\}
\end{align*}
where the first two components give syntactic types for the left and right hand sides of
the relation and the third component is a relation between those types.

The interpretation of types, $\den{\cdot \vdash \cdot}$ is by induction on the judgement
$\Delta \vdash \tau$. For a judgment $\Delta \vdash \tau$ and $\phi \in \VRelD{\Delta}$
we have $\den{\Delta \vdash \tau}(\phi) \in \VRel{\phi_1(\tau)}{\phi_2(\tau)}$ where the
$\phi_1$ and $\phi_2$ are the first two components of $\phi$ and $\phi_1(\tau)$ denotes
substitution. Moreover
$\den{\cdot}$ is \emph{non-expansive} in the sense that $\den{\Delta\vdash\tau}(\phi)(n)$
can depend only on the values of $\phi_r(\alpha)(k)$ for $k \leq
n$, see \cite{Birkedal:Reus:Schwinghammer:Stovring:Thamsborg:Yang:11} for this metric view
of step-indexing. The interpretation of
types is defined in Fig.~\ref{fig:logical-relation}. Observe that the value relations are
as simple as for a language without probabilistic choice. The crucial difference is hidden
in the $\top\top$-closure of value relations.

\begin{figure}[htb]
  \vspace{-0.6cm}
  \centering
  \begin{align*}
    \den{\Delta \vdash \NAT}(\phi)(n) &= \left\{(\underline k, \underline k) \isetsep k
                                        \in \NN, k > 0\right\} \\
    \den{\Delta \vdash \tau \to \sigma}(\phi)(n) &= 
      \begin{array}[t]{l}
        \{ \left(\FUN{x}{e},
          \FUN{y}{e'}\right) \isetsep \forall j \leq n, 
          \forall (v, v') \in \den{\Delta \vdash \tau}(\phi)(j), \\
          \quad\qquad\qquad\qquad ((\FUN{x}{e})\,v, (\FUN{y}{e'})\,v') \in
        \TT{\den{\Delta\vdash \sigma}(\phi)}(j)
        \}
     \end{array} 
    \\
    \den{\Delta \vdash \ALL{\alpha}{\tau}}(\phi)(n) &=
    \begin{array}[t]{l}
    \{ \left(\TFUN{e},
    \TFUN{e'}\right) \isetsep \forall \sigma, \sigma' \in \Type,
     \forall r \in \VRel{\sigma}{\sigma'}, \\
     \qquad
     (e, e') \in \TT{\den{\Delta,\alpha\vdash
         \tau}\left(\extend{\phi}{\alpha}{r}\right)}(n)
     \}
     \end{array}
     \\
     \den{\Delta \vdash \EX{\alpha}{\tau}}(\phi)(n) &=
     \begin{array}[t]{l}
       \{\left(\PACK{v},
    \PACK{v'}\right) \isetsep \exists \sigma, \sigma' \in \Type,
     \exists r \in \VRel{\sigma}{\sigma'}, \\
     \quad\qquad\qquad\qquad\qquad
     (v, v') \in \den{\Delta,\alpha\vdash
       \tau}\left(\extend{\phi}{\alpha}{r}\right)(n) \}
   \end{array}\\
    \den{\Delta \vdash \REC{\alpha}{\tau}}(\phi)(0) &=
         \Val{\phi_1(\REC{\alpha}{\tau})}\times \Val{\phi_2(\REC{\alpha}{\tau})}\\
    \den{\Delta \vdash \REC{\alpha}{\tau}}(\phi)(n+1) &= 
    \begin{array}[t]{l}
      \{\left(\FOLD{v},
    \FOLD{v'}\right) \isetsep \\
    \qquad\qquad  (v, v') \in \den{\Delta,\alpha\vdash
         \tau}\left(\extend{\phi}{\alpha}{\den{\Delta\vdash
             \REC{\alpha}{\tau}}(\phi)}\right)(n) \}
    \end{array}
 \end{align*}
 \vspace{-0.7cm}
  \caption{Interpretation of types. The cases for sum and product types are in Appendix.}
  \label{fig:logical-relation}
  \vspace{-0.82cm}
\end{figure}

\paragraph{Context extension lemmas}
To prove soundness and completeness we need lemmas stating how extending evaluation
contexts preserves relatedness. We only show the case for $\RAND$. The rest are similarly
simple.
\begin{lemma}
  \label{lem:ctx-extend-rand}
  Let $n\in\NN$.
  If $(E, E') \in \T{\den{\Delta \vdash \NAT}(\phi)}(n)$ are related evaluation contexts then 
  $\left(\ecomp{E}{(\RAND{[]})}, \ecomp{E'}{(\RAND{[]})}\right) \in 
  \T{\den{\Delta \vdash \NAT}(\phi)}(n)$.
\end{lemma}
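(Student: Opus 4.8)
The plan is to unfold the two definitions involved and reduce the goal to a single inequality between termination probabilities, then exploit that plugging a numeral into $\RAND{[]}$ produces a choice redex in evaluation position. First I would recall that $\den{\Delta \vdash \NAT}(\phi)(k)$ is, for every $k$, just the diagonal relation $\{(\underline{m}, \underline{m}) \isetsep m > 0\}$, so the hypothesis $(E, E') \in \T{\den{\Delta \vdash \NAT}(\phi)}(n)$ unfolds to the statement that for all $k \leq n$ and all $m > 0$ we have $\PRk{k}{E[\underline{m}]} \leq \PR{E'[\underline{m}]}$. Likewise, using the identity $(\ecomp{E}{(\RAND{[]})})[\underline{m}] = E[\RAND \underline{m}]$, the goal unfolds to: for all $k \leq n$ and all $m > 0$, $\PRk{k}{E[\RAND \underline{m}]} \leq \PR{E'[\RAND \underline{m}]}$.

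For the case $k = 0$ there is nothing to prove, since $\PRk{0}{\cdot}$ is the constant $0$ function. For $k = k'+1 \leq n$ I would use the key observation that $E[\RAND \underline{m}]$ has a choice reduction as its first step, namely $E[\RAND \underline{m}] \stepsto{\frac{1}{m}} E[\underline{j}]$ for $j \in \{1, \ldots, m\}$, and that $\reds{E[\RAND \underline{m}]}$ is exactly the set of these $m$ one-step paths. Lemma~\ref{lem:prob-unfold-fold-choice} (the choice-reduction case) then gives $\PRk{k'+1}{E[\RAND \underline{m}]} = \frac{1}{m}\sum_{j=1}^{m} \PRk{k'}{E[\underline{j}]}$.

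Next I would feed each summand into the hypothesis: since each $j \in \{1, \ldots, m\}$ is positive we have $(\underline{j}, \underline{j}) \in \den{\Delta \vdash \NAT}(\phi)(k')$, and $k' \leq n$, so the hypothesis yields $\PRk{k'}{E[\underline{j}]} \leq \PR{E'[\underline{j}]}$. Summing gives $\PRk{k'+1}{E[\RAND \underline{m}]} \leq \frac{1}{m}\sum_{j=1}^{m}\PR{E'[\underline{j}]}$. Finally I would unfold the right-hand side using that $\PR{\cdot}$ is a fixed point of $\Phi$: because $E'[\RAND \underline{m}]$ is not a value and its only reductions are the $m$ choice steps $E'[\RAND \underline{m}] \stepsto{\frac{1}{m}} E'[\underline{j}]$, we obtain $\PR{E'[\RAND \underline{m}]} = \frac{1}{m}\sum_{j=1}^{m} \PR{E'[\underline{j}]}$, and the two sides match exactly, closing the inequality.

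The only real point of care --- and the reason the stratified observation is set up as it is --- is the asymmetry between $\PRk{\cdot}$ on the left and the exact $\PR{\cdot}$ on the right: the left index must drop from $k'+1$ to $k'$ precisely because a choice reduction is counted, whereas unfolding $\PR{\cdot}$ on the right via $\Phi$ consumes the same reduction step with no index to decrease. Since the value relation at $\NAT$ is constant in the index, applying the hypothesis at the lowered index $k'$ is unproblematic, and the matching $\frac{1}{m}$ weights on both sides do the rest.
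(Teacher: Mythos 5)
Your proposal is correct and follows essentially the same route as the paper's proof: both reduce to the case $k = \ell+1$, apply Lemma~\ref{lem:prob-unfold-fold-choice} to get $\PRk{k}{E[\RAND\,\underline{m}]} = \frac{1}{m}\sum_{i=1}^{m}\PRk{\ell}{E[\underline{i}]}$, invoke the hypothesis at the lowered index (unproblematic since the value relation at $\NAT$ is constant in the index), and rewrite $\PR{E'[\RAND\,\underline{m}]}$ via the fixed-point equation for $\PR{\cdot}$. The only difference is presentational: you spell out the unfolding of $\reds{\cdot}$ and the $\Phi$-fixed-point step that the paper leaves implicit.
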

\begin{proof}
  Let $n\in\NN$ and $(v, v') \in\den{\Delta\vdash\tau}(\phi)(n)$. By construction we have
  $v = v' = \underline m$ for some $m \in \NN$, $m \geq 1$. Let $k \leq n$. If $k=0$
  the result is immediate, so assume $k = \ell + 1$. Using
  Lemma~\ref{lem:prob-unfold-fold-choice} we have
  $\PRk{k}{E[\RAND{\underline{m}}]} = \frac{1}{m}\sum_{i=1}^{m}
  \PRk{\ell}{E[\underline{i}]}$
  and using the assumption $(E, E') \in \T{\den{\Delta \vdash \NAT}(\phi)}(n)$,
  the fact that $k\leq n$ and monotonicity in the step-index the latter term is 
  less than $\frac{1}{m}\sum_{i=1}^m\PR{E'[\underline{i}]}$ which by definition of $\PR{\cdot}$
  is equal to $\PR{E'[\RAND{\underline{m}}]}$.
\end{proof}

We define the logical approximation relation for open terms given the
interpretations of types in Fig.~\ref{fig:logical-relation}. We define
$\logapprox{\Delta}{\Gamma}{e}{e'}{\tau}$ to mean
\begin{align*}
  \forall n \in \NN, \forall \phi \in \VRelD{\Delta},\forall
  (\gamma,\gamma')\in\den{\Delta\vdash\Gamma}(\phi)(n), (e\gamma, e'\gamma) \in\TT{\den{\Delta\vdash \tau}{\phi}}(n)
\end{align*}
Here $\den{\Delta\vdash\Gamma}$ is the obvious extension of interpretation of types to
interpretation of contexts which relates substitutions, mapping variables to values.
We have
\begin{proposition}[Fundamental property]
  \label{prop:logrel-is-compatible}
  The logical approximation relation $\logapproxrel$ is compatible. In particular it is reflexive.
\end{proposition}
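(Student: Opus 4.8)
The plan is to prove compatibility by verifying each term-forming rule in turn; reflexivity then falls out as a corollary, by a routine induction on the derivation of $\hastype{\Delta}{\Gamma}{e}{\tau}$, since every rule used in that derivation is matched by the corresponding compatibility rule applied to identical subterms. Throughout I fix a step index $n$, an interpretation $\phi \in \VRelD{\Delta}$, and a pair of related substitutions $(\gamma,\gamma') \in \den{\Delta\vdash\Gamma}(\phi)(n)$, so that each obligation unfolds to membership of the substituted terms in $\TT{\den{\Delta\vdash\tau}(\phi)}(n)$.

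I would split the constructors into two groups. For the introduction forms (numerals, $\unitval$, $\lambda$-abstraction, pairing, injections, type abstraction, $\PACK{\cdot}$, and $\FOLD$), the substituted term is already a value, so it suffices to establish membership in the bare value relation $\den{\Delta\vdash\tau}(\phi)(n)$ of Fig.~\ref{fig:logical-relation} and then appeal to $r \le \TT{r}$ from Lemma~\ref{lem:tt-closure}. Each of these reduces, after unfolding the relevant clause, to the induction hypotheses on the immediate subterms, with two points of care: the $\FOLD$ case must drop one step index (to match the $n{+}1$ clause of the recursive-type interpretation), and the type-abstraction and $\PACK{\cdot}$ cases require a compositionality lemma identifying $\den{\Delta,\alpha\vdash\tau}(\extend{\phi}{\alpha}{\den{\Delta\vdash\sigma}(\phi)})$ with $\den{\Delta\vdash\tau\subst{\alpha}{\sigma}}(\phi)$, which I would prove in advance by induction on $\tau$ together with non-expansiveness of $\den{\cdot}$.

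The elimination forms (application, projection, $\syn{match}$, $\syn{unpack}$, $\UNFOLD$, $\syn{if}_1$, $\PRED$, $\SUCC$, and $\RAND$) are where the biorthogonal structure does the work, and I would treat them uniformly through the context extension lemmas, of which Lemma~\ref{lem:ctx-extend-rand} is the representative for $\RAND$. The pattern is: given a related stack pair $(E,E') \in \T{\den{\Delta\vdash\tau}(\phi)}(n)$, I compose it with the one-hole frame of the eliminator (for instance $\ecomp{E}{([]\,e_2\gamma)}$ in the application case) and invoke the appropriate context extension lemma to conclude that the composite stacks are again related, now at the type of the subterm being evaluated; feeding this into the induction hypothesis for that subterm discharges the goal. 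The reductions introduced in this process are $\beta$- or unfold-fold steps, so Lemmas~\ref{lem:pure-reductions-dont-change} and~\ref{lem:prob-unfold-fold-choice} let me realign the left-hand step index $\PRk{k}{\cdot}$, while Lemma~\ref{lem:choice-free-reductions-dont-change} holds the right-hand probability $\PR{\cdot}$ fixed.

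I expect the main obstacle to be the step-index bookkeeping around recursive types and the $\top\top$-closure. Concretely, the eliminators that genuinely consume a step — application of a recursive-type value through $\UNFOLD$, and $\RAND$ itself — must be shown to preserve membership in the $\top\top$-closure despite the index shift, and this is precisely where the asymmetry of $\T{\cdot}$, measuring the stratified $\PRk{k}{\cdot}$ on the left against the exact $\PR{\cdot}$ on the right, has to be used to advantage rather than fought. Establishing the per-frame context extension lemmas and the type-substitution compositionality lemma is therefore the real content; once these are available, each individual constructor check is short.
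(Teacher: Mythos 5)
Your overall strategy is exactly the paper's: fix $n$, $\phi$, and related substitutions, verify each compatibility rule via per-frame context extension lemmas (the paper shows only the $\RAND$ case, Lemma~\ref{lem:ctx-extend-rand}, and defers the remaining frame lemmas and the type-substitution lemma, Lemma~\ref{app:lem:interpretation-subst}, to its appendix), and obtain reflexivity as a corollary. However, your case split contains a concrete error. The compatibility rules for pairing, injections, $\PACK{\cdot}$ and $\FOLD$ (Fig.~\ref{app:fig:compatibility}) take arbitrary \emph{expressions} as premises: from $\Delta\hmid\Gamma\vdash e_1\mathrel\R e_1':\tau_1$ and $\Delta\hmid\Gamma\vdash e_2\mathrel\R e_2':\tau_2$ one must conclude $\Delta\hmid\Gamma\vdash\PAIR{e_1}{e_2}\mathrel\R\PAIR{e_1'}{e_2'}:\tau_1\times\tau_2$. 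After substitution, $\PAIR{e_1\gamma}{e_2\gamma}$ is \emph{not} a value unless both components are --- the grammar of Fig.~\ref{app:fig:syntax} makes $\PAIR{E}{e}$, $\PAIR{v}{E}$, $\INL\,E$, $\PACK{E}$ and $\FOLD{E}$ evaluation frames precisely because these positions are evaluated --- and the value relation of Fig.~\ref{fig:logical-relation} contains only pairs of \emph{values}, so your step ``establish membership in the bare value relation and appeal to $r\leq\TT{r}$'' is unavailable. As written, these cases of compatibility are unproven, and with them reflexivity (which needs, e.g., the pair rule for non-value components). The genuine value cases are only numerals, $\unitval$, $\lambda$-abstraction and type abstraction, since $\FUN{x}{e\gamma}$ and $\TFUN{e\gamma}$ are values for any body.

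The repair needs no new ideas, only a reclassification: route pairing, injections, $\PACK{\cdot}$ and $\FOLD$ through the same biorthogonal pattern as your eliminators, composing related stacks with frames that carry the respective related subterms --- $\ecomp{E}{\PAIR{[]}{e_2\gamma}}$ against $\ecomp{E'}{\PAIR{[]}{e_2'\gamma'}}$, then $\ecomp{E}{\PAIR{v}{[]}}$ against $\ecomp{E'}{\PAIR{v'}{[]}}$, and similarly $\INL\,[]$, $\PACK{[]}$, $\FOLD{[]}$ --- and only at the innermost stage appeal to the value-relation clause. Note that, as in Corollary~\ref{app:cor:ctx-extend-app}, the two composed stacks are not syntactically identical frames but contain the related expressions on their respective sides; the paper's Lemma~\ref{app:lem:ctx-extend-fold} is exactly the $\FOLD$ instance of this pattern. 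A second, minor imprecision: the substitution lemma is not what the type-abstraction case needs (the $\forall$-clause quantifies over an arbitrary $r\in\VRel{\sigma}{\sigma'}$, so the induction hypothesis at $\Delta,\alpha$ applies directly); it is needed for type \emph{application}, $\PACK{\cdot}$ and unpack, where the syntactic substitution $\tau\subst{\alpha}{\sigma}$ in the typing rule must be identified with the semantic extension of $\phi$.
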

\begin{proof}
  The proof is a simple consequence of the context extension lemmas. We show the case for
  $\RAND$. We have to show that $\logapprox{\Delta}{\Gamma}{e}{e'}{\NAT}$ implies
  $\logapprox{\Delta}{\Gamma}{\RAND\,e}{\RAND\,e'}{\NAT}$.
  Let $n\in\NN$, $\phi \in \VRelD{\Delta}$ and $(\gamma, \gamma') \in \den{\Delta \vdash
    \Gamma}(\phi)(n)$.
  Let $f = e\gamma$ and $f' = e'\gamma'$. Then our assumption gives us $(f, f') \in
  \TT{\den{\Delta \vdash \NAT}(\phi)}(n)$ and we are to show
  $(\RAND\,f, \RAND\,f') \in \TT{\den{\Delta \vdash \NAT}(\phi)}(n)$. Let $j \leq n$ and
  $(E, E') \in \T{\den{\Delta\vdash\NAT}(\phi)}(j)$. Then from
  Lemma~\ref{lem:ctx-extend-rand} we have
  $\left(\ecomp{E}{(\RAND{[]})}, \ecomp{E'}{(\RAND{[]})}\right) \in 
  \T{\den{\Delta \vdash \NAT}(\phi)}(j)$ which suffices by the definition of 
  the orthogonality relation and the assumption
  $(f, f') \in \TT{\den{\Delta \vdash \NAT}(\phi)}(n)$.
\end{proof}

We now want to relate logical, CIU and contextual approximation relations. 
\begin{corollary}
  \label{cor:logrel-is-adequat}
  Logical approximation relation $\logapproxrel$ is adequate.
\end{corollary}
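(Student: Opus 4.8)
The plan is to derive adequacy directly from the definition of the $\top\top$-closure, using Proposition~\ref{prop:prob-smaller} to bridge the approximate, step-indexed termination probability that appears on the left-hand side with the exact termination probability on the right. So suppose $\logapprox{\emp}{\emp}{e}{e'}{\tau}$. Since $\Delta = \Gamma = \emp$, the only $\phi \in \VRelD{\emp}$ is the trivial one and the only pair of substitutions in $\den{\emp\vdash\emp}(\phi)(n)$ is the empty pair, under which $e\gamma = e$ and $e'\gamma' = e'$. Hence unfolding the definition of $\logapproxrel$ gives exactly $(e,e') \in \TT{\den{\emp\vdash\tau}}(n)$ for every $n \in \NN$.

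The key step is to observe that the pair of \emph{empty} evaluation contexts $([\,],[\,])$ lies in $\T{\den{\emp\vdash\tau}}(k)$ for every $k$. By definition of $\T{\cdot}$ this amounts to checking $\PRk{j}{v} \leq \PR{v'}$ for all $j \leq k$ and all related value pairs $(v,v')$; in fact this holds for \emph{any} pair of values. On the right, $\PR{v'} = 1$ because $v'$ is a value and $\Phi$ assigns $1$ to values, so the least fixed point does too. On the left, $\PRk{j}{v} = \Psi^j(\bot)(v) \leq 1$: indeed $\PRk{0}{v} = 0$, and for $j \geq 1$ we have $\PRk{j}{v} = 1$ since the empty path witnesses $v \stepstozeropure v$, so $v$ meets the first case in the definition of $\Psi$. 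Thus $\PRk{j}{v} \leq \PR{v'}$ holds regardless of which value relation $(v,v')$ is drawn from, and the admissibility is immediate since $[\,]$ is $\tau$-accepting on both sides.

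With this in hand, I instantiate the membership $(e,e') \in \TT{\den{\emp\vdash\tau}}(n)$ at the pair $([\,],[\,]) \in \T{\den{\emp\vdash\tau}}(k)$ for each $k \leq n$, which yields $\PRk{k}{e} = \PRk{k}{[\,][e]} \leq \PR{[\,][e']} = \PR{e'}$. As $n$ ranges over all of $\NN$, this inequality holds for every $k$, whence $\sup_{k\in\omega}\PRk{k}{e} \leq \PR{e'}$. Finally, Proposition~\ref{prop:prob-smaller} supplies $\PR{e} \leq \sup_{k\in\omega}\PRk{k}{e}$, and chaining the two inequalities gives $\PR{e} \leq \PR{e'}$, which is precisely the adequacy condition of Definition~\ref{def:adequacy}.

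The argument is essentially a direct computation, and I expect no serious obstacle. The only genuine content is the appeal to Proposition~\ref{prop:prob-smaller}: it is exactly what compensates for the asymmetry built into $\T{\cdot}$ and $\Te{\cdot}$, where the left observations are the stratified probabilities $\PRk{k}{\cdot}$ while the right observations are the exact probabilities $\PR{\cdot}$. The minor points to be careful about are that the empty context is genuinely related at every index (handled above) and that the quantifier over $n$ in the definition of $\logapproxrel$ is strong enough to free the bound $k$ from $n$, which it is.
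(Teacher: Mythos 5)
Your proposal is correct and follows essentially the same route as the paper's own proof: unfold the definition of $\logapproxrel$ at the empty substitution, observe that the pair of empty evaluation contexts is related at every index (which you verify in slightly more detail than the paper, checking $\PRk{j}{v} \leq 1 = \PR{v'}$ for values), deduce $\sup_{k\in\omega}\PRk{k}{e} \leq \PR{e'}$, and close the gap with Proposition~\ref{prop:prob-smaller}. The extra care you take in justifying that the empty contexts lie in $\T{\den{\emp\vdash\tau}}(k)$ is a welcome elaboration of a step the paper states without proof, but it is not a different argument.
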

\begin{proof}
  Assume $\logapprox{\emp}{\emp}{e}{e'}{\tau}$. We are to show that $\PR{e} \leq
  \PR{e'}$. Straight from the definition we have $\forall n \in \NN, (e, e') \in \TT{\den{\emp \vdash \tau}}(n).$
  The empty evaluation context is always related to itself (at any
  type). This implies $\forall n \in \NN, \PRk{n}{e} \leq \PR{e'}$
  which further implies (since the right-hand side is independent of $n$) that 
  $\sup_{n\in\omega}\left(\PRk{n}{e}\right) \leq \PR{e'}$.
  Using Proposition~\ref{prop:prob-smaller} we thus have
  $\PR{e} \leq \sup_{n\in\omega}\left(\PRk{n}{e}\right) \leq \PR{e'}$
  concluding the proof.
\end{proof}

We now have that the logical relation is adequate and compatible. This does not
immediately imply that it is contained in the contextual approximation relation, since we
do not know that it is transitive. However we have the following lemma where by transitive
closure we mean that for each $\Delta$, $\Gamma$ and $\tau$ we take the transitive closure
of the relation $\{ (e, e') \isetsep \logapprox{\Delta}{\Gamma}{e}{e'}{\tau}\}$. This is
another type-indexed relation.
\begin{lemma}
  \label{lem:trans-closure-compat}
  The transitive closure of $\logapproxrel$ is compatible and adequate.
\end{lemma}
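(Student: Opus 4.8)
The plan is to prove that the transitive closure of $\logapproxrel$, call it $\logapproxrel^+$, is both compatible and adequate. These two properties are established separately, and I expect adequacy to be the subtler of the two.

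First I would treat compatibility. The key observation is that the transitive closure of a compatible relation is again compatible, provided the relation is also reflexive. Since Proposition~\ref{prop:logrel-is-compatible} gives us that $\logapproxrel$ is compatible (hence reflexive), I would verify each compatibility rule for $\logapproxrel^+$ by a generic argument: given a chain $e \logapproxrel e_1 \logapproxrel \cdots \logapproxrel e'$ witnessing $e \logapproxrel^+ e'$ in some context hole, I want to conclude relatedness of the corresponding compound terms. The trick is to fill the remaining argument positions with the identity (reflexivity of $\logapproxrel$) so that each single $\logapproxrel$-step lifts through the term former via the compatibility of $\logapproxrel$, and then close up under transitivity. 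For instance, for the rule building $\FUN{x}{e}$ from $e$, each step $e_i \logapproxrel e_{i+1}$ yields $\FUN{x}{e_i} \logapproxrel \FUN{x}{e_{i+1}}$ by compatibility of $\logapproxrel$, and chaining these gives $\FUN{x}{e} \logapproxrel^+ \FUN{x}{e'}$. For multi-argument formers such as application or pairing, one varies one argument at a time, using reflexivity of $\logapproxrel$ on the others, so that the two-sided congruence is obtained as a composite of one-sided steps. This is the standard argument and I would only spell it out for one representative binary rule, noting that the rest are analogous.

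Next I would handle adequacy, which is where the real content lies. Adequacy (Definition~\ref{def:adequacy}) requires that $\emp \hmid \emp \vdash e \logapproxrel^+ e' : \tau$ implies $\PR{e} \leq \PR{e'}$. Here I cannot directly invoke Corollary~\ref{cor:logrel-is-adequat}, because $\logapproxrel^+$ introduces intermediate terms, and I must know that the comparison of termination probabilities composes. The plan is induction on the length of the chain $e = e_0 \logapproxrel e_1 \logapproxrel \cdots \logapproxrel e_m = e'$ witnessing $e \logapproxrel^+ e'$. Each single link $e_i \logapproxrel e_{i+1}$ is a closed, well-typed pair, so Corollary~\ref{cor:logrel-is-adequat} gives $\PR{e_i} \leq \PR{e_{i+1}}$. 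Composing these inequalities along the chain yields $\PR{e} = \PR{e_0} \leq \PR{e_1} \leq \cdots \leq \PR{e_m} = \PR{e'}$, and transitivity of $\leq$ on the reals closes the induction.

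The main obstacle, and the step I would be most careful about, is the interaction between the transitive closure construction and the typing discipline in the compatibility argument. The transitive closure is taken pointwise for each fixed $\Delta$, $\Gamma$, $\tau$, so the intermediate terms $e_i$ all live at the \emph{same} type, which is exactly what I need in order to apply compatibility of $\logapproxrel$ at each step and to keep the well-typedness hypotheses of Corollary~\ref{cor:logrel-is-adequat} satisfied. I must check that when I push a chain through a term former the intermediate compound terms remain typeable at the appropriate type and that the one-argument-at-a-time strategy never leaves a hole at a mismatched type; reflexivity of $\logapproxrel$ on the fixed arguments is what guarantees this. Once this bookkeeping is in place, both halves of the lemma follow from the already-established facts, and the proof is essentially a packaging of Proposition~\ref{prop:logrel-is-compatible} and Corollary~\ref{cor:logrel-is-adequat} through the transitive-closure operation.
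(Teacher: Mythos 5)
Your proof is correct and follows essentially the same route as the paper: the paper's own proof simply asserts that the transitive closure of an adequate relation is adequate and that the transitive closure of a compatible, reflexive relation is compatible, which are precisely the two facts you establish (by chaining the inequalities $\PR{e_i} \leq \PR{e_{i+1}}$ for adequacy, and by the one-argument-at-a-time lifting through term formers for compatibility). Your version just spells out the details the paper leaves implicit.
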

\begin{proof}
  Transitive closure of an adequate relation is adequate. Similarly the transitive closure
  of a compatible and \emph{reflexive} relation (in the sense of
  Definition~\ref{def:tir-congruence}) is again compatible (and reflexive).
\end{proof}
\begin{theorem}[CIU theorem]
  \label{thm:CIU-theorem}
  The relations $\logapproxrel$, $\ciuapproxrel$ and $\ctxapproxrel$ coincide.
\end{theorem}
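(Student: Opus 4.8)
The plan is to prove that the three relations coincide by establishing the cycle of inclusions $\logapproxrel \subseteq \ctxapproxrel \subseteq \ciuapproxrel \subseteq \logapproxrel$. Most of the ingredients are already in place: the Fundamental Property (Proposition~\ref{prop:logrel-is-compatible}) gives that $\logapproxrel$ is compatible and hence reflexive, Corollary~\ref{cor:logrel-is-adequat} gives adequacy, and Lemma~\ref{lem:trans-closure-compat} upgrades the transitive closure of $\logapproxrel$ to a relation that is both compatible and adequate.

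First I would show $\logapproxrel \subseteq \ctxapproxrel$. The transitive closure of $\logapproxrel$ is transitive by construction and, by Lemma~\ref{lem:trans-closure-compat}, compatible (hence reflexive) and adequate; it is therefore an adequate precongruence. Since $\ctxapproxrel$ is by definition the \emph{largest} adequate precongruence, the transitive closure is contained in $\ctxapproxrel$, and as $\logapproxrel$ is contained in its own transitive closure the inclusion follows. The inclusion $\ctxapproxrel \subseteq \ciuapproxrel$ is the easy direction: evaluation contexts are a special case of general program contexts and a closing substitution replaces each variable by a value reflexively related to itself, so compatibility of $\ctxapproxrel$ yields $E[e\gamma] \ctxapproxrel E[e'\gamma]$ at a closed type whenever $\ctxapprox{\Delta}{\Gamma}{e}{e'}{\tau}$; adequacy then gives $\PR{E[e\gamma]} \leq \PR{E[e'\gamma]}$, which is exactly the defining condition of $\ciuapproxrel$.

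The main work, and the crux of completeness, is $\ciuapproxrel \subseteq \logapproxrel$. Here I would first prove a composition lemma: if $\logapprox{\Delta}{\Gamma}{e}{e'}{\tau}$ and $\ciuapprox{\Delta}{\Gamma}{e'}{e''}{\tau}$ then $\logapprox{\Delta}{\Gamma}{e}{e''}{\tau}$. Unfolding the definition of the logical relation and of the $\top\top$-closure, relatedness of $(e\gamma, e'\gamma')$ at stage $n$ means $\PRk{k}{E[e\gamma]} \leq \PR{E'[e'\gamma']}$ for all $k \leq n$ and all $(E,E') \in \T{\den{\Delta\vdash\tau}(\phi)}(k)$. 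The right-hand observation is the full termination probability $\PR{\cdot}$, which is precisely the observation used by $\ciuapproxrel$; since $\gamma'$ is a closing value substitution and $E'$ is a type-appropriate evaluation context, the CIU hypothesis yields $\PR{E'[e'\gamma']} \leq \PR{E'[e''\gamma']}$, and transitivity of $\leq$ gives $\PRk{k}{E[e\gamma]} \leq \PR{E'[e''\gamma']}$, i.e. $(e\gamma, e''\gamma') \in \TT{\den{\Delta\vdash\tau}(\phi)}(n)$. Instantiating this lemma with the reflexivity instance $\logapprox{\Delta}{\Gamma}{e}{e}{\tau}$ supplied by the Fundamental Property then shows that $\ciuapprox{\Delta}{\Gamma}{e}{e'}{\tau}$ implies $\logapprox{\Delta}{\Gamma}{e}{e'}{\tau}$, closing the cycle.

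I expect the composition lemma to be the only genuine obstacle, and even there the difficulty is organizational rather than quantitative: the design of the biorthogonal lifting, whose right-hand component uses $\PR{\cdot}$ rather than its stratified approximation $\PRk{k}{\cdot}$, is exactly what lets the CIU observation slot in without any manipulation of step-indices. The remaining care is bookkeeping, namely checking that the substitutions and evaluation contexts arising from $\den{\Delta\vdash\Gamma}$ and from $\T{\den{\Delta\vdash\tau}(\phi)}$ satisfy the typing side-conditions demanded by the definition of $\ciuapproxrel$.
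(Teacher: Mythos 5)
Your proposal is correct and takes essentially the same route as the paper: the same cycle of inclusions $\logapproxrel \subseteq \ctxapproxrel \subseteq \ciuapproxrel \subseteq \logapproxrel$, using the transitive-closure argument of Lemma~\ref{lem:trans-closure-compat} for the first inclusion, the standard Pitts-style argument for the second, and your composition lemma---which is exactly Corollary~\ref{app:cor:logrel-ciu-extend} in the appendix---combined with reflexivity from the Fundamental Property for the third. The only difference is one of detail: you spell out the proof of the composition lemma (the right-hand observation in the $\top\top$-closure being $\PR{\cdot}$, so the CIU hypothesis composes by transitivity of $\leq$), which the paper dismisses as following directly from the definition.
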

\begin{proof}
  It is standard (e.g.~\cite{Pitts:05}) that $\ctxapproxrel$ is included
  in $\ciuapproxrel$. We show that the logical approximation relation is
  contained in the CIU approximation relation in the standard way for biorthogonal
  step-indexed logical relations. To see that 
  $\logapproxrel$ is included in $\ctxapproxrel$ we have by
  Lemma~\ref{lem:trans-closure-compat} that the transitive closure of $\logapproxrel$ is
  an adequate precongruence, thus included in $\ctxapproxrel$. And $\logapproxrel$ is
  included in the transitive closure of $\logapproxrel$.
  Corollary~\ref{app:cor:logrel-ciu-extend} in the appendix completes the cycle of inclusions.
\end{proof}

Using the logical relation and Theorem~\ref{thm:CIU-theorem} we can
prove some extensionality properties.
The proofs are standard and can be found in the Appendix.
\begin{lemma}[Functional extensionality for values]
  \label{lem:func-extensionality}
  Suppose $\tau, \sigma \in \Type(\Delta)$ and let $f$ and $f'$ be two
  \emph{values} of type $\tau\to\sigma$ in context $\Delta\ |\ \Gamma$.
  If for all $u \in \Val{\tau}$ we have
  $\ctxapprox{\Delta}{\Gamma}{f\,u}{f'\,u}{\sigma}$
  then
  ${\ctxapprox{\Delta}{\Gamma}{f}{f'}{\tau\to\sigma}}$.
\end{lemma}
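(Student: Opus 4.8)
The plan is to prove functional extensionality for values by reducing it to a CIU-style statement, using Theorem~\ref{thm:CIU-theorem} to move freely between $\ctxapproxrel$, $\ciuapproxrel$ and $\logapproxrel$. Let me think about what the statement actually asks and how the pieces fit together.

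We have two values $f, f'$ of type $\tau \to \sigma$, and we know $f\,u \lesssim^{ctx} f'\,u$ for every value argument $u$. We want $f \lesssim^{ctx} f'$. The natural strategy is to prove the logical relation holds, i.e., $f \logapproxrel f'$, and then invoke Theorem~\ref{thm:CIU-theorem}. Since $f$ and $f'$ are lambda-abstractions (by canonical forms, a value of arrow type is $\lambda x. e$), unfolding $\den{\Delta \vdash \tau\to\sigma}(\phi)(n)$ requires showing that for every $j \le n$ and every related pair of argument values $(v,v') \in \den{\Delta \vdash \tau}(\phi)(j)$, the applications $f\,v$ and $f'\,v'$ are in the $\top\top$-closure $\TT{\den{\Delta\vdash\sigma}(\phi)}(j)$. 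The hypothesis gives us information about $f\,u$ versus $f'\,u$ at a \emph{single} argument $u$, so the gap to close is passing from "same argument on both sides" to "related arguments on the two sides."

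First I would reduce the general open-context case to the closed case by applying an appropriate substitution $\gamma,\gamma'$ from the context interpretation, so it suffices to argue about closed values. Then the key move is to factor the comparison of $f\,v$ and $f'\,v'$ through the intermediate term $f'\,v$: by the fundamental property (Proposition~\ref{prop:logrel-is-compatible}), $f'$ is related to itself, so $(f'\,v, f'\,v')$ lands in the closure whenever $(v,v')$ is related; and the hypothesis, transported into the logical relation via Theorem~\ref{thm:CIU-theorem}, handles $(f\,v, f'\,v)$ since both take the \emph{same} argument $v$. Composing these two facts — which formally means using transitivity of the logical approximation (again available through Theorem~\ref{thm:CIU-theorem}, since the three relations coincide and $\ctxapproxrel$ is transitive) — yields $(f\,v, f'\,v') \in \TT{\den{\Delta\vdash\sigma}(\phi)}(j)$, which is exactly what the value relation for arrow types demands. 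I would assemble these to conclude $f \logapproxrel f'$ and hence $f \ctxapproxrel f'$.

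The main obstacle I expect is the careful bookkeeping of step-indices and the precise sense in which the hypothesis $\ctxapprox{\Delta}{\Gamma}{f\,u}{f'\,u}{\sigma}$ for all $u$ feeds into the logical relation at a \emph{fixed} step-index $j$ with a \emph{related} argument $v$. Since the hypothesis ranges over all value arguments $u \in \Val{\tau}$ but not over related pairs, the genuine content is that $f\,v$ and $f'\,v$ are related when plugged with the \emph{same} closed value $v$; getting from the contextual hypothesis to a clean logical-relation fact at each index, and then splicing it with the reflexivity of $f'$ on genuinely related arguments, is where the argument must be handled with care. The reliance on Theorem~\ref{thm:CIU-theorem} to supply transitivity is essential here, because the logical relation itself is not known to be transitive directly, and that is precisely the point exploited in the cycle of inclusions established earlier.
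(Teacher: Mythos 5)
Your high-level skeleton does match the paper's proof: use Theorem~\ref{thm:CIU-theorem} to reduce to the logical relation, show that (after closing substitutions) the two values are directly in $\den{\Delta\vdash\tau\to\sigma}(\phi)(n)$, and combine the fundamental property with the hypothesis to bridge from ``same argument'' to ``related arguments''. The gap is in your choice of intermediate term and in the claimed composition. You factor through $f'\,v$: (i) the hypothesis handles $(f\,v,f'\,v)$, and (ii) reflexivity of $f'$ handles $(f'\,v,f'\,v')$, composed ``by transitivity''. Unfold what these two facts say quantitatively at a fixed index. Fix $j\le n$, $(v,v')\in\den{\Delta\vdash\tau}(\phi)(j)$, $k\le j$ and $(E,E')\in\T{\den{\Delta\vdash\sigma}(\phi)}(k)$; the goal is $\PRk{k}{E[f\,v]}\le\PR{E'[f'\,v']}$ (suppressing closing substitutions). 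Fact (i) is a CIU fact: it compares \emph{true} termination probabilities in a \emph{single} context, $\PR{E[f\,v]}\le\PR{E[f'\,v]}$. Fact (ii) is a step-indexed fact: $\PRk{k}{E[f'\,v]}\le\PR{E'[f'\,v']}$, and only for $k\le j$. These do not chain: to use (i) you must first pass from $\PRk{k}{E[f\,v]}$ to $\PR{E[f\,v]}$, and then you are stuck needing $\PR{E[f'\,v]}\le\PR{E'[f'\,v']}$, an \emph{un-indexed} inequality that (ii) does not provide --- since $(v,v')$ is only known related up to index $j$, you cannot take the supremum over all $k$ and invoke Proposition~\ref{prop:prob-smaller}. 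Nor can you strengthen (i) to a stratified inequality $\PRk{k}{E[f\,v]}\le\PRk{k}{E[f'\,v]}$: contextual approximation says nothing about step counts. The appeal to transitivity of $\logapproxrel$ via Theorem~\ref{thm:CIU-theorem} does not rescue this, because that is transitivity of the \emph{syntactic} type-indexed relations, whereas the composition you need is between facts at a fixed semantic $\phi$ (whose relations $\phi_r(\alpha)$ may be arbitrary, as in the examples of Section~\ref{sec:examples}) and a fixed index $j$; moreover $f'\,v$ mixes the right-hand function with the left-hand argument and is not an instantiation of any single typing judgement to which that transitivity could apply.

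The paper factors through the other intermediate term, the \emph{left} function applied to the \emph{right-hand} argument, and that order is the one that works: reflexivity of $f$ (Proposition~\ref{prop:logrel-is-compatible}) gives the step-indexed inequality $\PRk{k}{E[(f\gamma)\,u]}\le\PR{E'[(f\gamma')\,u']}$, and then the hypothesis, used as a CIU fact instantiated with the right-hand substitution, the single argument $u'$ and the single context $E'$, gives $\PR{E'[(f\gamma')\,u']}\le\PR{E'[(f'\gamma')\,u']}$; these chain by transitivity of $\le$ on reals. The asymmetry is forced by the relation itself: only the left component is measured by the stratified probability, so a step-indexed fact can be \emph{post}-composed with a CIU fact (this is exactly Corollary~\ref{app:cor:logrel-ciu-extend}), but cannot be \emph{pre}-composed by one at a fixed index. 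Swapping your two steps --- reflexivity of $f$ first, hypothesis (at the right-hand argument, inside $E'$) second --- turns your outline into the paper's proof.
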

The extensionality for \emph{expressions}, as opposed to only \emph{values}, of function
type does not hold in general due to the presence of choice reductions. See
Remark~\ref{rem:failure-extensionality} for an example.
We also have extensionality for \emph{values} of universal types.
\begin{lemma}[Extensionality for the universal type]
  \label{lem:all-extensionality}
  Let $\tau \in \Type(\Delta, \alpha)$ be a type.
  Let $f, f'$ be two \emph{values} of type $\ALL{\alpha}{\tau}$ in context $\Delta\ |\
  \Gamma$. If for all closed types $\sigma$ we have
  ${\ctxapprox{\Delta}{\Gamma}{\TAPP{f}}{\TAPP{f'}}{\tau\subst{\alpha}{\sigma}}}$
  then 
  ${\ctxapprox{\Delta}{\Gamma}{f}{f'}{\ALL{\alpha}{\tau}}}$.
\end{lemma}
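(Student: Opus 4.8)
By the CIU theorem (Theorem~\ref{thm:CIU-theorem}) the three relations coincide, so it suffices to prove the logical approximation $\logapprox{\Delta}{\Gamma}{f}{f'}{\ALL{\alpha}{\tau}}$. The plan is to unfold this definition: fixing $n \in \NN$, $\phi \in \VRelD{\Delta}$ and $(\gamma,\gamma') \in \den{\Delta\vdash\Gamma}(\phi)(n)$, the goal becomes $(f\gamma, f'\gamma') \in \TT{\den{\Delta\vdash\ALL{\alpha}{\tau}}(\phi)}(n)$. Since $f,f'$ are values and $\gamma,\gamma'$ substitute values, $f\gamma$ and $f'\gamma'$ are again values, and by the canonical forms lemma they have the shape $\TFUN{e_0}$ and $\TFUN{e_0'}$. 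Unfolding the $\top\top$-closure, I would then take $j \leq n$ and related contexts $(E, E') \in \T{\den{\Delta\vdash\ALL{\alpha}{\tau}}(\phi)}(j)$ and reduce everything to the single inequality $\PRk{j}{E[f\gamma]} \leq \PR{E'[f'\gamma']}$.

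The key idea is to route every observation of a universal-type value through a type application, where the hypothesis applies. Concretely I would establish a type-application context-extension lemma dual to Lemma~\ref{lem:ctx-extend-rand}: if $(D, D') \in \T{\den{\Delta\vdash\tau\subst{\alpha}{\sigma}}(\phi)}(j)$ for a closed type $\sigma$, then $\left(\ecomp{D}{(\TAPP{[]})}, \ecomp{D'}{(\TAPP{[]})}\right) \in \T{\den{\Delta\vdash\ALL{\alpha}{\tau}}(\phi)}(j)$. Its proof is routine: for a related pair $(\TFUN{w}, \TFUN{w'})$ in the value relation the type-application redexes $\TAPP{(\TFUN{w})} \stepstozeropure w$ are weight-one, choice-free and unfold-fold-free, so Lemmas~\ref{lem:pure-reductions-dont-change} and~\ref{lem:choice-free-reductions-dont-change} leave both $\PRk{k}{\cdot}$ and $\PR{\cdot}$ unchanged, after which one instantiates the universal value relation at the diagonal relation $r = \den{\emp\vdash\sigma}$ and uses the substitution property of the interpretation, $\den{\Delta\vdash\tau\subst{\alpha}{\sigma}}(\phi) = \den{\Delta,\alpha\vdash\tau}(\extend{\phi}{\alpha}{\den{\emp\vdash\sigma}})$. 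Dually, this pins down the closed type $\sigma$ at which a related universal context instantiates its hole, and the hypothesis $\TAPP{f} \logapproxrel \TAPP{f'}$ at that instance $\tau\subst{\alpha}{\sigma}$ then supplies the desired inequality once the extra $\beta$-step is stripped by the same two purity lemmas.

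The hard part is that a value of universal type need not be eliminated by $E$ immediately: it may be copied into a pair, injected into a sum, passed as an argument, folded, or discarded long before (if ever) it is type-applied, so one cannot simply factor $E = \ecomp{E_1}{(\TAPP{[]})}$. This is exactly why the argument must live at the biorthogonal level and not at the level of the value relation: placing $(f\gamma, f'\gamma')$ directly in $\den{\Delta\vdash\ALL{\alpha}{\tau}}(\phi)(n)$ would require relating the bodies $e_0, e_0'$ under \emph{every} $r \in \VRel{\sigma}{\sigma'}$, whereas the hypothesis only controls the diagonal instances $\den{\emp\vdash\sigma}$; the $\top\top$-closure is precisely the coarser, observation-level relation for which the diagonal instances are enough. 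I expect to discharge this obstacle exactly as in the already-established functional extensionality for values (Lemma~\ref{lem:func-extensionality}): the orthogonal-context analysis that shows an arrow value is observed only through application frames with value arguments transfers, mutatis mutandis, to the type-application frames of the universal type, and the quantitative bookkeeping stays inert because the only reductions introduced are weight-one, choice-free and unfold-fold-free.
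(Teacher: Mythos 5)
Your proposal stalls at exactly the point you flag as ``the hard part,'' and the escape route you propose does not exist. After unfolding to the biorthogonal level you must handle an \emph{arbitrary} pair of related contexts $(E,E')\in\T{\den{\Delta\vdash\ALL{\alpha}{\tau}}(\phi)}(j)$, and by the very definition of $\T{\cdot}$ the only way to extract anything from such a pair is to feed it a pair of values related in the value relation $\den{\Delta\vdash\ALL{\alpha}{\tau}}(\phi)(k)$ --- precisely what you have declined to produce. Your type-application context-extension lemma is true (its proof via the diagonal relation $r=\den{\Delta\vdash\sigma}(\phi)$, the substitution Lemma~\ref{app:lem:interpretation-subst}, and Lemmas~\ref{lem:pure-reductions-dont-change} and~\ref{lem:choice-free-reductions-dont-change} goes through), but it points the wrong way: it manufactures \emph{particular} related contexts of the shape $\left(\ecomp{D}{(\TAPP{[]})},\ecomp{D'}{(\TAPP{[]})}\right)$, whereas your goal quantifies over \emph{all} related contexts, which, as you yourself observe, need not factor through a type application (the hole of $E$ may sit under frames such as $v_1\,[]$, $\PAIR{[]}{e}$, $\INL[]$, which copy, store or discard the value). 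Your plan to discharge this via the ``orthogonal-context analysis'' allegedly underlying Lemma~\ref{lem:func-extensionality} rests on a false premise: the proof of that lemma contains no such analysis, and establishing that an arbitrary evaluation context observes a universal value only through type-application frames would be a substantial unwinding theorem in its own right, nowhere available in the paper.

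The missing idea is an asymmetric ``triangle'' argument, which also shows that your reason for avoiding the value relation --- that membership would require relating the bodies under \emph{every} $r\in\VRel{\sigma}{\sigma'}$ while the hypothesis only controls diagonal instances --- is not actually an obstruction. The paper places $(\TFUN{e\gamma},\TFUN{e'\gamma'})$ directly in $\den{\Delta\vdash\ALL{\alpha}{\tau}}(\phi)(n)$ as follows: for an arbitrary $r$, the fundamental property (Proposition~\ref{prop:logrel-is-compatible}) applied to the \emph{single} body $e$ under the two related substitutions yields $(e\gamma,e\gamma')\in\TT{\den{\Delta,\alpha\vdash\tau}\left(\extend{\phi}{\alpha}{r}\right)}(n)$, hence $\PRk{k}{E[e\gamma]}\leq\PR{E'[e\gamma']}$ for the given related contexts; then, since $E'$ is a concrete evaluation context accepting the \emph{closed} instance $\tau\subst{\Delta,\alpha}{\vec\sigma,\sigma'}$, the hypothesis --- converted by Theorem~\ref{thm:CIU-theorem} into a CIU approximation --- is applied purely at the level of right-hand observations, giving $\PR{E'[e\gamma']}\leq\PR{E'[e'\gamma']}$. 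The arbitrary relation $r$ is thus only ever confronted with the diagonal pair $(e\gamma,e\gamma')$, and the step from $e\gamma'$ to $e'\gamma'$ happens where $r$ no longer appears. This is the same trick as in the proof of Lemma~\ref{lem:func-extensionality} (the chain $\PRk{k}{E[v\,u]}\leq\PR{E'[v\,u']}\leq\PR{E'[v'\,u']}$), which is the proof you misremembered; without it, your biorthogonal-level route circles back to needing CIU approximation of $f\gamma'$ and $f'\gamma'$ at the universal type itself, i.e.\ the statement being proved.
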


\section{Examples}
\label{sec:examples}

We now use our logical relation to prove 
some example equivalences.  We show two examples involving
polymorphism. In the Appendix we show additional examples. In particular
we show the correctness of von Neumann's procedure for generating a fair sequence of coin
tosses from an unfair coin. That example in particular shows how the use of
biorthogonality allows us to ``externalize'' the reasoning to arithmetic
manipulations.

We first define 
$\FIX :
\ALL{\alpha,\beta}{((\alpha\mathop\to\beta)\mathop\to(\alpha\mathop\to\beta))
  \to(\alpha\mathop\to\beta)}$
be the term
$\TFUN{\TFUN{\FUN{f}{\FUN{z}{\delta_f(\FOLD{\delta_f})\,z}}}}$ where $\delta_f$ is the
term $\FUN{y}{\syn{let}~y'~=\UNFOLD{y}~\syn{in}~f\,(\FUN{x}{y'\,y\,x})}$.
This is a call-by-value fixed-point combinator. We also write
$e_1 \oplus e_2$ for the term $\IFZERO{\RAND\underline 2}{e_1}{e_2}$. Note that the choice
is made before evaluating $e_i$'s.

We characterize inhabitants of a polymorphic
type and show a free theorem. For the former, we need to know which real numbers can be 
probabilities of termination of programs.
Recall that a real number $r$ is \emph{left-computable} if there exists a \emph{computable}
increasing (not necessarily strictly) sequence $\{q_n\}_{n\in\omega}$ of \emph{rational numbers}
such that $r = \sup_{n\in\omega} q_n$. In Appendix~\ref{app:sec:prob-conv} we prove
\begin{proposition}
  \label{prop:characterizing-prob-conv}
  For any expression $e$, $\PR{e}$ is a left-computable real number and for any
  left-computable real number $r$ in the interval $[0,1]$ there is a closed term $e_r$ of
  type $\ONE \to \ONE$ such that $\PR{e_r\,\unitval} = r$.
\end{proposition}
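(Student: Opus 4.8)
The statement has two independent halves, which I would prove separately.

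For the first half I would exploit the least-fixed-point characterization $\PR{e} = \sup_{n\in\omega}\Phi^n(\bot)(e)$ from Section~\ref{sec:term-relat}. The plan is to establish three facts about the chain $\{\Phi^n(\bot)(e)\}_{n}$: it is (i) nondecreasing, (ii) rational-valued, and (iii) uniformly computable from $e$ and $n$. Monotonicity (i) is immediate from $\bot \le \Phi(\bot)$ together with monotonicity of $\Phi$. For (ii) and (iii) the key observation is that the one-step relation is finitely branching: the only branching comes from $\RAND\underline{m}$, which has exactly $m$ successors each of rational weight $\frac1m$, while every other reduction is deterministic of weight $1$. Hence $\Phi^{n}(\bot)(e)$ is computed by a terminating recursion of depth $n$ over the finite, decidable reduction tree: return $0$ at depth $0$, return $1$ when $e\in\Vals$, and otherwise return the finite rational combination $\sum_{e\stepsto p e'}p\cdot\Phi^{n-1}(\bot)(e')$. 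By induction each value is rational, and the procedure is effective because decomposing a closed term into evaluation context and redex and enumerating its reducts are computable. Therefore $\PR{e}$ is the supremum of a computable nondecreasing sequence of rationals, i.e.\ left-computable.

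For the second half I would build $e_r$ as an explicit recursive program terminating with probability exactly $r$. Fix a computable increasing sequence $\{q_n\}$ of rationals with $q_0 = 0$ and $\sup_n q_n = r$, so $0\le q_n\le r\le 1$. Set the conditional stopping probability at stage $n$ to $t_n = (q_{n+1}-q_n)/(1-q_n)\in[0,1]\cap\QQ$, which is well defined whenever $q_n<1$ (the boundary cases $r=0$ and $r=1$ I would dispatch directly via $\lambda z.\divterm$ and $\lambda z.\unitval$). Using $\FIX$ I would define $g:\NAT\to\ONE$ so that $g\,\underline n$ computes the numerator and denominator of $t_n$ (possible since $\{q_n\}$ is computable and $\lang$ is Turing-complete via recursive types and $\FIX$), then flips a biased coin by testing whether $\RAND$ of the denominator lands among the first numerator-many outcomes, returning $\unitval$ if so and otherwise reducing to $g\,\underline{n+1}$; finally $e_r = \lambda z.\,g\,\underline 0$.

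The crux is computing $\PR{e_r\,\unitval}$ and showing it equals $r$. Writing $x_n = \PR{g\,\underline n}$ and using Lemma~\ref{lem:choice-free-reductions-dont-change} to discard the intermediate choice-free administrative reductions (the $\beta$/arithmetic/unfold-fold steps computing $t_n$), the fixed-point equation for $\PR{\cdot}$ yields the recurrence $x_n = t_n + (1-t_n)\,x_{n+1}$. One checks that $x_n^\star = (r-q_n)/(1-q_n)$ solves this recurrence and that $x_0^\star = r$; concretely the probability that a run stops at stage $n$ is $\bigl(\prod_{i<n}(1-t_i)\bigr)t_n = (1-q_n)\,t_n = q_{n+1}-q_n$, and summing over $n$ telescopes to $\sup_N q_N = r$. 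The main obstacle is exactly this verification: I must argue that the least-fixed-point value coincides with the telescoping sum rather than merely bounding it, which I would handle either by solving the recurrence as above or, equivalently, by showing that the stratified approximants $\PRk{k}{e_r\,\unitval}$ realize the partial sums $q_m$ from below and invoking Proposition~\ref{prop:prob-smaller}.
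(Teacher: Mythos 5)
Your proposal takes essentially the same route as the paper's proof. The first half coincides with the paper's argument step for step: an induction showing each $\Phi^n(\bot)$ maps into $[0,1]\cap\QQ$ (finite sums of rationals), computability of $\Phi^n(\bot)(e)$ by inspecting all reductions from $e$ of length at most $n$, and monotonicity of the chain. The second half is also the paper's construction up to implementation detail: the paper's program stops at each stage with conditional probability $(q_{n+1}-q_n)/(1-q_n)$ exactly as yours does, except that instead of passing an index $n$ it recurses on a renormalized sequence-function $r'$ with $r'(z)=(r(z+1)-q_1)/(1-q_1)$, always querying at index $1$.

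One caveat concerns your verification, where you present two routes as ``equivalent.'' They are not, and the first one---exhibiting $x_n^\star=(r-q_n)/(1-q_n)$ as a solution of the recurrence $x_n=t_n+(1-t_n)x_{n+1}$---is insufficient on its own: the recurrence has many solutions (the constant function $1$ is one), so solving it says nothing about which solution the \emph{least} fixed point selects. Unrolling the recurrence yields only the lower bound $\PR{e_r\,\unitval}\geq q_N$ for every $N$. The upper bound genuinely requires the stratified approximants: the paper shows that at the appropriate index the stratified probability equals the partial sum $q_n$ (each stage consumes a fixed number of unfold-fold and choice reductions, and the stratified probability of the residual call is $0$ at small indices), hence $\sup_k\PRk{k}{e_r\,\unitval}\leq\sup_n q_n=r$, and Proposition~\ref{prop:prob-smaller} transfers this bound to $\PR{e_r\,\unitval}$. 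So your two alternatives are in fact the two complementary halves of the paper's proof---the fixed-point recurrence gives $\geq$, the stratified approximants plus Proposition~\ref{prop:prob-smaller} give $\leq$---and with that correction your plan is exactly the paper's.
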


\subsubsection{Inhabitants of the type $\ALL{\alpha}{\alpha\to\alpha}$}
\label{sec:inhabitants-forall}

In this section we use further syntactic sugar for sequencing. When $e, e' \in \Exps$ are
closed terms we write $e ; e'$ for $\left(\FUN{\_}{e'}\right)e$, i.e. first run $e$,
ignore the result and then run $e'$. We will need the property that
for all terms $e, e' \in \Exps$, $\PR{e; e'} = \PR{e} \cdot \PR{e'}$.
The proof is by Scott induction and can be found in the Appendix.

Using Proposition~\ref{prop:characterizing-prob-conv} we have for each left-computable real $r$ in
the interval $[0,1]$ an inhabitant $t_r$ of the type $\ALL{\alpha}{\alpha\to\alpha}$ given by
$\TFUN{\FUN{x}{e_r\,\unitval; x}}$.

We now show that these are the only inhabitants of $\ALL{\alpha}{\alpha\to\alpha}$ of the
form $\TFUN{\FUN{x}{e}}$. Given such an inhabitant let $r = \PR{e\subst{x}{\unitval}}$. We know
from Proposition~\ref{prop:characterizing-prob-conv} that $r$ is left-computable.

Given a value $v$ of type $\tau$
and $n\in\NN$ we define relations $R(n) = \{(\unitval, v)\}$ and $S(n) = \{(v,
\unitval)\}$. Note that the relations are independent of $n$, i.e. $R$ and $S$ are
constant relations. By reflexivity of the logical relation and the relational actions of types
we have
\begin{align}
  \label{eq:forall}
  \forall n, (e\subst{x}{\unitval}, e\subst{x}{v}) \in \TT{R}(n)\qquad\text{and}\qquad
  \forall n, (e\subst{x}{v}, e\subst{x}{\unitval}) \in \TT{S}(n)
\end{align}
from which we conclude that $\PR{e\subst{x}{\unitval}} = \PR{e\subst{x}{v}}$.  We
now show that $v$ and $e\subst{x}{v}$ are CIU-equivalent.
Let $E \in \Stk{\tau}$ be an evaluation context.
Let $q = \PR{E[v]}$. Define the evaluation context $E' = -; e_q\,\unitval$. Then
$(E, E') \in \T{S}(n)$ for all $n$ which then means, using \eqref{eq:forall}
and Proposition~\ref{prop:prob-smaller}, that $\PR{E[e\subst{x}{v}]} \leq
\PR{E'[e\subst{x}{\unitval}]}$. We then have
\[\PR{E'[e\subst{x}{\unitval}]} = \PR{e\subst{x}{\unitval}}\cdot\PR{e_q\,\unitval} = r
\cdot \PR{E[v]}\] and so $\PR{E[e\subst{x}{v}]} \leq r \cdot \PR{E[v]}$.

Similarly we have $(E', E) \in \T{R}(n)$ for all $n$ which implies
$\PR{E[e\subst{x}{v}]} \geq \PR{E'[e\subst{x}{\unitval}]}$. We also have
$\PR{E'[e\subst{x}{\unitval}]} = r \cdot \PR{E[v]}$.

So we have proved $\PR{E[e\subst{x}{v}]} = r\cdot\PR{E[v]} =
\PR{e\subst{x}{v}}\cdot\PR{E[v]}$. It is easy to show by Scott induction,
that $\PR{E[\TAPP{t_r}\,v]} = \PR{e_r\,\unitval}\cdot\PR{E[v]}$. We have thus shown that for
any value $v$, the terms $e\subst{x}{v}$ and $\PR{\TAPP{t_r}\,v}$ are CIU-equivalent. Using
Theorem~\ref{thm:CIU-theorem} and
Lemmas~\ref{lem:all-extensionality}~and~\ref{lem:func-extensionality}
we conclude that the terms $\ALL{\alpha}{\FUN{x}{e}}$
and $t_r$ are contextually equivalent.

\begin{remark}
  \label{rem:failure-extensionality}
  Unfortunately we cannot so easily characterize general values of the type
  $\ALL{\alpha}{\alpha\to\alpha}$, that is, those not of the form $\TFUN{v}$ for a value
  $v$. Consider the term $\TFUN{t_{\frac{1}{2}}\oplus t_{\frac{1}{3}}}$. It is a
  straightforward calculation that for any evaluation context $E$ and value $v$, 
  $\PR{E\left[\left(t_{\frac{1}{2}}\oplus t_{\frac{1}{3}}\right)\,v\right]} =
    \frac{5}{12}\PR{E[v]} = \PR{E\left[t_{\frac{5}{12}}\,v\right]}$
  thus if $\TFUN{t_{\frac{1}{2}}\oplus t_{\frac{1}{3}}}$ is equivalent to any $\TFUN{t_r}$
  it must be $\TFUN{t_{\frac{5}{12}}}$.

  Let $E$ be the evaluation context
  $E = \syn{let}~f~=\TAPP{\empeval}~\syn{in}~
                      \syn{let}~x~=f\,\unitval~\syn{in}~f\,\unitval$.
  We compute $\PR{E\left[\TFUN{t_{\frac{1}{2}}\oplus t_{\frac{1}{3}}}\right]} = \frac{13}{72}$
  and $\PR{E\left[\TFUN{t_{\frac{5}{12}}}\right]} = \frac{25}{144}$
  showing that $\TFUN{t_{\frac{1}{2}}\oplus t_{\frac{1}{3}}}$ is \emph{not} equivalent to
  $\TFUN{t_{\frac{5}{12}}}$.

  This example also shows that extensionality for \emph{expressions}, as opposed to
  \emph{values}, of function type does not hold. The reason is that probabilistic choice
  is a computational effect and so it matters how many times we evaluate the term and
  this is what the constructed evaluation context uses to distinguish the terms.
\end{remark}

\subsubsection{A free theorem for lists}
\label{sec:free-theorem}

Let $\tau$ be a type and $\alpha$ not free
in $\tau$. We write $[\tau]$ for the type of lists $\REC{\alpha}{(\ONE + \tau \times
  \alpha)}$, $\syn{nil}$ for the empty list and $\syn{cons} : \ALL{\alpha}{\alpha \to
  [\alpha] \to [\alpha]}$ for the other constructor
${\syn{cons} = \TFUN{\FUN{x}{\FUN{xs}{\FOLD{(\INR{\PAIR{x}{xs}})}}}}}$.
The function $\syn{map}$ of type $\ALL{\alpha}{\ALL{\beta}{(\alpha \to \beta) \to [\alpha]
    \to [\beta]}}$ is the function applying the given function to all elements of
the list in order. Additionally, we define composition of terms $f \comp g$ as the term
$\FUN{x}{f(g(x))}$ (for $x$ not free in $f$ and $g$).

We will now show that any term $m$ of type $\ALL{\alpha}{\ALL{\beta}{(\alpha \to \beta)
    \to [\alpha] \to [\beta]}}$ equivalent to a term of the form
$\TFUN{\TFUN{\FUN{x}{e}}}$ satisfies
${\TAPP{\TAPP{m}}\,(f \comp g) \ctxequivrel \TAPP{\TAPP{m}}{f} \comp \TAPP{\TAPP{\syn{map}}}\,g}$
for all \emph{values} $f$ and all \emph{deterministic and terminating} $g$. By this we
mean that for each value $v$ in the domain of $g$, there exists a \emph{value} $u$ in the
codomain of $g$, such that $g\,v \ctxequivrel u$. For instance, if $g$ reduces without
using choice reductions and is terminating, then $g$ is deterministic. There are other
functions that are also deterministic and terminating, though, for instance
$\FUN{x}{\unitval \oplus \unitval}$.
In the Appendix we show that these restrictions are not superfluous.

So let $m$ be a closed term of type
${\ALL{\alpha}{\ALL{\beta}{(\alpha \to \beta) \to [\alpha] \to [\beta]}}}$
and suppose further that $m$ is equivalent to a term of the form $\TFUN{\TFUN{\FUN{x}{e}}}$.
Let $\tau, \sigma, \rho \in \Type$ be closed types and $f \in \Val{\sigma\to\rho}$ and $g
\in \Exp{\tau\to\sigma}$ be a deterministic and terminating function. Then
\begin{align*}
  \ctxequiv{\emp}{\emp}{\TAPP{\TAPP{m}}(f \comp g)}{\TAPP{\TAPP{m}} f \comp
  \TAPP{\TAPP{\syn{map}}} g}{[\tau] \to [\rho]}.
\end{align*}
We prove two approximations separately, starting with $\ctxapproxrel$. We use
Theorem~\ref{thm:CIU-theorem} multiple times. We have
$\hastype{\alpha, \beta}{\emp}{\TAPP{\TAPP{m}}}{(\alpha \to \beta)\to [\alpha] \to [\beta]}.$
Let $R = \lambda n . \{ (v, u) \isetsep g\,v \ctxequivrel u \}$ be a member of
$\VRel{\tau}{\sigma}$ and $S \in \VRel{\rho}{\rho}$ be the constant identity relation on
$\Val{\rho}$.  Let $\phi$ map $\alpha$ to $R$ and $\beta$ to $S$.
Proposition~\ref{prop:logrel-is-compatible} gives
${(\TAPP{\TAPP{m}},
\TAPP{\TAPP{m}}) \in \TT{\den{(\alpha \to \beta) \to [\alpha]\to [\beta]}(\phi)}(n)}$
for all $n\in\NN$.

We first claim that $(f \comp g, f) \in \den{\alpha \to \beta}(\phi)(n)$ for all $n \in \NN$.
Since $f$ is a value and has a type, it must be of the form $\FUN{x}{e}$ for some $x$ and $e$.
Take $j \in \NN$, related values $(v, u) \in r(j)$, $k \leq j$ and $(E, E') \in \T{S}(k)$
two related evaluation contexts. We then have $\PR{E'[f\,u]} = \PR{E'[f(g\,v)]}$
by Theorem~\ref{thm:CIU-theorem} and the definition of relation $R$.
Using the results about $\PRk{k}{\cdot}$ and $\PR{\cdot}$ proved in
Section~\ref{app:sec:distributions} in the Appendix this gives us
\begin{align*}
  \PRk{k}{E[f(g(v))]} &\leq \sum_{\makebox[30pt]{$\scriptstyle\redpath{\pi}{f(g(v))}{w}$}}\weight{\pi}\PRk{k}{E[w]} \leq  \sum_{\makebox[30pt]{$\scriptstyle\redpath{\pi}{f(g(v))}{w}$}}\weight{\pi}\PR{E'[w]}
\end{align*}
and the last term is equal to $\PR{E'[f(g\,v)]}$ which is equal to $\PR{E'[f\,u]}$.

From this we can conclude
$(\TAPP{\TAPP{m}}\,(f \comp g),  \TAPP{\TAPP{m}}\,f) \in \TT{\den{[\alpha]\to [\beta]}(\phi)}(n)$
for all $n\in\NN$. Note that we have \emph{not yet} used the fact that $g$ is deterministic and
terminating. We do so now.

Let $xs$ be a list of elements of type $\tau$. Then induction on
the length of $xs$, using the assumption on $g$, we can derive that there exists a list $ys$ of
elements of type $\sigma$, such that $\TAPP{\TAPP{\syn{map}}}\,g\,xs \ctxequivrel ys$ and
$(xs, ys) \in \den{[\alpha]}(\phi)(n)$ for all $n$.

This gives us
${(\TAPP{\TAPP{m}}\,(f \comp g)\,xs,  \TAPP{\TAPP{m}}\,f\,ys) \in \TT{\den{[\beta]}(\phi)}(n)}$
for all $n\in\NN$. Since the relation $S$ is the identity relation we have
for all evaluation contexts $E$ of a suitable type, $(E, E) \in \T{S}(n)$ for all $n$,
which gives
\begin{align*}
  \TAPP{\TAPP{m}}\,(f \comp g)\,xs \ciuapproxrel  \TAPP{\TAPP{m}}\,f\,ys &\ctxequivrel
  \TAPP{\TAPP{m}}\,f\,(\TAPP{\TAPP{\syn{map}}}\,g\,xs)
  \ctxequivrel (\TAPP{\TAPP{m}}\,f\comp \TAPP{\TAPP{\syn{map}}}\,g)\, xs
\end{align*}
where the last equality holds because $\beta$-reduction is an equivalence.

We now conclude by using the fact that $m$ is (equivalent to) a term of the form
$\TFUN{\TFUN{\FUN{x}{e}}}$ and use Lemma~\ref{lem:func-extensionality} to conclude
${\TAPP{\TAPP{m}}\,(f \comp g) \ctxapproxrel \TAPP{\TAPP{m}}\,f\comp \TAPP{\TAPP{\syn{map}}}\,g}.$

For the other direction, we proceed analogously. The relation for $\beta$ remains the
identity relation, and the relation for $R$ for $\alpha$ is $\{(v, u) \isetsep v \ctxequivrel g\,u\}$.

\section{Extension to references}
\label{sec:extension-references}

We now sketch the extension of $\lang$ to include dynamically allocated references. For
simplicity we add ground store only, so we do not have to solve a domain equation
giving us the space of semantic types and worlds~\cite{Ahmed:phdthesis}.
We show an equivalence using
state and probabilistic choice which shows that the addition
of references to the language is orthogonal to the addition of probabilistic choice.
We conjecture that the extension with \emph{higher-order} dynamically allocated references
can be done as in earlier work on step-indexed logical
relations~\cite{Birkedal-et-al:impact-hos-local}.

We extend the language by adding the type $\REFTY\,\NAT$ and extend the grammar of terms with
$\ell \hmid \REF{e} \hmid \ASSIGN{e_1}{e_2} \hmid \LOOKUP{e}$ with $\ell$ being locations.

To model allocation we need to index the interpretation of types by worlds. To keep things
simple
a world $w\in \Wl$ is partial bijection $f$ on locations
together with, for each pair of locations $(\ell_1, \ell_2) \in f$, a relation
$R$ on numerals. We write $(\ell_1, \ell_2, R) \in w$ when the partial bijection in $w$
relates $\ell_1$ and $\ell_2$ and $R$ is the relation assigned to the pair $(\ell_1,
\ell_2)$. Technically, worlds are relations of type $\LOC^2 \times \pow{\{ \underline n
\,\mid\, n \in \NN\}}$ satisfying the conditions described above.

The operational semantics has to be extended to include heaps, which are modeled as finite
maps from locations to numerals. A pair of heaps $(h_1,h_2)$ satisfies the world $w$,
written $(h_1,h_2) \in\sat{w}$, when
$\forall (\ell_1, \ell_2, R) \in w, \left(h_1(\ell_1), h_2(\ell_2)\right) \in R$.
The interpretation of types is then extended to include worlds. The denotation of a type
is now an element of $\Wl \overset{mon}{\to} \VRel{\cdot}{\cdot}$ where the order on $\Wl$
is inclusion. Let $\WRel{\tau}{\tau'} = \Wl \overset{mon}{\to} \VRel{\tau}{\tau'}$. We
define ${\den{\Delta \vdash \REF\,\NAT}(\phi)(n)}$ as
${\lambda w . \left\{ (\ell_1, \ell_2) \isetsep (\ell_1, \ell_2, =) \in w \right\}}$
where $=$ is the equality relation on numerals.

The rest of the interpretation stays the same, apart from some quantification over
``future worlds'' in the function case to maintain monotonicity. We also need to change
the definition of the $\top\top$-closure to use the world satisfaction relation. For $r
\in \WRel{\tau}{\tau'}$ we define an indexed relation (indexed by worlds) $\T{r}$ as
\[
  \T{r}(w)(n)
  \left\{
         (E, E') \ \middle|\ \begin{array}{c}
         \forall w' \geq w,
         \forall k \leq n,
          \forall (h_1, h_2) \in \sat{w'}, \forall v_1, v_2 \in r(w')(k),\\
         \PRk{k}{\conf{h_1}{E[v_1]}} \leq \PR{\conf{h_2}{E[v_2]}}
          \end{array} \right\}
\]
and analogously for $\Te{\cdot}$.

We now sketch a proof that two modules, each implementing a counter by using a single
internal location, are contextually equivalent. The increment method is special. When
called, it chooses, uniformly, whether to increment the counter or not. The two modules
differ in the way they increment the counter. One module increments the counter by $1$,
the other by $2$.
Concretely, we show that the two counters
${\PACK{\left(\FUN{-}{\REF{\underline 1}}, \FUN{x}{\LOOKUP{x}}, \FUN{x}{\unitval \oplus \left(\ASSIGN{x}{\SUCC{\LOOKUP{x}}}\right)}\right)}}$
and
${\PACK{\left(\FUN{-}{\REF{\underline 2}}, \FUN{x}{\LOOKUP{x}~\syn{div}~\underline{2}},
  \FUN{x}{\unitval \oplus \left(\ASSIGN{x}{\SUCC({\SUCC{\LOOKUP{x}}})}\right)}\right)}}$
are contextually equivalent at type
$\EX{\alpha}{(\ONE \to \alpha) \times (\alpha \to \NAT) \times (\alpha \to \ONE)}.$
We have used $\syn{div}$ for the division function on numerals which can easily be
implemented.

The interpretation of existentials $\den{\Delta \vdash \EX{\alpha}{\tau}}(\phi)(n)$ now
maps  world $w$ to 
\[
    \left\{\left(\PACK{v}, \PACK{v'}\right) \isetsep \begin{array}{l}\exists \sigma, \sigma' \in \Type,
     \exists r \in \WRel{\sigma}{\sigma'},\\
     (v, v') \in \den{\Delta,\alpha\vdash \tau}\left(\extend{\phi}{\alpha}{r}\right)(w)(n)\end{array}\right\}
\]

To prove the counters are contextually equivalent we show them directly related in the
value relation. We choose the types $\sigma$ and $\sigma'$ to be $\REFTY\,\NAT$ and
the relation $r$ to be
${\lambda w . \left\{ (\ell_1, \ell_2) \isetsep \left(\ell_1, \ell_2, \left\{ \left(\underline n,
  \underline{2\cdot n}\right) \isetsep n \in \NN \right\}\right) \in w \right\}}$.
We now need to check all three functions to be related at the value relation.

First, the allocation functions. We only show one approximation, the other is completely
analogous. Concretely, we show that for any $n \in \NN$ and any world $w \in \Wl$ we have
$\left(\FUN{-}{\REF\,\underline 1}, \FUN{-}{\REF\,\underline 2}\right) \in \den{\ONE \to \alpha}(r)(w)(n)$.
Let $n \in \NN$ and $w \in \Wl$. Take $w' \geq w$ and related arguments $v, v'$ at type $\ONE$. We
know by construction that $v = v' = \unitval$ so we have to show that
$\left(\REF\,\underline 1, \REF\,\underline 2\right) \in \TT{\den{\alpha}(r)}(w')(n)$.

Let $w'' \geq w'$ and $j \leq n$ and take two related evaluation contexts $(E, E')$ at
$\T{\den{\alpha}(r)}(w'')(j)$ and $(h, h') \in \sat{w''}$. Let $\ell \not\in\dom{h}$ and
 $\ell' \not\in\dom{h'}$. We have
\[\PRk{j}{\conf{h}{E[\REF\,\underline 1]}} = \PRk{j}{\conf{\extend{h}{\ell}{\underline
        1}}{E[\ell]}}\]
and
${\PR{\conf{h'}{E'[\REF\,\underline 2]}} =
  \PR{\conf{\extend{h'}{\ell'}{\underline 2}}{E'[\ell']}}}$.

Let $w'''$ be $w''$ extended with $(\ell, \ell', r)$. Then the extended heaps are in
$\sat{w'''}$ and $w''' \geq w''$. Thus $E$ and $E'$ are also related at $w'''$ by
monotonicity. Similarly we can prove that $(\ell, \ell') \in \den{\alpha}(r)(j)(w''')$.
This then allows us to conclude ${\PRk{j}{\conf{\extend{h}{\ell}{\underline 1}}{E[\ell]}}
  \leq  \PR{\conf{\extend{h'}{\ell'}{\underline 2}}{E'[\ell']}}}$ which concludes the proof.

Lookup is simple so we omit it. Update is more interesting.
Let $n \in \NN$ and $w \in \Wl$. Let $\ell$ and $\ell'$ be related at
$\den{\alpha}(r)(w)(n)$. We need to show that
$\left(\unitval \oplus \left(\ASSIGN{\ell}{\SUCC{\LOOKUP{\ell}}}\right), \unitval \oplus
  \left(\ASSIGN{\ell'}{\SUCC({\SUCC{\LOOKUP{\ell'}}})}\right)\right) \in \TT{\den{\ONE}(r)}(w)(n)$.
Take $w' \geq w$, $j \leq n$ and $(h, h') \in \sat{w'}$. Take related evaluation contexts
$E$ and $E'$ at $w'$ and $j$. We have
\begin{align*}
  \textstyle
  \PRk{j}{\conf{h}{E\left[\unitval \oplus
  \left(\ASSIGN{\ell}{\SUCC{\LOOKUP{\ell}}}\right)\right]}} &=
  \textstyle
  \frac{1}{2}\PRk{j}{\conf{h}{E\left[\unitval\right]}} +
  \frac{1}{2}\PRk{j}{\conf{h}{E\left[\ASSIGN{\ell}{\SUCC{\LOOKUP{\ell}}}\right]}}\\
  \textstyle
  \PR{\conf{h'}{E'\left[\unitval \oplus
  \left(\ASSIGN{\ell'}{\SUCC{\SUCC{\LOOKUP{\ell'}}}}\right)\right]}} &=
  \textstyle
  \frac{1}{2}\PR{\conf{h'}{E'\left[\unitval\right]}} +
  \frac{1}{2}\PR{\conf{h'}{E'\left[\ASSIGN{\ell'}{\SUCC{\SUCC{\LOOKUP{\ell'}}}}\right]}}
\end{align*}
Since $\ell$ and $\ell'$ are related at $\den{\alpha}(r)(w)(n)$ and $w' \geq w$ and
$(h, h') \in \sat{w'}$ we know that $h(\ell) = \underline m$ and $h'(\ell') = \underline
{2 \cdot m}$ for some $m \in \NN$.

Thus
${\PRk{j}{\conf{h}{E\left[\ASSIGN{\ell}{\SUCC{\LOOKUP{\ell}}}\right]}} =
  \PRk{j}{\conf{h_1}{E[\unitval]}}}$
where $h_1 = \extend{h}{\ell}{\underline{m+1}}$. Also
${\PR{\conf{h'}{E'\left[\ASSIGN{\ell'}{\SUCC{\SUCC{\LOOKUP{\ell'}}}}\right]}} =
  \PR{\conf{h_2}{E'[\unitval]}}}$
where $h_2 = \extend{h'}{\ell'}{\underline{2\cdot(m+1)}}$. The fact that $h_1$ and $h_2$
are still related concludes the proof.

The above proof shows that reasoning about examples involving state and choice is possible
and that the two features are largely orthogonal.

\section{Conclusion}
\label{sec:conclusion}

We have constructed a step-indexed logical relation for a higher-order
language with probabilistic choice. In contrast to earlier work, our
language also features impredicative polymorphism and recursive types.
We also show how to extend our logical relation to a language with
dynamically allocated local state.  In future work, we will explore
whether the step-indexed technique can be used for developing models
of program logics for probabilistic computation that support
reasoning about more properties than just contextual equivalence.
We are also interested in including primitives for continuous
probability distributions.

\section*{Acknowledgments}
\label{sec:acknowledgments}
We thank Filip Sieczkowski, Kasper Svendsen and Thomas Dinsdale-Young for discussions of
various aspects of this work and the reviewers for their comments.

This research was supported in part by the ModuRes Sapere Aude Advanced Grant from The
Danish Council for Independent Research for the Natural Sciences (FNU) and in part by
Microsoft Research through its PhD Scholarship Programme.

\bibliographystyle{splncs03}
\bibliography{paper}

\clearpage
\appendix
\begin{center}
  {\huge \sc Appendix}
\end{center}

\section{Language definitions and properties}
\label{app:sec:lang-def-props}

\begin{figure}[htb]
  \begin{align*}
    \tau &\bnfeq
           \alpha\hmid\ONE\hmid \NAT \hmid \tau_1\times\tau_2\hmid\tau_1\plus\tau_2\hmid\tau_1\to\tau_2\hmid
           \REC\alpha\tau\hmid\ALL\alpha\tau \hmid \EX{\alpha}{\tau}\\
    v  &\bnfeq x\hmid  \unitval\hmid \underline{n} \hmid \PAIR {v_1}{v_2}\hmid \FUN xe \hmid \INL\,v \hmid \INR\,v
         \hmid \TFUN e \hmid \PACK{v}\\
    e  &\bnfeq x\hmid  \unitval\hmid \underline{n} \hmid \PAIR {e_1}{e_2}\hmid \FUN xe \hmid \INL\,e \hmid \INR\,e
         \hmid \TFUN e \hmid \PACK{e} \\ &\hspace{6ex}\hmid 
         \PROJ{i}\,e\hmid e_1\,e_2\hmid \MATCH{e}{x_1}{e_1}{x_2}{e_2}
         \hmid \TAPP e \\
         &\hspace{6ex}\hmid  \UNPACK{e_1}{x}{e_2} \hmid \UNFOLD{e} \hmid \FOLD{e} \hmid
           \RAND e \\
         &\hspace{6ex} \hmid \IFZERO{e}{e_1}{e_2} \hmid \PRED e \hmid \SUCC e\\
    E  & \bnfeq \empeval\hmid \PAIR {E}{e}\hmid \PAIR{v}{E} \hmid \INL\,E \hmid \INR\,E
         \hmid \PACK{E} \\ &\hspace{7ex}\hmid 
         \PROJ{i}\,E\hmid E\,e\hmid v\,E \hmid \MATCH{E}{x_1}{e_1}{x_2}{e_2}
         \hmid \TAPP E \\
         &\hspace{7ex}\hmid \UNPACK{E}{x}{e} \hmid \UNFOLD{E} \hmid \FOLD{E} \\
         &\hspace{7ex}\hmid \IFZERO{E}{e_1}{e_2} \hmid \RAND E\hmid \PRED E \hmid \SUCC E
  \end{align*}
  \caption{Types, terms and evaluation contexts. $\underline{n}$ are numerals of type $\NAT$.}
  \label{app:fig:syntax}
\end{figure}
\begin{figure}[htb]
  \centering
  \begin{mathpar}
    \inferrule{\alpha \in \Delta}{\Delta \vdash \alpha}
    \and
    \inferrule{}{\Delta \vdash \ONE}
    \and
    \inferrule{}{\Delta \vdash \NAT}
    \and
    \inferrule{\Delta \vdash \tau_1 \\ \Delta \vdash \tau_2}{\Delta \vdash \tau_1 \times \tau_2}
    \and
    \inferrule{\Delta \vdash \tau_1 \\ \Delta \vdash \tau_2}{\Delta \vdash \tau_1 \plus \tau_2}
    \and
    \inferrule{\Delta \vdash \tau_1 \\ \Delta \vdash \tau_2}{\Delta \vdash \tau_1 \to \tau_2}
    \and
    \inferrule{\Delta, \alpha \vdash \tau}{\Delta \vdash \EX{\alpha}{\tau}}
    \and
    \inferrule{\Delta, \alpha \vdash \tau}{\Delta \vdash \ALL{\alpha}{\tau}}
    \and
    \inferrule{\Delta, \alpha \vdash \tau}{\Delta \vdash \REC{\alpha}{\tau}}
  \end{mathpar}
  \caption{Well-formed types. The judgment $\Delta \vdash \tau$ expresses $\ftv{\tau} \subseteq \Delta$.}
  \label{app:fig:well-formed-types}
\end{figure}

\begin{figure}[htb]
\begin{mathpar}
\inferrule{x{:}\tau\in\Gamma\\ \Delta\vdash\Gamma}{\Delta\hmid\Gamma\vdash x:\tau}
\and
\inferrule{\Delta\vdash\Gamma }{\Delta\hmid\Gamma\vdash\unitval:\ONE}
\and
\inferrule{\Delta\hmid\Gamma\vdash e_1:\tau_1\\\Delta\hmid\Gamma\vdash e_2:\tau_2 }{\Delta\hmid\Gamma\vdash\PAIR{e_1}{e_2}:\tau_1\mathop\times\tau_2}
\and
\inferrule{\Delta\hmid\Gamma,x{:}\tau_1\vdash e:\tau_2\ }{\Delta\hmid\Gamma\vdash\FUN xe:\tau_1\mathop\to\tau_2}
\and
\inferrule{\Delta\hmid\Gamma\vdash e:\tau_1 \\ \Delta \vdash \tau_2}{\Delta\hmid\Gamma\vdash\INL\,e:\tau_1\plus \tau_2}
\and
\inferrule{\Delta\hmid\Gamma\vdash e:\tau_2 \\ \Delta \vdash \tau_1}{\Delta\hmid\Gamma\vdash\INR\,e:\tau_1\plus \tau_2}
\and
\inferrule{\Delta\hmid\Gamma, x_1 {:} \tau_1 \vdash e_1 : \tau \\ \Delta\hmid\Gamma, x_2 {:} \tau_2 \vdash e_2 : \tau\\
\Delta\hmid\Gamma \vdash e : \tau_1 \plus \tau_2}
{\Delta \hmid\Gamma \vdash \MATCH{e}{x_1}{e_1}{x_2}{e_2} : \tau}\and
\inferrule{\Delta,\alpha\hmid\Gamma\vdash e:\tau}{\Delta\hmid\Gamma\vdash\TFUN{e} : \ALL\alpha\tau}
\and
\inferrule{\Delta\hmid\Gamma\vdash e:\tau_1\times\tau_2}{\Delta\hmid\Gamma\vdash \PROJ{i}\,e:\tau_i}
\and
\inferrule{\Delta\hmid\Gamma\vdash e:\tau'\to\tau\\ \Delta\hmid\Gamma\vdash e':\tau'}{\Delta\hmid\Gamma\vdash e\,e':\tau}
\and
\inferrule{\Delta \vdash \tau_1 \\ \Delta\hmid\Gamma \vdash e : \tau\subst{\alpha}{\tau_1}}
          {\Delta\hmid\Gamma \vdash \PACK{e} : \EX{\alpha}{\tau}}
\and
\inferrule{\Delta\hmid\Gamma \vdash e : \EX{\alpha}{\tau_1} \\ \Delta \vdash \tau \\ \Delta, \alpha\hmid \Gamma, x : \tau_1 \vdash e' : \tau}
          {\Delta\hmid\Gamma \vdash \UNPACK{e}{x}{e'} : \tau}
\and
\inferrule{\Delta\hmid \Gamma \vdash e : \REC{\alpha}{\tau}}
          {\Delta\hmid \Gamma \vdash \UNFOLD{e} : \tau\subst{\alpha}{\REC{\alpha}{\tau}}}
\and
\inferrule{\Delta\hmid \Gamma \vdash e : \tau\subst{\alpha}{\REC{\alpha}{\tau}}}
          {\Delta\hmid \Gamma \vdash \FOLD{e} : \REC{\alpha}{\tau}}
\and
\inferrule{\Delta\hmid\Gamma\vdash e:\ALL\alpha\tau\\\Delta\vdash\tau'}{\Delta\hmid\Gamma\vdash\TAPP{e}:\tau\subst\alpha{\tau'}}
\and
{\color{red}{\inferrule{\Delta\hmid \Gamma \vdash e : \NAT}{\Delta\hmid\Gamma\vdash \RAND\,e : \NAT}}}\and
\inferrule{\Delta\hmid\Gamma\vdash e:\NAT\\ \Delta\hmid\Gamma\vdash e_1 : \tau \\
  \Delta\hmid\Gamma\vdash e_2 : \tau}{\Delta\hmid\Gamma\vdash\IFZERO{e}{e_1}{e_2}:\tau}
\and
\inferrule{\Delta\hmid \Gamma \vdash e : \NAT}{\Delta\hmid\Gamma\vdash \PRED e : \NAT}
\and
\inferrule{\Delta\hmid \Gamma \vdash e : \NAT}{\Delta\hmid\Gamma\vdash \SUCC e : \NAT}
\end{mathpar}
\caption{Typing of terms, where $\Gamma\bnfeq\emptyset\hmid\Gamma,x{:}\tau$ and $\Delta\bnfeq\emptyset\hmid\Delta,\alpha$.}
\label{app:fig:typing-full}
\end{figure}

\begin{figure*}
  Basic reductions $\basicstepsto{\cdot}$
  \begin{align*}
    \PROJ{i}\,\PAIR{v_1}{v_2} &\basicstepsto{1} v_i
    & 
      \UNFOLD{(\FOLD{v})} &\basicstepsto{1} v\\
    (\FUN x e)\,v &\basicstepsto{1} e \subst x v 
    &
      \UNPACK{(\PACK{v})}{x}{e} &\basicstepsto{1} e\subst{x}{v}\\
    \TAPP{(\TFUN{e})} &\basicstepsto{1} e
    & \MATCH{\INL{v}}{x_1}{e_1}{x_2}{e_2} &\basicstepsto{1} e_1\subst{x_1}{v}\\
    \color{red}\RAND\underline{n} &\color{red}\basicstepsto{\frac{1}{n}} \underline k \quad\ \  \left(k \in \{1,2,\ldots,n\}\right)
    &\MATCH{\INR{v}}{x_1}{e_1}{x_2}{e_2} &\basicstepsto{1} e_2\subst{x_2}{v}\\
    \PRED\underline{n} &\basicstepsto{1} \underline{\max\{n-1,1\}}
    &\SUCC\underline{n} &\basicstepsto{1} \underline{n+1}\\
    \IFZERO{\underline 1}{e_1}{e_2}&\basicstepsto{1} e_1 &
    \IFZERO{\SUCC \underline n}{e_1}{e_2}&\basicstepsto{1} e_2\\
    \intertext{One step reduction relation $\stepsto{\cdot}$}
    E[e]&\stepsto{p} E[e'] \qquad\qquad\text{if $e\basicstepsto{p} e'$}
  \end{align*}
  \caption{Operational semantics.}
  \label{app:fig:semantics}
\end{figure*}
The following lemma uses definitions from Section~\ref{sec:term-relat}.
\begin{lemma}
  \label{app:lem:prob-convergence-cont}
  $\Phi$ is monotone and preserves suprema of $\omega$-chains.
\end{lemma}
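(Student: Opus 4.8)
The statement to prove is that $\Phi : \Fl \to \Fl$ is monotone and preserves suprema of $\omega$-chains, where $\Fl = \Exps \to \interval$ with the pointwise order and
\[
  \Phi(f)(e) = \begin{cases} 1 & e \in \Vals\\ \sum_{e \stepsto{p} e'} p\cdot f(e') & \text{otherwise.}\end{cases}
\]
The plan is to verify both properties by unfolding the definition of $\Phi$ pointwise at an arbitrary $e \in \Exps$ and splitting on whether $e$ is a value. The value case is trivial for both properties since $\Phi(f)(e) = 1$ independently of $f$; so the content is entirely in the non-value case, where $\Phi(f)(e)$ is a weighted sum over the one-step successors of $e$.

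**Monotonicity.** First I would fix $f \leq g$ in $\Fl$, meaning $f(e') \leq g(e')$ for all $e'$, and an arbitrary $e$. If $e \in \Vals$ then $\Phi(f)(e) = 1 = \Phi(g)(e)$. Otherwise $\Phi(f)(e) = \sum_{e\stepsto{p}e'} p\cdot f(e') \leq \sum_{e\stepsto{p}e'} p\cdot g(e') = \Phi(g)(e)$, since each summand is multiplied by a nonnegative weight $p$ and the sum is over the same index set. Hence $\Phi(f) \leq \Phi(g)$.

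**Preservation of $\omega$-suprema.** Let $\seqn{f}$ be an $\omega$-chain in $\Fl$ with pointwise supremum $f = \sup_n f_n$. I must show $\Phi(f) = \sup_n \Phi(f_n)$; monotonicity already gives $\sup_n \Phi(f_n) \leq \Phi(\sup_n f_n)$, so the real work is the reverse inequality, which amounts to commuting the supremum over $n$ past the (possibly infinite) sum over successors of $e$. For $e \in \Vals$ both sides are constantly $1$. For the non-value case I would compute
\[
  \Phi(f)(e) = \sum_{e\stepsto{p}e'} p\cdot \bigl(\sup_n f_n(e')\bigr) = \sup_n \sum_{e\stepsto{p}e'} p\cdot f_n(e') = \sup_n \Phi(f_n)(e),
\]
where the middle equality is the point requiring justification.

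**Main obstacle.** The only nontrivial step is this interchange of $\sup_n$ with $\sum_{e\stepsto{p}e'}$. The cleanest justification is the monotone convergence theorem for sums of nonnegative reals (equivalently, Tonelli's theorem for counting measure): since the chain is increasing, $n \mapsto p\cdot f_n(e')$ is increasing for each fixed successor $e'$, and all terms are nonnegative, so the supremum over $n$ may be pulled outside the sum regardless of whether the successor set is finite or countably infinite. Because for each closed term $e$ the set of one-step successors $\{e' \mid e \stepsto{p} e'\}$ is at most countable and the total weight satisfies $\sum_{e\stepsto{p}e'} p \leq 1$, every partial sum is bounded by $\sup_n f_n$ and the manipulation is valid. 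I would note in passing that for this particular language the successor set is in fact finite (the only branching reduction is $\RAND\,\underline n$), so the interchange reduces to the elementary fact that $\sup$ commutes with finite sums of increasing sequences and no measure-theoretic machinery is strictly needed; this gives a fully self-contained argument.
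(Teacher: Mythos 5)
Your proof is correct and follows essentially the same route as the paper's: monotonicity from the pointwise order and nonnegativity of the weights, and continuity by splitting on value versus non-value terms and exchanging the supremum with the sum over one-step successors. The paper justifies that exchange exactly as in your closing remark---the sum is finite and sums/products of reals are continuous---so your extra appeal to monotone convergence is a harmless generalization of the same step.
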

\begin{proof}
  Since the order in $\Fl$ is pointwise and multiplication and addition are monotone it is
  easy to see that $\Phi$ is monotone.

  To show that it is continuous let $\{f_n\}_{n\in\omega}$ be an $\omega$-chain in
  $\Fl$. If $e$ is a value the result is immediate. Otherwise we have
  \begin{align*}
    \Phi\left(\sup_{n\in\omega} f_n\right)(e) &=
      \displaystyle\sum_{e \stepsto{p}{e'}} p \cdot \left(\sup_{n\in\omega} f_n\right)\left(e'\right)
    \intertext{and since suprema in $\Fl$ are computed pointwise we have}
        &=
            \displaystyle\sum_{e \stepsto{p}{e'}} p \cdot \sup_{n\in\omega} \left(f_n(e')\right)
  \end{align*}
  Using the fact that sum and product are continuous and that the sum in the definition of
  $\Phi$ is finite we get
  \begin{align*}
    \Phi\left(\sup_{n\in\omega} f_n\right)(e) &=
      \displaystyle\sup_{n\in\omega}\left(\sum_{e \stepsto{p}{e'}} p \cdot f_n(e')\right)\\
    &= \sup_{n\in\omega}\Phi\left(f_n\right)(e) = \left(\sup_{n\in\omega}\Phi(f_n)\right)\left(e\right)
  \end{align*}
\end{proof}

\begin{example}
  \label{app:ex:prob-of-convergence}
  Let us compute probabilities of termination of some example programs.

  \begin{itemize}
  \item If $v \in \Vals$ then by definition $\PR{v} = 1$.
  \item If $e \in \Exps \setminus \Vals$ is stuck then $\PR{e} = 0$ by definition.
  \item Suppose there exists a cycle $e \stepsto{1} e_1 \stepsto{1} e_2 \stepsto{1} \cdots
    \stepsto{1} e_n \stepsto{1} e$. Then $\PR{e} = \PR{e_1} = \cdots = \PR{e_n} = 0$.

    It follows from the assumption that none of $e_k$ are values and since the sum of
    outgoing weights is at most $1$ we have that for each $e_k$ and $e$ all other weights
    must be $0$. We thus get that $\PR{e} = \PR{e_1} = \cdots = \PR{e_n}$ by simply
    unfolding the fixed point $n$-times. To show that they are all $0$ we use Scott
    induction. Define
    \[\Sl = \left\{ f \in \Fl \middle| f(e) = f(e_1) = f(e_2) = \ldots = f(e_n)
    = 0 \right\}.
    \]
    Clearly $\Sl$ is an admissible subset of $\Fl$ and $\bot \in \Sl$.  Using the above
    existence of the cycle of reductions it is easy to show that $\Sl \subseteq
    \Phi\left[\Sl\right]$. Hence by the principle of Scott induction we have $\PR{\cdot}
    \in \Sl$ and thus $\PR{e} = \PR{e_1} = \ldots = \PR{e_n} = 0$.
  \end{itemize}
\end{example}
This example also shows that we do really want the least fixed point of $\Phi$, since this
allows us to use Scott-induction and prove that diverging terms have zero probability of
termination.

\begin{remark}
  \label{app:rem:probs-of-convergence}
  It is perhaps instructive to consider the relationship to the termination predicate when we do
  not have weights on reductions. In such a case we can consider two extremes, may- and
  must-termination predicates. These can be considered to be maps $\Exps \to \mathbf{2}$
  where $\mathbf{2}$ is the boolean lattice $0 \leq 1$. Let $\Bl = \Exps \to \mathbf{2}$. Since
  $\mathbf{2}$ is a complete lattice so is $\Bl$. In particular it is a pointed
  $\omega$-cpo. We can define may-termination as the least fixed point of $\Psi : \Bl \to
  \Bl$ defined as
  \begin{align*}
    \Psi(f)(e) =
    \begin{cases}
      1 &\text{if } e \in \Vals\\
      \displaystyle \max_{e \leadsto e'}f(e') &\text{otherwise}
    \end{cases}.
  \end{align*}
  Observe again that if $e$ is stuck then $\Psi(f)(e) = 0$ since the maximum of an empty
  set is the least element by definition.

  Must-termination is slightly different. We need a special case for stuck terms.
  \begin{align*}
    \Psi'(f)(e) =
    \begin{cases}
      1 &\text{if } e \in \Vals\\
      \displaystyle \min_{e \leadsto e'}f(e') &\exists e' \in \Exps, p \in \interval, e \stepsto{p}{e'}\\
      0 & \text{otherwise}
    \end{cases}
  \end{align*}

  Let $\downarrow$ be the least fixed point of $\Psi$ and $\Downarrow$ the least fixed
  point of $\Psi'$. An additional property that holds for $\downarrow$ and $\Downarrow$,
  because of the fact that $\mathbf{2}$ is discrete, is
  that for a given $e$, if $e\hspace{-3pt}\downarrow\, = 1$ then there is a natural number $n$, such that
  $\Psi^n(\bot)(e) = 1$, i.e. if it terminates we can observe this in finite
  time. This is because if an increasing sequence in $\mathbf{2}$ has supremum $1$, then
  the sequence must be constant $1$ from some point onward.

  In contrast, if $\PR{e} = 1$ it is not necessarily the case that there is a natural
  number $n$ with $\Phi^n(\bot)(e) = 1$ because it might be the case that $1$ is
  only reached in the limit.
\end{remark}

The next lemma uses the abbreviation $;$ defined in Section~\ref{sec:inhabitants-forall}.
\begin{lemma}
  \label{app:lem:sequence-product}
  For all terms $e, e' \in \Exps$, $\PR{e; e'} = \PR{e} \cdot \PR{e'}$.
\end{lemma}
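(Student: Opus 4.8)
The plan is to regard $e;e'$ as $E[e]$ for the evaluation context $E = (\FUN{\_}{e'})\empeval$ and to read off its reductions: while $e$ is not a value the only reductions of $(\FUN{\_}{e'})e$ are $(\FUN{\_}{e'})e \stepsto{p} (\FUN{\_}{e'})e''$ coming from $e \stepsto{p} e''$, and once $e$ reaches a value $v$ there is a single, choice-free, weight-$1$ step $(\FUN{\_}{e'})v \stepsto{1} e'$ (the substitution is vacuous since $\_$ does not occur in $e'$). I would prove the two inequalities separately, each by Scott induction on the least fixed point $\PR{\cdot} = \sup_n \Phi^n(\bot)$, with $e'$ held fixed throughout.

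For $\PR{e;e'} \le \PR{e}\cdot\PR{e'}$, I would apply Scott induction to
\[ \Sl = \{ f \in \Fl \isetsep f \le \PR{\cdot} \text{ and } \forall e,\ f((\FUN{\_}{e'})e) \le \PR{e}\cdot\PR{e'} \}. \]
This set is clearly admissible and contains $\bot$. For closure under $\Phi$, monotonicity together with the fixed-point equation $\Phi(\PR{\cdot}) = \PR{\cdot}$ gives $\Phi(f) \le \PR{\cdot}$; the second clause follows by the case split above, using $f \le \PR{\cdot}$ in the value case (so $\Phi(f)((\FUN{\_}{e'})v) = f(e') \le \PR{e'}$) and, in the non-value case, $\Phi(f)((\FUN{\_}{e'})e) = \sum_{e \stepsto{p} e''} p\cdot f((\FUN{\_}{e'})e'') \le \PR{e'} \cdot \sum_{e \stepsto{p} e''} p\cdot\PR{e''} = \PR{e'}\cdot\PR{e}$, where the last equality is the fixed-point equation for $\PR{\cdot}$ at the non-value $e$. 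Hence $\PR{\cdot} \in \Sl$, which is the desired bound.

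The reverse inequality is the main obstacle, because Scott induction naturally yields only \emph{upper} bounds on a least fixed point. The trick I would use is to induct on the approximants of the inner probability $\PR{e}$ rather than on $\PR{e;e'}$: fixing $e'$, I apply Scott induction to
\[ \Sl' = \{ f \in \Fl \isetsep \forall e,\ f(e)\cdot\PR{e'} \le \PR{(\FUN{\_}{e'})e} \}, \]
whose admissibility relies on the Scott-continuity of multiplication by the constant $\PR{e'}$. For closure under $\Phi$, the value case needs $\PR{e'} \le \PR{(\FUN{\_}{e'})v}$, which is in fact an equality by Lemma~\ref{lem:choice-free-reductions-dont-change} since $(\FUN{\_}{e'})v \stepstopure e'$; the non-value case again uses that the reductions of $(\FUN{\_}{e'})e$ are exactly those of $e$ plugged into $E$, together with the fixed-point equation for $\PR{\cdot}$. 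Concluding $\PR{\cdot} \in \Sl'$ gives $\PR{e}\cdot\PR{e'} \le \PR{e;e'}$, and combining the two inequalities yields the claim.
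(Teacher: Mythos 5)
Your proposal is correct, and its overall shape is the same as the paper's: both inequalities are proved by Scott induction on the least-fixed-point presentation of $\PR{\cdot}$, the $\leq$ direction with the invariant ``$f \leq \PR{\cdot}$ and $f(e;e') \leq \PR{e}\cdot\PR{e'}$'' (the paper quantifies $e'$ inside the set where you fix it outside, which changes nothing), and the $\geq$ direction with the same key trick of inducting on approximants of the \emph{inner} probability so that the approximant $f$ sits on the small side of the inequality.

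Where you genuinely differ is the invariant for the $\geq$ direction, and there your version is in fact the more careful one. The paper Scott-inducts on
$\Sl = \left\{ f \in \Fl \isetsep \forall E \in \Ectxts, \forall e \in \Exps, \forall v \in \Vals,\ \PR{E[e]} \geq f(e)\cdot\PR{E[v]} \right\}$
and then specializes $E = \empeval\,;e'$ using $\PR{v;e'} = \PR{e'}$. Read literally, this set is \emph{not} closed under $\Phi$: in the value case one would need $\PR{E[u]} \geq \PR{E[v]}$ for all values $u, v$, which fails for contexts that distinguish values --- take $E = \empeval\,\unitval$, $u = \FUN{x}{\Omega}$, $v = \FUN{x}{x}$ (with $\Omega$ any diverging term), giving $0 \geq 1$. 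Indeed $\PR{\cdot}$ itself violates the defining condition of $\Sl$: with $e = \left(\FUN{x}{x}\right) \oplus \left(\FUN{x}{\Omega}\right)$, $E = \empeval\,\unitval$ and $v = \FUN{x}{x}$ one gets $\PR{E[e]} = \tfrac{1}{2} < 1 = \PR{e}\cdot\PR{E[v]}$. The induction only goes through for contexts $E$ such that $\PR{E[v]}$ is independent of the value $v$, which is exactly the situation for $E = \empeval\,;e'$ because $(\FUN{\_}{e'})v \stepstopure e'$. Your invariant $\forall e,\ f(e)\cdot\PR{e'} \leq \PR{(\FUN{\_}{e'})e}$ hard-wires this independence from the outset, replacing $\PR{E[v]}$ by $\PR{e'}$ via Lemma~\ref{lem:choice-free-reductions-dont-change}, so every case of your closure argument genuinely holds. (The general, context-indexed statement the paper is reaching for is really the bind property of Proposition~\ref{app:prop:dist-bind}, where $\PR{E[v]}$ is weighted by $\Dl(e)(v)$ rather than paired with an arbitrary $v$; that formulation would also give an induction-free proof of this lemma, since $\PR{e;e'} = \sum_{v\in\Vals}\Dl(e)(v)\cdot\PR{v;e'} = \PR{e}\cdot\PR{e'}$.)
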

\begin{proof}
  We prove two approximations separately, both of them by Scott induction.
  \begin{description}
  \item[$\leq$] Consider the set
    \begin{align*}
      \Sl = \left\{f \in \Fl \ \middle|\ \begin{array}{l}
          f \leq \PR{\cdot} \land \forall e, e' \in \Exps,\\
          f(e ; e') \leq \PR{e} \cdot \PR{e'} \end{array} \right\}.
    \end{align*}
    It is easy to see that $\Sl$ contains $\bot$ and is closed under $\omega$-chains, so we only need to
    show that it is preserved by $\Phi$. The first condition is trivial to check since
    $\PR{\cdot}$ is a fixed point of $\Phi$. Let $f \in \Fl$ and $e, e' \in \Exps$. If $e
    \in \Vals$ then $\Phi(f)(e ; e') = f(e')$ on account of one $\beta$-reduction. By
    assumption $f(e') \leq \PR{e'}$ and by definition we have $\PR{e} = 1$.

    If $e$ is not a value we have
    $\Phi(f)(e; e') = \sum_{e \stepsto{p}{e''}} p \cdot f\left(e'' ; e'\right) \leq
    \sum_{e \stepsto{p}{e''}} p \cdot \PR{e''}\cdot\PR{e'} = \PR{e}\cdot\PR{e'}$.

    Thus we can conclude by Scott induction that $\PR{\cdot} \in \Sl$.
  \item[$\geq$] For this direction we consider the set
    \begin{align*}
      \Sl = \left\{f \in \Fl \ \middle|\ 
        \begin{array}{l}
          \forall E \in \Ectxts, e \in \Exps, v \in \Vals,\\
          \PR{E[e]} \geq f(e) \cdot \PR{E[v]}\end{array} \right\}.
    \end{align*}
    It is easy to see that it is admissible and closed under $\Phi$. Hence $\PR{\cdot} \in
    \Sl$.  Thus we have, taking $E = - ; e'$ and any value $v$, that $\PR{e} \cdot \PR{v ;
      e'} \leq \PR{e; e'}$ and it is easy to see that $\PR{v ; e'} = \PR{e'}$.
  \end{description}
\end{proof}

\begin{lemma}
  \label{app:lem:pure-reductions-dont-change}
  Let $e, e' \in \Exps$. If $e \stepstozeropure e'$ then for all $k$, $\PRk{k}{e} =
  \PRk{k}{e'}$.
\end{lemma}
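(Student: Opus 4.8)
The plan is to reduce the statement to the case of a single reduction step and then read the result off directly from the definitions of $\Psi$ and $\reds{\cdot}$, using that all non-choice reductions are deterministic.

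First I would reduce to a single step. By definition $e \stepstozeropure e'$ supplies a path $e = e_0 \stepsto{1} e_1 \stepsto{1} \cdots \stepsto{1} e_m = e'$ every step of which is both choice-free and unfold-fold free (hence of weight $1$); since equality of $\PRk{k}{\cdot}$ is transitive it suffices to treat a single such step $e \stepsto{1} e'$, where each $e_i \stepstozeropure e_{i+1}$ holds for the single-step path. Such a step reduces the unique redex of $e$, which is neither a $\RAND$ redex nor an $\UNFOLD(\FOLD{\cdot})$ redex; call it \emph{administrative}. The key structural fact, resting on uniqueness of the decomposition of a non-value term as $E[r]$ together with determinism of every basic reduction except $\RAND$, is that an administrative redex has a \emph{unique} reduct. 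Thus $e'$ is the unique administrative reduct of $e$, and $e$ is not a value.

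Next I would record two consequences of administrativity. The first is that the ``terminates in zero counted steps'' predicate is preserved: there is a value $v$ with $e \stepstozeropure v$ iff there is a value $v$ with $e' \stepstozeropure v$. The direction $\Leftarrow$ is by prepending the administrative step $e \stepsto{1} e'$; for $\Rightarrow$, any both-free path out of the non-value $e$ is nonempty and its first step is administrative, hence lands in $e'$ by determinism, leaving a both-free path out of $e'$. The second consequence is that the third clause of Definition~\ref{def:red-tree} applies, giving $\reds{e} = \{(e\stepsto{1}e')\cdot\pi \isetsep \pi \in \reds{e'}\}$; the map $\pi \mapsto (e\stepsto{1}e')\cdot\pi$ is then a bijection $\reds{e'} \to \reds{e}$ preserving weights (the prepended step has weight $1$) and last expressions.

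Finally I would compute. For $k = 0$ both sides equal $\bot(e) = 0$. For $k \geq 1$, set $g = \Psi^{k-1}(\bot) = \PRk{k-1}{\cdot}$, so $\PRk{k}{e} = \Psi(g)(e)$ and similarly for $e'$. If some value is $\stepstozeropure$-reachable from $e$ then, by the first consequence, also from $e'$, whence $\Psi(g)(e) = 1 = \Psi(g)(e')$; otherwise $\Psi(g)(e) = \sum_{\pi \in \reds{e}} \weight{\pi}\, g(\last{\pi})$, and reindexing along the weight- and last-expression-preserving bijection turns this into $\sum_{\pi \in \reds{e'}} \weight{\pi}\, g(\last{\pi}) = \Psi(g)(e')$. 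Either way $\PRk{k}{e} = \PRk{k}{e'}$, which completes the single-step case and hence, by transitivity along the path, the lemma. I expect the main obstacle to be the clean justification of the two consequences of a single administrative step—especially the equivalence of the zero-step termination predicates—since that is exactly where determinism of non-choice reduction and the reading of $\stepstozeropure$ as a single both-free path enter; once these are in hand the manipulation of $\Psi$ is immediate and needs no genuine induction on the step index $k$.
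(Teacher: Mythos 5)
Your proposal is correct and follows essentially the same route as the paper's own proof: a case split on whether a value is reachable by a path with no counted reductions, combined with the weight- and last-expression-preserving bijection between $\reds{e}$ and $\reds{e'}$ arising from the third clause of Definition~\ref{def:red-tree}. The differences are presentational—you factor the argument through single administrative steps and spell out the determinism argument needed to transfer the ``no value reachable'' case between $e$ and $e'$, which the paper's proof uses implicitly in its ``if not'' case.
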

\begin{proof}
  When $k$ is $0$ the result is immediate. So assume $k > 0$. We need to distinguish two
  cases.
  \begin{itemize}
  \item If there exists $v' \in \Vals$ such that $e' \stepstozeropure v'$ then
    we also have $e \stepstozeropure v'$ and we are done.
  \item If not, then we need to inspect the definition of $\reds{e}$ and $\reds{e'}$. It
    is easy to see that any path $\pi \in \reds{e'}$ corresponds to a unique path
    $\pi'\cdot\pi$ in $\reds{e}$. It is similarly easy to see that $\weight{\pi} =
    \weight{\pi'\cdot\pi}$ and that $\last{\pi} = \last{\pi'\cdot\pi}$. Thus we have
    that $\PRk{k}{e} = \PRk{k}{e'}$.
  \end{itemize}
\end{proof}

\begin{proposition}
  \label{app:prop:prob-smaller}
  For each $e \in \Exps$ we have $\PR{e} \leq \sup_{k\in\omega}\left(\PRk{k}{e}\right)$.
\end{proposition}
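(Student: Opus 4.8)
The plan is to prove the inequality by Scott induction on the least fixed point $\PR{\cdot} = \sup_{n}\Phi^n(\bot)$, exactly as indicated in the sketch. I would work with the set
\[
  \Sl = \left\{ f \in \Fl \isetsep \forall e \in \Exps,\ f(e) \leq \sup_{k\in\omega}\PRk{k}{e} \right\}.
\]
Two of the three Scott-induction obligations are immediate: $\bot \in \Sl$ since $\bot(e) = 0$, and $\Sl$ is admissible because the bound on the right-hand side does not depend on $f$, so it is preserved under pointwise suprema of $\omega$-chains. Hence the whole argument reduces to showing that $\Sl$ is closed under $\Phi$.

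For the closure step, fix $f \in \Sl$ and a closed term $e$, and bound $\Phi(f)(e)$ by cases on the redex of $e$. If $e \in \Vals$ then $\Phi(f)(e) = 1$, and since the empty path witnesses $e \stepstozeropure e$ we have $\PRk{1}{e} = 1$, so the bound holds. If $e$ is stuck then $\Phi(f)(e) = 0$ and there is nothing to prove. Otherwise $e = E[r]$ for a unique redex $r$, and I would split on whether the induced reduction is unfold-fold, choice, or an ordinary $\beta$-reduction. In the unfold-fold case the reduction is deterministic, $e \stepsto{1} e'$, so $\Phi(f)(e) = f(e')$; combining $f \in \Sl$ with Lemma~\ref{lem:prob-unfold-fold-choice}, which gives $\PRk{k+1}{e} = \PRk{k}{e'}$, yields $f(e') \leq \sup_k \PRk{k}{e'} = \sup_k \PRk{k+1}{e} \leq \sup_k \PRk{k}{e}$. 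The ordinary $\beta$-reduction case is analogous but easier: such a reduction is exactly a $\stepstozeropure$ step, so Lemma~\ref{lem:pure-reductions-dont-change} gives $\PRk{k}{e} = \PRk{k}{e'}$ for every $k$, and the bound follows directly.

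The main obstacle, such as it is, lies in the choice case $e = E[\RAND \underline n]$, where $\Phi(f)(e) = \frac{1}{n}\sum_{i=1}^{n} f(E[\underline i])$ is genuinely a convex combination of $n$ successors rather than a single one. Here I would invoke the choice clause of Lemma~\ref{lem:prob-unfold-fold-choice}, namely $\PRk{k+1}{e} = \frac{1}{n}\sum_{i=1}^{n}\PRk{k}{E[\underline i]}$, and the key arithmetic point is that the supremum over $k$ commutes with this finite average: since each $\{\PRk{k}{E[\underline i]}\}_{k}$ is an increasing chain, $\sup_k \frac{1}{n}\sum_{i} \PRk{k}{E[\underline i]} = \frac{1}{n}\sum_{i} \sup_k \PRk{k}{E[\underline i]}$. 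Applying $f \in \Sl$ termwise then gives $\Phi(f)(e) \leq \frac{1}{n}\sum_{i} \sup_k \PRk{k}{E[\underline i]} = \sup_k \PRk{k+1}{e} \leq \sup_k \PRk{k}{e}$. Once closure under $\Phi$ is established, Scott induction yields $\PR{\cdot} \in \Sl$, which is precisely the claim.
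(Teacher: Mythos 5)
Your proposal is correct and follows essentially the same route as the paper's own proof: the same Scott-induction set $\Sl$, the same case analysis on the kind of reduction from $e$ (value, non-counted step via Lemma~\ref{lem:pure-reductions-dont-change}, unfold-fold and choice via Lemma~\ref{lem:prob-unfold-fold-choice}), and the same exchange of the supremum with the finite average in the choice case. The only difference is that you make the stuck-term case and the final inequality $\sup_k \PRk{k+1}{e} \leq \sup_k \PRk{k}{e}$ explicit, which the paper leaves implicit.
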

\begin{proof}
  We use Scott induction. Let $\Sl$ be the set
  \[
  \Sl = \left\{ f \in \Fl \ \middle|\ \forall e, f(e) \leq \sup_{k\in\omega}\left(\PRk{k}{e}\right) \right\}
  \]
  It is easy to see that $\Sl$ is closed under limits of $\omega$-chains and that $\bot
  \in \Sl$ so we only need to show that $\Sl$ is closed under $\Phi$. Let $f \in \Sl$ and
  $e$ an expression. We have
  \begin{align*}
    \Phi(f)(e) =
    \begin{cases}
      1 &\text{if } e \in \Vals\\
      \displaystyle\sum_{e \stepsto{p}{e'}} p \cdot f\left(e'\right) &\text{otherwise}
    \end{cases}
  \end{align*}
  and we consider $4$ cases.
  \begin{itemize}
  \item $e\in\Vals$. We always have $e \stepstozeropure e$ and so we have that for any $k > 0$,
    $\PRk{k}{e} = 1$ which is the top element.
  \item $e\stepsto{p} e'$ and the reduction is not unfold-fold or choice.  Then we use
    Lemma~\ref{lem:pure-reductions-dont-change} to get $\PRk{k}{e} = \PRk{k}{e'}$ for all $k$.
    Similarly we have that $\Phi(f)(e) = f(e')$ from the
    definition of $\Phi$. Thus we can use the assumption that $f\in\Sl$.
  \item $e\stepsto{1} e'$ and the reduction is unfold-fold. This follows directly from the
    definition of $\reds{\cdot}$, $\Psi$ and the assumption that $f\in\Sl$.
  \item The reduction from $e$ is a choice reduction. Suppose $e$ reduces to $e_1, e_2,
    \ldots, e_n$. Then we know from the operational semantics that the weights are all
    $\tfrac{1}{n}$. We get
    \begin{align}
      \label{app:eq:phi-expand}
      \Phi(f)(e) = \sum_{i=1}^n\frac{1}{n} f(e_i)
      \qquad \text{ and }\qquad
      \PRk{k+1}{e} = \sum_{i=1}^n\frac{1}{n} \PRk{k}{e_i}.
    \end{align}
    Using the fact that $\PRk{k}{e_i}$ is an increasing chain in $k$ for
    each $e_i$ we have
    \begin{align}
      \label{app:eq:sup-expand}
    \sup_{k\in\omega}\left(\PRk{k}{e}\right) &= 
                                  \sum_{i=1}^n\frac{1}{n} \sup_{k\in\omega}\left(\PRk{k}{e_i}\right)
    \end{align}
    By assumption $f(e_i) \leq \sup_{k\in\omega}\left(\PRk{k}{e_i}\right)$ for all $i \in
    \{1, 2, \ldots, n\}$ which concludes the proof using \eqref{app:eq:phi-expand} and
    \eqref{app:eq:sup-expand}.
  \end{itemize}
\end{proof}

\subsubsection{Interpretation of types and the logical relation}

\begin{figure}[htb]
\begin{mathpar}
\inferrule{x{:}\tau\in\Gamma}{\Delta\hmid\Gamma\vdash x\mathrel\R x:\tau}
\and
\inferrule{  }{\Delta\hmid\Gamma\vdash\unitval \mathrel\R \unitval:\ONE}
\and
\inferrule{\Delta\hmid\Gamma\vdash e_1\mathrel\R e_1':\tau_1\\\Delta\hmid\Gamma\vdash e_2\mathrel\R e_2':\tau_2 }{\Delta\hmid\Gamma\vdash\PAIR{e_1}{e_2}\mathrel\R \PAIR{e_1'}{e_2'}:\tau_1\times\tau_2}
\and
\inferrule{\Delta\hmid\Gamma,x{:}\tau_1\vdash e\mathrel\R e':\tau_2 }{\Delta\hmid\Gamma\vdash\FUN xe\mathrel\R\FUN x{e'}:\tau_1\to\tau_2}
\and
\inferrule{\Delta\hmid\Gamma\vdash e\mathrel\R e':\tau_1}{\Delta\hmid\Gamma\vdash\INL{e}\mathrel\R \INL{e'}:\tau_1\plus \tau_2}
\and
\inferrule{\Delta\hmid\Gamma\vdash e\mathrel\R e':\tau_2}{\Delta\hmid\Gamma\vdash\INR{e}\mathrel\R \INR{e'}:\tau_1\plus \tau_2}\and
\inferrule{\Delta\hmid\Gamma, x_1 {:} \tau_1 \vdash e_1 \mathrel\R e_1': \tau \\ 
           \Delta\hmid\Gamma, x_2 {:} \tau_2 \vdash e_2 \mathrel\R e_2' : \tau\\
\Delta\hmid\Gamma \vdash e \mathrel\R e' : \tau_1 \plus \tau_2}
{\Delta\hmid\Gamma \vdash \MATCH{e}{x_1}{e_1}{x_2}{e_2}\mathrel\R \MATCH{e'}{x_1}{e_1'}{x_2}{e_2'} : \tau}\and
\inferrule{\Delta,\alpha\hmid\Gamma\vdash e\mathrel\R e':\tau}{\Delta\hmid\Gamma\vdash\TFUN{e}\mathrel\R\TFUN{e'} : \ALL\alpha\tau}
\and
\inferrule{\Delta \vdash \tau_1 \\ \Delta\hmid\Gamma \vdash e \mathrel\R e' : \tau\subst{\alpha}{\tau_1}}
          {\Delta\hmid\Gamma \vdash (\PACK{e})
            \mathrel\R (\PACK{e'}) : \EX{\alpha}{\tau}}
\and
\inferrule{\Delta\hmid\Gamma \vdash e_1 \mathrel\R e_1': \EX{\alpha}{\tau_1} \\ \Delta \vdash \tau
  \\ \Delta, \alpha\hmid \Gamma, x : \tau_1 \vdash e \mathrel\R e': \tau}
          {\Delta\hmid\Gamma \vdash (\UNPACK{e_1}{x}{e}) \mathrel\R (\UNPACK{e_1'}{x}{e'}) : \tau}
\and
\inferrule{\Delta\hmid\Gamma\vdash e\mathrel\R e':\tau_1\times\tau_2}{\Delta\hmid\Gamma\vdash \PROJ{i}\,e\mathrel\R\PROJ{i}\,e':\tau_i}
\and
\inferrule{\Delta\hmid\Gamma\vdash e_1\mathrel\R e_1':\tau'\to\tau\\ \Delta\hmid\Gamma\vdash e_2\mathrel\R e_2':\tau'}{\Delta\hmid\Gamma\vdash e_1\,e_2\mathrel\R e_1'\,e_2':\tau}
\and
\inferrule{\Delta\hmid \Gamma \vdash e \mathrel\R e' : \REC{\alpha}{\tau}}
          {\Delta\hmid \Gamma \vdash \UNFOLD{e}\mathrel\R \UNFOLD{e'}: \tau\subst{\alpha}{\REC{\alpha}{\tau}}}
\and
\inferrule{\Delta\hmid \Gamma \vdash e\mathrel\R e' : \tau\subst{\alpha}{\REC{\alpha}{\tau}}}
          {\Delta\hmid \Gamma \vdash \FOLD{e}\mathrel\R \FOLD{e'} : \REC{\alpha}{\tau}}
\and
\inferrule{\Delta\hmid\Gamma\vdash e\mathrel\R e':\ALL\alpha\tau}{\Delta\hmid\Gamma\vdash\TAPP{e}\mathrel\R\TAPP{e'}:\tau\subst\alpha{\tau'}}\ \ftv{\tau'}\subseteq\Delta
\and
\inferrule{\Delta\hmid\Gamma\vdash e \mathrel\R e': \NAT }{\Delta\hmid\Gamma\vdash \RAND
  e\mathrel\R\RAND e' :\NAT}\\
\inferrule{\Delta\hmid\Gamma\vdash e \mathrel\R e': \NAT }{\Delta\hmid\Gamma\vdash \PRED
  e\mathrel\R\PRED e' :\NAT}
\and
\inferrule{\Delta\hmid\Gamma\vdash e \mathrel\R e': \NAT }{\Delta\hmid\Gamma\vdash \SUCC
  e\mathrel\R\SUCC e' :\NAT}\\
\inferrule{\Delta\hmid\Gamma \vdash e \mathrel\R e' : \NAT\\
           \Delta\hmid\Gamma,  \vdash e_1 \mathrel\R e_1': \tau \\ 
           \Delta\hmid\Gamma,  \vdash e_2 \mathrel\R e_2' : \tau}
{\Delta \hmid\Gamma \vdash \IFZERO{e}{e_1}{e_2}\mathrel\R \IFZERO{e'}{e_1'}{e_2'} : \tau}
\end{mathpar}
\caption{Compatibility properties of type-indexed relations}
\label{app:fig:compatibility}
\end{figure}

\begin{figure}[htb]
  \centering
  \begin{align*}
    \den{\Delta \vdash \alpha}(\phi) &= \phi_r(\alpha)\\
    \den{\Delta \vdash \NAT}(\phi)(n) &= \left\{(\underline k, \underline k) \ \middle|\  k
                                        \in \NN, k > 0\right\} \\
    \den{\Delta \vdash \tau \times \sigma}(\phi)(n) &= 
    \left\{\left(\PAIR{v}{u},
        \PAIR{v'}{u'}\right) \ \middle|\  \begin{array}{l}
        (v, v') \in \den{\Delta \vdash \tau}(\phi)(n),\\
        (u, u') \in \den{\Delta\vdash \sigma}(\phi)(n)\end{array}\right\} \\
    \den{\Delta \vdash \tau \plus \sigma}(\phi)(n) &= 
    \left\{\left(\INL\,v, \INL\,v'\right) \ \middle|\  (v, v') \in \den{\Delta \vdash
                                                     \tau}(\phi)(n)\right\}\\ &\quad\ \cup
    \left\{\left(\INR\,v, \INR\,v'\right) \ \middle|\  (v, v') \in \den{\Delta \vdash
                                                     \sigma}(\phi)(n)\right\} \\
    \den{\Delta \vdash \tau \to \sigma}(\phi)(n) &= \left\{\left(\FUN{x}{e},
    \FUN{y}{e'}\right) \ \middle|\  \begin{array}{l}\forall j \leq n, 
     \forall (v, v') \in \den{\Delta \vdash \tau}(\phi)(j),\\
     ((\FUN{x}{e})\,v, (\FUN{y}{e'})\,v') \in \TT{\den{\Delta\vdash \sigma}(\phi)}(j)\end{array}\right\} \\
    \den{\Delta \vdash \ALL{\alpha}{\tau}}(\phi)(n) &= \left\{\left(\TFUN{e},
    \TFUN{e'}\right) \ \middle|\ \begin{array}{l}\forall \sigma, \sigma' \in \Type,
     \forall r \in \VRel{\sigma}{\sigma'},\\
     (e, e') \in \TT{\den{\Delta,\alpha\vdash
         \tau}\left(\extend{\phi}{\alpha}{r}\right)}(n)\end{array}\right\}\\
    \den{\Delta \vdash \EX{\alpha}{\tau}}(\phi)(n) &= \left\{\left(\PACK{v},
    \PACK{v'}\right) \ \middle|\  \begin{array}{l}\exists \sigma, \sigma' \in \Type,
     \exists r \in \VRel{\sigma}{\sigma'},\\
     (v, v') \in \den{\Delta,\alpha\vdash \tau}\left(\extend{\phi}{\alpha}{r}\right)(n)
      \end{array}\right\}\\
    \den{\Delta \vdash \REC{\alpha}{\tau}}(\phi)(0) &=
         \Val{\phi_1(\REC{\alpha}{\tau})}\times \Val{\phi_2(\REC{\alpha}{\tau})}\\
    \den{\Delta \vdash \REC{\alpha}{\tau}}(\phi)(n+1) &= \left\{\left(\FOLD{v},
    \FOLD{v'}\right) \ \middle|\   
     (v, v') \in \den{\Delta,\alpha\vdash \tau}\left(\extend{\phi}{\alpha}{\den{\Delta\vdash \REC{\alpha}{\tau}}(\phi)}\right)(n)\right\}
 \end{align*}
  \caption{Interpretation of types.}
  \label{app:fig:logical-relation}
\end{figure}

\begin{lemma}
  \label{app:lem:well-defined}
  The interpretation of types in Fig.~\ref{fig:logical-relation} is well defined. In
  particular the interpretation of types is non-expansive.
\end{lemma}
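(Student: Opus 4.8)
The plan is to prove all the required properties simultaneously by induction on the derivation of $\Delta \vdash \tau$, establishing for every $\phi \in \VRelD{\Delta}$ that: (i) $\den{\Delta\vdash\tau}(\phi)$ is a well-defined sequence of relations on $\Val{\phi_1(\tau)} \times \Val{\phi_2(\tau)}$; (ii) this sequence is decreasing, so that it indeed lands in $\VRel{\phi_1(\tau)}{\phi_2(\tau)}$; and (iii) it is \emph{non-expansive}, meaning that if $\phi, \psi$ agree on their syntactic-type components and satisfy $\phi_r(\beta)(k) = \psi_r(\beta)(k)$ for all $\beta \in \Delta$ and all $k \leq n$, then $\den{\Delta\vdash\tau}(\phi)(n) = \den{\Delta\vdash\tau}(\psi)(n)$. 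Threading (iii) through the induction alongside (i)--(ii) is essential, since non-expansiveness of the body is exactly what licenses the recursive clause.

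For the base case and the ``easy'' connectives --- $\alpha$, $\NAT$, $\tau\times\sigma$, $\tau\plus\sigma$, $\tau\to\sigma$, $\ALL{\alpha}{\tau}$, $\EX{\alpha}{\tau}$ --- each property follows directly from the induction hypotheses once I record two auxiliary observations about the biorthogonal lifting. First, $\T{\cdot}$ and hence $\TT{\cdot}$ preserve the decreasing property: this is immediate because $\T{r}(n)$ quantifies $\forall k \leq n$ and thus imposes more constraints as $n$ grows. Second, $\T{\cdot}$ and $\TT{\cdot}$ are themselves non-expansive, in the sense that $\T{r}(n)$ depends only on $r(k)$ for $k \leq n$; this is read off directly from the definitions of $\T{\cdot}$ and $\Te{\cdot}$. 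The universal case quantifies impredicatively over arbitrary $r \in \VRel{\sigma}{\sigma'}$, but this introduces no circularity, since $\VRel{\sigma}{\sigma'}$ is merely the set of decreasing sequences of relations and is defined prior to and independently of $\den{\cdot}$.

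The crux is the recursive type $\REC{\alpha}{\tau}$, where the clause at level $n+1$ mentions $\den{\Delta\vdash\REC{\alpha}{\tau}}(\phi)$ --- the very object being defined --- substituted for $\alpha$ in the body. I would break this apparent circularity by a nested induction on the step index $n$. The level-$0$ stage is the full product of value sets and is unconditionally defined. For the inductive step, the body $\den{\Delta,\alpha\vdash\tau}(\extend{\phi}{\alpha}{\den{\Delta\vdash\REC{\alpha}{\tau}}(\phi)})(n)$ is evaluated at step index $n$, so by the \emph{outer} induction hypothesis (non-expansiveness of $\tau$) it depends on the relation assigned to $\alpha$ only through its values at levels $k \leq n$; those values, namely $\den{\Delta\vdash\REC{\alpha}{\tau}}(\phi)(k)$ for $k \leq n$, are already supplied by the inner induction hypothesis, so level $n+1$ is well-defined. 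Properties (ii) and (iii) for $\REC{\alpha}{\tau}$ then follow by the same inner induction: decreasingness from decreasingness of the body together with the matching of the $\alpha$-component at the relevant levels, and non-expansiveness by combining the inner hypothesis (the $\alpha$-component agrees up to level $n$) with non-expansiveness of the body at level $n$ from the outer hypothesis.

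The main obstacle is precisely this recursive case: one must make rigorous that non-expansiveness of the body is simultaneously what is \emph{needed} to justify the definition and what must be \emph{re-established} for the fixed point itself. Consequently the two inductions --- the outer one on the derivation and the inner one on the step index $n$ --- have to be carefully interleaved rather than run independently, and the bookkeeping of which levels each property is known at (body at level $n$, recursive object at levels $\leq n$) is where the proof requires care.
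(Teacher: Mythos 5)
Your proof is correct: the simultaneous induction on $\Delta \vdash \tau$ threading well-definedness, decreasingness, and non-expansiveness --- with the auxiliary observation that $\T{\cdot}$ and $\TT{\cdot}$ preserve both properties, and the inner induction on the step index using non-expansiveness of the body to break the circularity for $\REC{\alpha}{\tau}$ --- is exactly the standard argument this definition is designed to support. The paper itself states this lemma without proof (deferring to the cited metric view of step-indexing), so your write-up supplies precisely the details the paper omits, and in the form the authors evidently intend.
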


The substitution lemma is crucial for proving compatibility of existential and universal
types. The proof is by induction.
\begin{lemma}[Substitution]
  \label{app:lem:interpretation-subst}
  For any well-formed types $\Delta,\alpha \vdash \tau$ and $\Delta \vdash \sigma$ and any
  $\phi$ we have 
  $\den{\Delta \vdash \tau\subst\alpha\sigma}(\phi) = \den{\Delta, \alpha \vdash
    \tau}\left(\extend{\phi}{\alpha}{\den{\Delta \vdash \sigma}(\phi)}\right)$.
\end{lemma}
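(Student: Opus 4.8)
The plan is to prove the equality of the two interpretations by induction on the structure of $\tau$ (equivalently, on the well-formedness derivation $\Delta, \alpha \vdash \tau$), establishing identity of the two decreasing sequences pointwise at every step index $n$. Throughout write $r = \den{\Delta \vdash \sigma}(\phi)$ and $\psi = \extend{\phi}{\alpha}{r}$, noting that $r \in \VRel{\phi_1(\sigma)}{\phi_2(\sigma)}$, so the syntactic type components satisfy $\psi_1 = \extend{\phi_1}{\alpha}{\phi_1(\sigma)}$ and $\psi_2 = \extend{\phi_2}{\alpha}{\phi_2(\sigma)}$; the ordinary (syntactic) substitution lemma then gives $\psi_1(\tau) = \phi_1(\tau\subst\alpha\sigma)$ and similarly for $\psi_2$, which is exactly what is needed to check that both sides inhabit the same $\VRel{\cdot}{\cdot}$. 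For the base cases: if $\tau = \alpha$ then $\tau\subst\alpha\sigma = \sigma$ and both sides are $r$; if $\tau = \beta$ with $\beta \ne \alpha$ then both sides reduce to $\phi_r(\beta)$; and for $\tau \in \{\ONE, \NAT\}$ the interpretation is independent of the relational environment, so the two sides are literally equal.

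For the inductive cases that bind no variable ($\tau_1 \times \tau_2$, $\tau_1 \plus \tau_2$, $\tau_1 \to \tau_2$) I would unfold the corresponding clause of Fig.~\ref{fig:logical-relation}, use that syntactic substitution commutes with the type constructor (so e.g.\ $(\tau_1 \to \tau_2)\subst\alpha\sigma = \tau_1\subst\alpha\sigma \to \tau_2\subst\alpha\sigma$), and apply the induction hypotheses for $\tau_1$ and $\tau_2$. In the arrow case one additionally observes that $\TT{\cdot}$ is a function of the underlying value relation alone, so equal relational arguments yield equal $\top\top$-closures. For the binder cases $\ALL{\beta}{\tau_1}$ and $\EX{\beta}{\tau_1}$ I would first $\alpha$-rename so that $\beta \notin \ftv{\sigma} \cup \{\alpha\}$, making the substitution capture-avoiding, and then unfold the clause; the bodies are compared at the extended context $\Delta, \beta$ with $\beta$ bound to an arbitrary $s \in \VRel{\sigma''}{\sigma'''}$. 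Applying the induction hypothesis for $\tau_1$ at context $\Delta, \beta$ requires an auxiliary \emph{weakening} fact, namely $\den{\Delta \vdash \sigma}(\phi) = \den{\Delta, \beta \vdash \sigma}(\extend{\phi}{\beta}{s})$ whenever $\beta \notin \ftv{\sigma}$; this follows from non-expansiveness (Lemma~\ref{app:lem:well-defined}) or by an easy simultaneous induction, and guarantees that the environment passed to the recursive call on $\tau_1$ matches on both sides.

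I expect the recursive type $\tau = \REC{\beta}{\tau_1}$ to be the main obstacle, because its interpretation is itself defined by recursion on the step index, so the equality at index $n+1$ refers back to the body at index $n$ while the bound variable $\beta$ is simultaneously bound to the recursive interpretation. My plan is to run, inside the structural case for $\REC{\beta}{\tau_1}$, a secondary induction on $n$. At $n = 0$ both sides equal $\Val{\cdot} \times \Val{\cdot}$ on the appropriate closed types, and these agree by the syntactic substitution identity for $\psi_1, \psi_2$ recorded above. At $n+1$ I would unfold both sides to the $\FOLD$-clause, exposing the body $\tau_1$ at index $n$ with $\beta$ bound, on the left, to $X = \den{\Delta \vdash \REC{\beta}{\tau_1\subst\alpha\sigma}}(\phi)$ and, on the right, to $Y = \den{\Delta, \alpha \vdash \REC{\beta}{\tau_1}}(\psi)$. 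The inner induction hypothesis gives $X(k) = Y(k)$ for all $k \le n$; since by non-expansiveness the body's interpretation at index $n$ depends only on its recursive argument up to index $n$, the two bodies agree at index $n$ once the outer induction hypothesis for $\tau_1$ (applied at the environment extended by $\beta$) is invoked to push the substitution through. Combining these yields equality of the $\FOLD$-relations at $n+1$, closing both inductions.
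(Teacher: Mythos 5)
Your proof is correct and takes essentially the same approach as the paper, whose entire proof of this lemma is the single remark that it ``is by induction'' (on the derivation $\Delta,\alpha \vdash \tau$). The details you supply — the syntactic substitution identity for the type components, the weakening fact needed in the binder cases, and the inner induction on the step index combined with non-expansiveness (Lemma~\ref{app:lem:well-defined}) for $\REC{\beta}{\tau_1}$ — are precisely the right elaboration of that induction.
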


We state and prove additional context extension lemmas. The other cases are similar.
\begin{lemma}
  \label{app:lem:ctx-extend-func}
  Let $n \in \NN$.
  If $(v, v') \in \den{\Delta \vdash \tau_1 \to \tau_2}(\phi)(n)$ and $(E, E') \in
  \T{\den{\Delta \vdash \tau_2}(\phi)}(n)$ then 
  $\left(\ecomp{E}{(v\,[])}, \ecomp{E'}{(v'\,[])}\right) \in \T{\den{\Delta \vdash \tau_1}(\phi)}(n)$.
\end{lemma}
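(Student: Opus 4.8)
The plan is to prove membership in $\T{\den{\Delta \vdash \tau_1}(\phi)}(n)$ by pure definition unfolding, routing the resulting obligation through the function case of the value relation and the biorthogonal closure $\TT{\cdot}$. The key syntactic observation is that $\left(\ecomp{E}{(v\,[])}\right)[u] = E[v\,u]$ and $\left(\ecomp{E'}{(v'\,[])}\right)[u'] = E'[v'\,u']$, so by the definition of $\T{\cdot}$ the goal reduces to showing, for every $k \leq n$ and every $(u, u') \in \den{\Delta \vdash \tau_1}(\phi)(k)$, the inequality $\PRk{k}{E[v\,u]} \leq \PR{E'[v'\,u']}$.

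First I would fix such a $k \leq n$ together with related arguments $(u, u') \in \den{\Delta \vdash \tau_1}(\phi)(k)$. Since $(v, v') \in \den{\Delta \vdash \tau_1 \to \tau_2}(\phi)(n)$ and $k \leq n$, the function case of Fig.~\ref{fig:logical-relation} instantiated with $j = k$ gives $(v\,u, v'\,u') \in \TT{\den{\Delta \vdash \tau_2}(\phi)}(k)$ (here $v$ and $v'$ are necessarily of the form $\FUN{x}{e}$ and $\FUN{y}{e'}$, being in the function relation, though only the relational property is used). Next I would unfold $\TT{\cdot} = \Te{\T{\cdot}}$: membership of $(v\,u, v'\,u')$ in $\Te{\T{\den{\Delta \vdash \tau_2}(\phi)}}(k)$ states precisely that for every $\ell \leq k$ and every pair related in $\T{\den{\Delta \vdash \tau_2}(\phi)}(\ell)$ the left/right termination inequality holds. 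Because $\T{\den{\Delta \vdash \tau_2}(\phi)}$ is an element of $\SRel{\cdot}{\cdot}$ and hence a decreasing sequence of relations, the hypothesis $(E, E') \in \T{\den{\Delta \vdash \tau_2}(\phi)}(n)$ together with $k \leq n$ yields $(E, E') \in \T{\den{\Delta \vdash \tau_2}(\phi)}(k)$. Instantiating the unfolded closure with $\ell = k$ and this pair delivers exactly $\PRk{k}{E[v\,u]} \leq \PR{E'[v'\,u']}$, closing the goal.

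The argument carries no quantitative content — it is a chain of definition unfoldings mirroring the proof of Lemma~\ref{lem:ctx-extend-rand}. The one point requiring care, and hence the main obstacle, is the alignment of step indices: choosing $j = k$ in the function relation and $\ell = k$ in the biorthogonal closure, both licensed by $k \leq n$ and the downward closure of $\T{\den{\Delta \vdash \tau_2}(\phi)}$, ensures that no step budget is lost anywhere in the chain. Notably, no appeal to Lemma~\ref{lem:pure-reductions-dont-change} is needed, since the administrative $\beta$-reduction hidden inside $v\,u$ is already absorbed into the definition of $\TT{\cdot}$.
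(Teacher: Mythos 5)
Your proof is correct and matches the paper's approach: the paper dismisses this lemma with ``follows directly from the definition of the interpretation of types,'' and your argument is exactly that definition-unfolding spelled out, with the step indices ($j = k$ in the function case, $\ell = k$ in $\Te{\cdot}$, plus downward closure of $\T{\den{\Delta\vdash\tau_2}(\phi)}$) aligned correctly.
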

This follows directly from the definition of the interpretation of types.

\begin{corollary}
  \label{app:cor:ctx-extend-app}
  Let $n\in\NN$.
  If $(e, e') \in \TT{\den{\Delta \vdash \tau_1}(\phi)}(n)$ and $(E, E') \in
  \T{\den{\Delta \vdash \tau_2}(\phi)}(n)$ then 
  \[
  \left(\ecomp{E}{([]\,e)}, \ecomp{E'}{([]\,e')}\right) \in \T{\den{\Delta \vdash \tau_1 \to \tau_2}(\phi)}(n).\]
\end{corollary}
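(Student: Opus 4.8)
The plan is to unfold the definition of $\T{\cdot}$ on the right-hand side, reduce the claim to a single probability inequality, and then obtain that inequality by combining Lemma~\ref{app:lem:ctx-extend-func} with the hypothesis $(e,e') \in \TT{\den{\Delta \vdash \tau_1}(\phi)}(n)$. First I would fix an arbitrary $k \leq n$ and a pair of related functions $(v,v') \in \den{\Delta \vdash \tau_1 \to \tau_2}(\phi)(k)$. By the definition of $\T{\cdot}$, the goal is then exactly the inequality $\PRk{k}{(\ecomp{E}{([]\,e)})[v]} \leq \PR{(\ecomp{E'}{([]\,e')})[v']}$. The observation that makes everything work is the context-decomposition identity $(\ecomp{E}{([]\,e)})[v] = E[v\,e] = (\ecomp{E}{(v\,[])})[e]$, and symmetrically on the primed side: the application term $E[v\,e]$ can be read either as the context $\ecomp{E}{([]\,e)}$ filled with the value $v$, or as the context $\ecomp{E}{(v\,[])}$ filled with the argument $e$. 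This is what lets me switch from reasoning about $v$ in the hole (the form of the goal) to reasoning about $e$ in the hole (the form in which the hypothesis on $(e,e')$ is stated).

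Next I would descend the step-index from $n$ to $k$. Since the interpretation sequences are decreasing and $\T{\cdot}$ preserves this, the hypothesis $(E,E') \in \T{\den{\Delta \vdash \tau_2}(\phi)}(n)$ gives $(E,E') \in \T{\den{\Delta \vdash \tau_2}(\phi)}(k)$. Together with $(v,v') \in \den{\Delta \vdash \tau_1 \to \tau_2}(\phi)(k)$, Lemma~\ref{app:lem:ctx-extend-func} yields that the extended contexts $(\ecomp{E}{(v\,[])}, \ecomp{E'}{(v'\,[])})$ are related in $\T{\den{\Delta \vdash \tau_1}(\phi)}(k)$. Finally, the hypothesis $(e,e') \in \TT{\den{\Delta \vdash \tau_1}(\phi)}(n)$ unfolds, via $\TT{\cdot} = \Te{\T{\cdot}}$, to: for every $k \leq n$ and every context pair in $\T{\den{\Delta \vdash \tau_1}(\phi)}(k)$, the inequality between the stratified termination probability of the left term and the exact termination probability of the right term holds. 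Instantiating this with the context pair just produced gives $\PRk{k}{(\ecomp{E}{(v\,[])})[e]} \leq \PR{(\ecomp{E'}{(v'\,[])})[e']}$, which by the decomposition identity is precisely the desired inequality, completing the case.

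I expect the argument to be essentially mechanical, as befits a corollary: all of the quantitative content is already packaged inside Lemma~\ref{app:lem:ctx-extend-func} and the definition of the $\top\top$-closure. There is no genuine obstacle, only bookkeeping — the two points that require a little care are applying monotonicity to pass $\T{\den{\Delta \vdash \tau_2}(\phi)}$ from index $n$ down to index $k$ \emph{before} invoking the lemma, and tracking the context-decomposition identity so that the term $E[v\,e]$ is matched on both sides with the correct hole.
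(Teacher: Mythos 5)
Your proof is correct and takes essentially the same route as the paper's: both reduce the goal, via the context-decomposition identity $(\ecomp{E}{([]\,e)})[v] = E[v\,e] = (\ecomp{E}{(v\,[])})[e]$, to the inequality $\PRk{k}{E[v\,e]} \leq \PR{E'[v'\,e']}$, obtain the extended contexts $\left(\ecomp{E}{(v\,[])}, \ecomp{E'}{(v'\,[])}\right)$ from Lemma~\ref{app:lem:ctx-extend-func} together with monotonicity in the step index, and then instantiate the hypothesis $(e,e') \in \TT{\den{\Delta\vdash\tau_1}(\phi)}(n)$ at index $k$ with that context pair. If anything, your version is slightly more careful than the paper's, which fixes $(v,v')$ at index $n$ rather than at the arbitrary $k \leq n$ that the definition of $\T{\cdot}$ actually quantifies over.
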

\begin{proof}
  Let $n\in\NN$.
  Take $(v, v') \in\den{\Delta \vdash \tau_1\to\tau_2}(\phi)(n)$. By Lemma~\ref{app:lem:ctx-extend-func} 
  and monotonicity we have for all $k\leq n$, $(\ecomp{E}{(v\,[])}, \ecomp{E'}{(v'\,[])}) \in \T{\den{\Delta \vdash \tau_1}(\phi)}(k)$
  and by the assumption that
  $(e, e') \in \TT{\den{\Delta \vdash \tau_1}(\phi)}(n)$ we have
  \begin{align*}
    \PRk{k}{E[v\,e]} \leq \PR{E'[v'\,e']}
  \end{align*}
  concluding the proof.
\end{proof}

\begin{lemma}
  \label{app:lem:ctx-extend-unfold}
  Let $n\in\NN$.
  If $(E, E') \in \T{\den{\Delta \vdash \tau\subst{\alpha}{\REC{\alpha}{\tau}}}(\phi)}(n)$ then 
  \[
  \left(\ecomp{E}{(\UNFOLD{[]})}, \ecomp{E'}{(\UNFOLD{[]})}\right) \in 
  \T{\den{\Delta \vdash \REC{\alpha}{\tau}}(\phi)}(n).\]
\end{lemma}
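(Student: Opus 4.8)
The plan is to unfold the definition of the orthogonality map $\T{\cdot}$ and reduce everything to a single unfold-fold step on each side, whose cost is exactly absorbed by the step-index bookkeeping. Write $r = \den{\Delta \vdash \tau\subst{\alpha}{\REC{\alpha}{\tau}}}(\phi)$ and $s = \den{\Delta \vdash \REC{\alpha}{\tau}}(\phi)$, so that the hypothesis reads $(E,E') \in \T{r}(n)$ and the goal is $(\ecomp{E}{(\UNFOLD{[]})}, \ecomp{E'}{(\UNFOLD{[]})}) \in \T{s}(n)$. By the definition of $\T{\cdot}$ it suffices to fix $k \leq n$ and a related pair $(w,w') \in s(k)$ and to prove $\PRk{k}{E[\UNFOLD{w}]} \leq \PR{E'[\UNFOLD{w'}]}$, using that $(\ecomp{E}{(\UNFOLD{[]})})[w] = E[\UNFOLD{w}]$.

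First I would dispose of the base case $k = 0$: since $\PRk{0}{\cdot} = \bot = 0$, the inequality holds trivially. For $k = \ell + 1$, I would use the definition of $s(\ell+1)$ together with the substitution Lemma~\ref{app:lem:interpretation-subst} to conclude that $(w,w')$ is necessarily of the form $(\FOLD{v}, \FOLD{v'})$ with $(v,v') \in r(\ell)$. Identifying $s(\ell+1)$ with fold-pairs drawn from $r(\ell)$ is the key structural observation, and it is the one place where this lemma differs from the other context-extension lemmas.

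Next I would track the two sides separately. On the left, $E[\UNFOLD{(\FOLD{v})}] \stepsto{1} E[v]$ is an unfold-fold reduction, so by Lemma~\ref{lem:prob-unfold-fold-choice} we have $\PRk{\ell+1}{E[\UNFOLD{(\FOLD{v})}]} = \PRk{\ell}{E[v]}$; the single counted step is consumed exactly as the index drops from $\ell+1$ to $\ell$. On the right, $E'[\UNFOLD{(\FOLD{v'})}] \stepstopure E'[v']$ is choice-free, so Lemma~\ref{lem:choice-free-reductions-dont-change} gives $\PR{E'[\UNFOLD{(\FOLD{v'})}]} = \PR{E'[v']}$. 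It then remains to apply the hypothesis $(E,E') \in \T{r}(n)$ at index $\ell$ (legitimate since $\ell \leq \ell+1 = k \leq n$) to the pair $(v,v') \in r(\ell)$, yielding $\PRk{\ell}{E[v]} \leq \PR{E'[v']}$; chaining the three (in)equalities closes the goal.

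The only delicate point is the step-index arithmetic: one must check that the unfold-fold reduction costs precisely one unit in the stratified measure $\PRk{\cdot}{\cdot}$ on the left while costing nothing for $\PR{\cdot}$ on the right, so that the index $\ell$ at which the hypothesis is invoked matches the index appearing in the fold-pair coming from $s(\ell+1)$. Everything else is a direct unfolding of definitions, and I expect no real obstacle beyond keeping this matching straight.
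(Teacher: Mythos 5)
Your proof is correct and follows essentially the same route as the paper's: case on the index ($0$ trivial, successor via the fold-pair structure of the recursive-type relation), use Lemma~\ref{lem:prob-unfold-fold-choice} to drop the stratified index by one on the left, Lemma~\ref{lem:choice-free-reductions-dont-change} to leave the right side unchanged, and then invoke the hypothesis at the lowered index. If anything, your version is slightly tidier than the paper's (you case directly on $k$ as the definition of $\T{\cdot}$ demands, rather than first on $n$, and you cite the substitution Lemma~\ref{app:lem:interpretation-subst} explicitly where the paper leaves it implicit), but these are presentational differences, not a different argument.
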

\begin{proof}
  Let $n\in\NN$. We consider two cases.

  \begin{itemize}
  \item $n = m + 1$

    Take $(\FOLD{v}, \FOLD{v'}) \in
    \den{\Delta\vdash\REC{\alpha}{\tau}}(\phi)(n)$. By definition
    \[(v,v') \in \den{\Delta
      \vdash \tau\subst{\alpha}{\REC{\alpha}{\tau}}}(\phi)(m).\]  Let $k \leq n$. If $k=0$
    the condition is trivially true (since $\PRk{k}{E[\UNFOLD{\FOLD{v}}]} = 0$) so assume
    $k = \ell + 1$. Note that crucially $\ell \leq m$. Using
    Lemma~\ref{lem:tt-closure},
    Lemma~\ref{lem:prob-unfold-fold-choice} and
    Lemma~\ref{lem:choice-free-reductions-dont-change} we have
    \begin{align*}
      \PRk{k}{E[\UNFOLD{(\FOLD{v})}]} &= \PRk{\ell}{E[v]} \\
                                      &\leq \PR{E'[v']}\\
                                      &= \PR{E'[\UNFOLD{(\FOLD{v'})}]}
    \end{align*}
    concluding the proof.
  \item $n = 0$. This case is trivial, since $\PRk{0}{e} = 0$ for any $e$.
  \end{itemize}
\end{proof}

\begin{lemma}
  \label{app:lem:ctx-extend-fold}
  Let $n\in\NN$.
  If $(E, E') \in   \T{\den{\Delta \vdash \REC{\alpha}{\tau}}(\phi)}(n)$ then 
  \[
  \left(\ecomp{E}{(\FOLD{[]})}, \ecomp{E'}{(\FOLD{[]})}\right) \in 
  \T{\den{\Delta \vdash \tau\subst{\alpha}{\REC{\alpha}{\tau}}}(\phi)}(n).\]
\end{lemma}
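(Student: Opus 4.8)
The plan is to unfold the definition of the orthogonality operator $\T{\cdot}$ and reduce the goal to a single probability inequality, and then to exploit the fact that $\FOLD{v}$ is itself a value—so, in contrast with the unfold case of Lemma~\ref{app:lem:ctx-extend-unfold}, \emph{no} reduction is performed and \emph{no} step-index is consumed. Concretely, I would fix $n \in \NN$, take $k \leq n$ and a related pair $(v, v') \in \den{\Delta \vdash \tau\subst{\alpha}{\REC{\alpha}{\tau}}}(\phi)(k)$. Using the identity $(\ecomp{E}{(\FOLD{[]})})[v] = E[\FOLD{v}]$ (and likewise on the right-hand side), it then suffices to establish
\[
  \PRk{k}{E[\FOLD{v}]} \leq \PR{E'[\FOLD{v'}]}.
\]

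First I would move the pair $(v, v')$ into the unfolded interpretation. By the Substitution Lemma~\ref{app:lem:interpretation-subst}, the relation $\den{\Delta \vdash \tau\subst{\alpha}{\REC{\alpha}{\tau}}}(\phi)$ coincides with $\den{\Delta, \alpha \vdash \tau}\left(\extend{\phi}{\alpha}{\den{\Delta \vdash \REC{\alpha}{\tau}}(\phi)}\right)$, so $(v, v')$ lies in the latter at index $k$. By the defining clause of $\den{\Delta \vdash \REC{\alpha}{\tau}}(\phi)$ on successor indices, this is exactly the premise needed to conclude $(\FOLD{v}, \FOLD{v'}) \in \den{\Delta \vdash \REC{\alpha}{\tau}}(\phi)(k+1)$.

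The one bookkeeping point is that $\FOLD{v}$ lands at index $k+1$, whereas the hypothesis on $(E,E')$ must be applied at index $k$. I would resolve this using the fact that the interpretation is a decreasing sequence of relations (part of well-definedness, Lemma~\ref{app:lem:well-defined}): from the inclusion $\den{\Delta \vdash \REC{\alpha}{\tau}}(\phi)(k+1) \subseteq \den{\Delta \vdash \REC{\alpha}{\tau}}(\phi)(k)$ we obtain $(\FOLD{v}, \FOLD{v'}) \in \den{\Delta \vdash \REC{\alpha}{\tau}}(\phi)(k)$. Now the assumption $(E, E') \in \T{\den{\Delta \vdash \REC{\alpha}{\tau}}(\phi)}(n)$, instantiated at the index $k \leq n$ with this pair, yields precisely $\PRk{k}{E[\FOLD{v}]} \leq \PR{E'[\FOLD{v'}]}$, which completes the argument.

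I do not expect any real obstacle here. Unlike the step-consuming unfold lemma, this direction needs neither Lemma~\ref{lem:prob-unfold-fold-choice} nor Lemma~\ref{lem:choice-free-reductions-dont-change}, since plugging the value $\FOLD{v}$ into $E$ triggers no reduction; the whole content is the index shift from $k+1$ to $k$, which is immediate from monotonicity of the interpretation in the step-index.
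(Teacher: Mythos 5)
Your proposal is correct and is essentially the paper's own proof: the paper argues that ``if $(v, v')$ are related at the unfolded type then $(\FOLD{v}, \FOLD{v'})$ are related at the folded type (using weakening to get to the same stage)'', which is exactly your chain of the Substitution Lemma, the successor-index clause of $\den{\Delta \vdash \REC{\alpha}{\tau}}(\phi)$, and the index shift from $k+1$ to $k$ via monotonicity in the step-index. The only difference is that you spell out the bookkeeping (substitution and the index shift) that the paper leaves implicit.
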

\begin{proof}
  Easily follows from the fact that if $(v, v')$ are related at the unfolded type then
  $(\FOLD{v}, \FOLD{v'})$ are related at the folded type (using weakening to get to the
  same stage).
\end{proof}

To relate the logical relation to contextual and CIU approximations we first have that the
composition of logical and CIU approximations is included in the logical approximation relation.
\begin{corollary}
  \label{app:cor:logrel-ciu-extend}
  If $\logapprox{\Delta}{\Gamma}{e}{e'}{\tau}$ and
  $\ciuapprox{\Delta}{\Gamma}{e'}{e''}{\tau}$ then 
  $\logapprox{\Delta}{\Gamma}{e}{e''}{\tau}$.
\end{corollary}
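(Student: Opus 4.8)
The plan is to unfold both hypotheses down to the level of termination probabilities and to observe that the CIU inequality plugs in exactly at the point where the logical relation leaves off. Recall that $\logapprox{\Delta}{\Gamma}{e}{e'}{\tau}$ unfolds to: for every $n$, every $\phi\in\VRelD{\Delta}$, and every related pair of substitutions $(\gamma,\gamma')\in\den{\Delta\vdash\Gamma}(\phi)(n)$, we have $(e\gamma,e'\gamma')\in\TT{\den{\Delta\vdash\tau}(\phi)}(n)$. By definition of the biorthogonal closure $\TT{\cdot}=\Te{\T{\cdot}}$, this says: for all $k\le n$ and all $(E,E')\in\T{\den{\Delta\vdash\tau}(\phi)}(k)$, $\PRk{k}{E[e\gamma]}\le\PR{E'[e'\gamma']}$. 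The goal $\logapprox{\Delta}{\Gamma}{e}{e''}{\tau}$ is the same statement with $e''\gamma'$ in place of $e'\gamma'$ on the right, so it suffices to replace the right-hand observation $\PR{E'[e'\gamma']}$ by $\PR{E'[e''\gamma']}$. The essential point is that the right leg of the $\top\top$-closure measures \emph{plain} termination probability $\PR{\cdot}$, which is precisely the observation used by the CIU relation.

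First I would fix $n$, $\phi$, and $(\gamma,\gamma')\in\den{\Delta\vdash\Gamma}(\phi)(n)$, then fix $k\le n$ and a related pair $(E,E')\in\T{\den{\Delta\vdash\tau}(\phi)}(k)$, reducing the goal to the single inequality $\PRk{k}{E[e\gamma]}\le\PR{E'[e''\gamma']}$. From the logical hypothesis I obtain $\PRk{k}{E[e\gamma]}\le\PR{E'[e'\gamma']}$ directly. Next I would invoke $\ciuapprox{\Delta}{\Gamma}{e'}{e''}{\tau}$. Since $(\gamma,\gamma')\in\den{\Delta\vdash\Gamma}(\phi)(n)$, the substitution $\gamma'$ is a type-respecting closing value substitution for $\Gamma\phi_2$, where $\phi_2\in\Type^\Delta$ is the right-hand type substitution of $\phi$; and since $(E,E')\in\T{\den{\Delta\vdash\tau}(\phi)}(k)$, the context $E'$ is a $\tau\phi_2$-accepting evaluation context, i.e. $E'\in\Stk{\tau\phi_2}$. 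Hence the CIU hypothesis applies with the closing type substitution $\phi_2$, value substitution $\gamma'$, and context $E'$, yielding $\PR{E'[e'\gamma']}\le\PR{E'[e''\gamma']}$. Chaining the two inequalities by transitivity of $\le$ on the unit interval gives $\PRk{k}{E[e\gamma]}\le\PR{E'[e''\gamma']}$, and since $k\le n$ and $(E,E')$ were arbitrary this is exactly $(e\gamma,e''\gamma')\in\TT{\den{\Delta\vdash\tau}(\phi)}(n)$, as required.

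The only real work is the typing bookkeeping in the CIU step: one must check that $(\phi_2,\gamma',E')$ genuinely constitute a legitimate closed instantiation-and-use, so that $E'[e'\gamma']$ and $E'[e''\gamma']$ are closed, well-typed terms of the observable type on which $\PR{\cdot}$ is meaningful. This is immediate from the definition of $\den{\Delta\vdash\Gamma}$, which forces $\gamma'\in\Subst(\Gamma\phi_2)$, together with the fact that membership in $\SRel{\tau}{\sigma}$ constrains the right component of a related context pair to be $\sigma$-accepting. Thus no genuine difficulty arises beyond making these side conditions explicit, and the corollary follows.
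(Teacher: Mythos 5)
Your proposal is correct and is exactly the paper's argument: the paper dismisses this corollary with ``follows directly from the definition,'' and your proof is the honest spelling-out of that one-liner --- unfold $\logapproxrel$ and $\TT{\cdot}$, obtain $\PRk{k}{E[e\gamma]}\le\PR{E'[e'\gamma']}$ from the logical hypothesis, use the CIU hypothesis (instantiated with the right-hand type substitution $\phi_2$, substitution $\gamma'$, and context $E'$) to get $\PR{E'[e'\gamma']}\le\PR{E'[e''\gamma']}$, and chain the inequalities. The key observation you make explicit --- that the right leg of the biorthogonal closure observes plain $\PR{\cdot}$, which is precisely the CIU observation --- is indeed why the composition works, and your typing bookkeeping matches the paper's conventions for $\VRelD{\Delta}$, $\Subst$, and $\Stk{\cdot}$.
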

This follows directly from the definition. This corollary in turn implies, together with
Proposition~\ref{prop:logrel-is-compatible} and the fact that all compatible relations are
in particular reflexive, that CIU approximation relation is contained in the logical
relation.
\begin{corollary}
  \label{app:cor:ciu-implies-logrel}
  If $\ciuapprox{\Delta}{\Gamma}{e}{e'}{\tau}$ then
  ${\logapprox{\Delta}{\Gamma}{e}{e'}{\tau}}$.
\end{corollary}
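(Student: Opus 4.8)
The plan is to obtain this as a one-line consequence of the two facts recorded immediately above it: the reflexivity of $\logapproxrel$ and the absorption property of Corollary~\ref{app:cor:logrel-ciu-extend}. Reflexivity is available because Proposition~\ref{prop:logrel-is-compatible} establishes that $\logapproxrel$ is compatible, and every compatible type-indexed relation is in particular reflexive (by Definition~\ref{def:tir-congruence}); thus any well-typed term is logically related to itself.

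So, assuming $\ciuapprox{\Delta}{\Gamma}{e}{e'}{\tau}$, the first step is to apply reflexivity to $e$ (which is well-typed, since the CIU hypothesis presupposes $\hastype{\Delta}{\Gamma}{e}{\tau}$) to obtain $\logapprox{\Delta}{\Gamma}{e}{e}{\tau}$. The second and final step is to instantiate Corollary~\ref{app:cor:logrel-ciu-extend}, whose statement concerns a triple of terms, at the triple $(e, e, e')$: feeding it $\logapprox{\Delta}{\Gamma}{e}{e}{\tau}$ together with the assumed $\ciuapprox{\Delta}{\Gamma}{e}{e'}{\tau}$ yields exactly $\logapprox{\Delta}{\Gamma}{e}{e'}{\tau}$, as required.

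There is essentially no obstacle to overcome here; all the quantitative reasoning about termination probabilities and step-indices has already been discharged in the proof of Corollary~\ref{app:cor:logrel-ciu-extend}, which unfolds the definitions of the $\top\top$-closure and of CIU approximation. The only point worth double-checking is that the variable naming lines up correctly when specialising the three-term corollary to this two-term statement, i.e.\ that taking the middle term to coincide with the left term is precisely what turns the ``$\logapproxrel$ followed by $\ciuapproxrel$'' composition into the plain inclusion of $\ciuapproxrel$ into $\logapproxrel$.
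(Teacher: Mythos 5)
Your proof is correct and matches the paper's own argument exactly: the paper likewise obtains this corollary from the reflexivity of $\logapproxrel$ (via Proposition~\ref{prop:logrel-is-compatible} and the fact that compatible relations are reflexive) followed by an application of Corollary~\ref{app:cor:logrel-ciu-extend} with the middle term taken to be $e$ itself. Nothing is missing.
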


Finally we have adequacy of the logical relation.
\begin{corollary}
  \label{app:cor:logrel-is-adequat}
  Logical approximation relation $\logapproxrel$ is adequate.
\end{corollary}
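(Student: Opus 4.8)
The plan is to unfold the definition of $\logapproxrel$ at empty contexts and reduce the claim to the biorthogonality machinery already in place; the argument mirrors the earlier Corollary~\ref{cor:logrel-is-adequat}. First I would instantiate $\logapprox{\emp}{\emp}{e}{e'}{\tau}$ with the empty type-variable context and the empty term context. Since $\VRelD{\emp}$ is a singleton and the only type-respecting substitution over the empty context is the identity, the definition of logical approximation collapses to the statement that for every $n \in \NN$ we have $(e, e') \in \TT{\den{\emp \vdash \tau}}(n)$.

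Next I would feed the empty evaluation context into the $\top\top$-closure. By definition, $(e,e') \in \TT{\den{\emp\vdash\tau}}(n)$ means that for all $k \leq n$ and all $(E,E') \in \T{\den{\emp\vdash\tau}}(k)$ we have $\PRk{k}{E[e]} \leq \PR{E'[e']}$. The one small lemma needed is that the pair of empty contexts $(\empeval,\empeval)$ lies in $\T{r}(k)$ for any value relation $r$ and any $k$: unfolding $\T{\cdot}$, this requires $\PRk{\ell}{v} \leq \PR{v'}$ for every $\ell \leq k$ and every related $(v,v') \in r(\ell)$, which holds trivially because $v'$ is a value so $\PR{v'} = 1$, while $\PRk{\ell}{v} \leq 1$ always. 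Taking $k = n$ and $(E,E') = (\empeval,\empeval)$ then yields $\PRk{n}{e} \leq \PR{e'}$ for every $n$.

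Since the right-hand side $\PR{e'}$ is independent of $n$, I would pass to the supremum to obtain $\sup_{n\in\omega}\PRk{n}{e} \leq \PR{e'}$. Finally, Proposition~\ref{prop:prob-smaller} supplies the crucial inequality $\PR{e} \leq \sup_{n\in\omega}\PRk{n}{e}$, and chaining the two gives $\PR{e} \leq \PR{e'}$, which is exactly adequacy in the sense of Definition~\ref{def:adequacy}.

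The only genuine content is Proposition~\ref{prop:prob-smaller}, which I may assume; it is precisely where the probabilistic nature of the language enters, since in the deterministic or nondeterministic setting one instead appeals to termination being observed in finitely many steps. Everything else is routine unfolding, so I anticipate no real obstacle. The one subtlety to keep straight is the asymmetry of the step-indexing in the orthogonality relation—only the left-hand term is counted, via the stratified $\PRk{k}{\cdot}$, whereas the right-hand side uses the exact probability $\PR{\cdot}$—and it is exactly this asymmetry that makes both the self-relatedness of the empty context and the final supremum argument go through cleanly.
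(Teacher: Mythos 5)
Your proof is correct and takes essentially the same route as the paper's own argument: unfold the definition of $\logapproxrel$ at the empty contexts, use self-relatedness of the empty evaluation context to obtain $\PRk{n}{e} \leq \PR{e'}$ for all $n$, pass to the supremum, and finish with Proposition~\ref{prop:prob-smaller}. The only (welcome) difference is that you spell out why the empty contexts are self-related — namely $\PR{v'} = 1$ for values $v'$ while $\PRk{\ell}{v} \leq 1$ always — a fact the paper asserts without proof.
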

\begin{proof}
  Assume $\logapprox{\emp}{\emp}{e}{e'}{\tau}$. We are to show that $\PR{e} \leq
  \PR{e'}$. Straight from the definition we have $\forall n \in \NN, (e, e') \in \TT{\den{\emp \vdash \tau}}(n).$
  The empty evaluation context is always related to itself (at any
  type). This implies $\forall n \in \NN, \PRk{n}{e} \leq \PR{e'}$
  which further implies (since the right-hand side is independent of $n$) that 
  $\sup_{n\in\omega}\left(\PRk{n}{e}\right) \leq \PR{e'}$.
  Using Proposition~\ref{prop:prob-smaller} we thus have
  $\PR{e} \leq \sup_{n\in\omega}\left(\PRk{n}{e}\right) \leq \PR{e'}$
  concluding the proof.
\end{proof}

\begin{lemma}[Functional extensionality for values]
  \label{app:lem:func-extensionality}
  Suppose $\tau, \sigma \in \Type(\Delta)$ and let $\FUN{x}{e}$ and $\FUN{x'}{e'}$ be two
  values of type $\tau\to\sigma$ in context $\Delta\ |\ \Gamma$.
  If for all $u \in \Val{\tau}$ we have
  $\ctxapprox{\Delta}{\Gamma}{\left(\FUN{x}{e}\right)\,u}{\left(\FUN{x'}{e'}\right)\,u}{\sigma}$
  then
  \[{\ctxapprox{\Delta}{\Gamma}{\FUN{x}{e}}{\FUN{x'}{e'}}{\tau\to\sigma}}\enspace.\]
\end{lemma}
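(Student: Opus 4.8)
The plan is to reduce the statement to the logical relation via the CIU theorem (Theorem~\ref{thm:CIU-theorem}) and then to close the gap between the hypothesis, which only constrains $f = \FUN{x}{e}$ and $f' = \FUN{x'}{e'}$ on a \emph{single} common argument, and the logical relation, which feeds them \emph{related} but possibly distinct arguments. Since $\logapproxrel$, $\ciuapproxrel$ and $\ctxapproxrel$ coincide, it suffices to prove $\logapprox{\Delta}{\Gamma}{f}{f'}{\tau\to\sigma}$. So I would fix $n$, $\phi \in \VRelD{\Delta}$ and related substitutions $(\gamma,\gamma') \in \den{\Delta\vdash\Gamma}(\phi)(n)$, and prove $(f\gamma, f'\gamma') \in \TT{\den{\tau\to\sigma}(\phi)}(n)$. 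As $f\gamma$ and $f'\gamma'$ are again $\lambda$-abstractions, hence values, Lemma~\ref{lem:tt-closure} lets me drop to the value relation and, unfolding the function-type clause, reduces the goal to: for all $j \le n$ and all $(v,v') \in \den{\tau}(\phi)(j)$, $(f\gamma\,v,\ f'\gamma'\,v') \in \TT{\den{\sigma}(\phi)}(j)$.

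This last membership is the crux, and I would establish it by inserting $f\gamma'\,v'$ as an intermediate term. First, by the fundamental property (Proposition~\ref{prop:logrel-is-compatible}), $f$ is logically related to itself; since $f\gamma,f\gamma'$ are $\lambda$-abstractions this reflexivity holds already in the \emph{value} relation, so applying its function-type clause to the related pair $(v,v')$ yields $(f\gamma\,v,\ f\gamma'\,v') \in \TT{\den{\sigma}(\phi)}(j)$ --- note this still mentions $f$, not $f'$, on the right. Second, I would instantiate the hypothesis at the single closed value $v'$ and, using Theorem~\ref{thm:CIU-theorem} again, read it as the closed CIU approximation $f\gamma'\,v' \ciuapproxrel f'\gamma'\,v'$, obtained by closing $\ctxapprox{\Delta}{\Gamma}{f\,v'}{f'\,v'}{\sigma}$ under the type substitution $\phi_2$ and value substitution $\gamma'$.

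The two facts combine through the definition of the $\top\top$-closure. Membership $(f\gamma\,v,\ f\gamma'\,v') \in \TT{\den{\sigma}(\phi)}(j)$ says $\PRk{k}{E[f\gamma\,v]} \le \PR{E'[f\gamma'\,v']}$ for every $k \le j$ and every $(E,E') \in \T{\den{\sigma}(\phi)}(k)$, while the CIU approximation gives $\PR{E'[f\gamma'\,v']} \le \PR{E'[f'\gamma'\,v']}$ for every $E'$; chaining the two inequalities yields $\PRk{k}{E[f\gamma\,v]} \le \PR{E'[f'\gamma'\,v']}$, that is $(f\gamma\,v,\ f'\gamma'\,v') \in \TT{\den{\sigma}(\phi)}(j)$, as required. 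This is exactly the ``logical approximation composed on the right with a CIU approximation stays logical'' pattern already isolated in Corollary~\ref{app:cor:logrel-ciu-extend}, so the whole step could alternatively be packaged as an appeal to that corollary.

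I expect the only real friction to be bookkeeping rather than mathematics. The delicate points are: checking that the hypothesis, stated for closed values of $\tau$, may legitimately be instantiated at $v'$ viewed as a closed value of the closed type $\phi_2(\tau)$ and then closed under $(\phi_2,\gamma')$; and making sure the reflexivity step genuinely lands in the value relation $\den{\tau\to\sigma}(\phi)(n)$ and not merely in its $\top\top$-closure (this is where one uses that the fundamental property for $\lambda$-abstractions is proved by first establishing value-relatedness, before closing under $\TT{\cdot}$). The quantitative, probabilistic content never really intervenes beyond the single monotone chaining $\PRk{k}{\cdot} \le \PR{\cdot} \le \PR{\cdot}$, which is precisely the kind of ``externalized'' arithmetic that the biorthogonal definition is designed to isolate.
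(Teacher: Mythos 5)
Your proposal is correct and is essentially the paper's own proof: both reduce to the logical relation via Theorem~\ref{thm:CIU-theorem}, show the two closed $\lambda$-abstractions directly related in the value relation at type $\tau\to\sigma$, and discharge the resulting obligation by chaining reflexivity of $f$ (Proposition~\ref{prop:logrel-is-compatible}) with the hypothesis read as a closed CIU approximation --- precisely your chain $\PRk{k}{E[f\gamma\,v]} \le \PR{E'[f\gamma'\,v']} \le \PR{E'[f'\gamma'\,v']}$. Even the two bookkeeping points you flag (instantiating the hypothesis at a closed value of the substituted type under $(\phi_2,\gamma')$, and landing in the value relation rather than merely its $\top\top$-closure) are handled at the same level of detail, i.e.\ implicitly, in the paper's proof.
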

\begin{proof}
  We use Theorem~\ref{thm:CIU-theorem} several times and show $\FUN{x}{e}$ and
  $\FUN{x'}{e'}$ are logically related. Let $n \in \NN$, $\phi \in \VRelD{\Delta}$ and
  $(\gamma, \gamma') \in \den{\Delta \vdash \Gamma}(\phi)(n)$. Let $v = \FUN{x}{e\gamma}$
  and $v' = \FUN{x'}{e'\gamma'}$. We are to show $(v, v') \in \TT{\den{\Delta \vdash
      \tau\to\sigma}(\phi)}(n)$ and to do this we show directly $(v, v') \in \den{\Delta
    \vdash \tau\to\sigma}(\phi)(n)$.

  Let $j \leq n$, $(u, u') \in \den{\tau}(\phi)(n)$, $k \leq j$ and $(E, E') \in
  \T{\den{\sigma}(\phi)}(k)$. We have to show
  $\PRk{k}{E[v\,u]} \leq \PR{E'[v'\,u']}$. From
  Proposition~\ref{prop:logrel-is-compatible} we have that $(v, v) \in
  \TT{\den{\tau\to\sigma}(\phi)}(n)$ and so
  $\PRk{k}{E[v\,u]} \leq \PR{E'[v\,u']}$. From the assumption of the lemma we have that
  $v\,u' \ciuapproxrel v'\,u'$ which concludes the proof.
\end{proof}

\begin{lemma}[Extensionality for the universal type]
  \label{app:lem:all-extensionality}
  Let $\tau \in \Type(\Delta, \alpha)$ be a type.
  Let $\TFUN{e}, \TFUN{e'}$ be two terms of type $\ALL{\alpha}{\tau}$ in context $\Delta\ |\
  \Gamma$. If for all closed types $\sigma \in \Type$ we have
  \begin{align*}
    \ctxapprox{\Delta}{\Gamma}{e}{e'}{\tau\subst{\alpha}{\sigma}}
  \end{align*}
  then $\ctxapprox{\Delta}{\Gamma}{\TFUN{e}}{\TFUN{e'}}{\ALL{\alpha}{\tau}}$.
\end{lemma}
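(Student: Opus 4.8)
The plan is to mirror the proof of Lemma~\ref{app:lem:func-extensionality}: use Theorem~\ref{thm:CIU-theorem} to turn the hypothesis into a CIU approximation and the goal into a logical approximation, and then exploit reflexivity of the logical relation on the common type-abstraction skeleton, moving from the left body $e$ to the right body $e'$ \emph{only} on the right-hand side of the observation inequality, where both sides carry the same type. Concretely, by Theorem~\ref{thm:CIU-theorem} it suffices to prove $\logapprox{\Delta}{\Gamma}{\TFUN{e}}{\TFUN{e'}}{\ALL{\alpha}{\tau}}$. So I would fix $n$, $\phi \in \VRelD{\Delta}$ and $(\gamma,\gamma') \in \den{\Delta\vdash\Gamma}(\phi)(n)$ and aim to show $(\TFUN{e\gamma}, \TFUN{e'\gamma'}) \in \TT{\den{\Delta\vdash\ALL{\alpha}{\tau}}(\phi)}(n)$. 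By Lemma~\ref{lem:tt-closure} it is enough to place the pair in the value relation $\den{\Delta\vdash\ALL{\alpha}{\tau}}(\phi)(n)$ itself, so unfolding the universal case of Fig.~\ref{fig:logical-relation} I fix arbitrary closed types $\sigma,\sigma'$ and $r \in \VRel{\sigma}{\sigma'}$ and reduce to proving $(e\gamma, e'\gamma') \in \TT{\den{\Delta,\alpha\vdash\tau}(\extend{\phi}{\alpha}{r})}(n)$.

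For this, I would first apply the fundamental property (Proposition~\ref{prop:logrel-is-compatible}) to the body $e$, which is well-typed in $\Delta,\alpha \mid \Gamma$ since $\Delta\mid\Gamma\vdash \TFUN{e} : \ALL{\alpha}{\tau}$: reflexivity gives $\logapprox{\Delta,\alpha}{\Gamma}{e}{e}{\tau}$. Instantiating this at the extended relational assignment $\extend{\phi}{\alpha}{r} \in \VRelD{\Delta,\alpha}$ and at the same substitution pair $(\gamma,\gamma')$ — which remains related because $\alpha \notin \fv{\Gamma}$, so $\den{\Delta,\alpha\vdash\Gamma}(\extend{\phi}{\alpha}{r})$ agrees with $\den{\Delta\vdash\Gamma}(\phi)$ by weakening — yields $(e\gamma, e\gamma') \in \TT{\den{\Delta,\alpha\vdash\tau}(\extend{\phi}{\alpha}{r})}(n)$. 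Unfolding the $\top\top$-closure, for every $k \le n$ and every $(E,E') \in \T{\den{\Delta,\alpha\vdash\tau}(\extend{\phi}{\alpha}{r})}(k)$ I obtain $\PRk{k}{E[e\gamma]} \le \PR{E'[e\gamma']}$, where the right occurrence $e\gamma'$ is closed using the right-hand type assignment, i.e.\ $\alpha \mapsto \sigma'$ and $\Delta \mapsto \phi_2$.

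It remains to replace $e$ by $e'$ on the right, and this is the one genuinely delicate point, which I expect to be the main obstacle: the universal relation permits the two sides to carry \emph{different} types $\sigma$ and $\sigma'$, whereas the hypothesis only compares $e$ and $e'$ at a single instantiation type. This mismatch is harmless precisely because the comparison is only needed on the right-hand side, where the instantiation type is $\sigma'$: instantiating the hypothesis at the closed type $\sigma' \in \Type$ gives $\ctxapprox{\Delta}{\Gamma}{e}{e'}{\tau\subst{\alpha}{\sigma'}}$, and Theorem~\ref{thm:CIU-theorem} converts this into $\ciuapprox{\Delta}{\Gamma}{e}{e'}{\tau\subst{\alpha}{\sigma'}}$. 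Applying the CIU inequality with the closing type substitution $\phi_2$ on $\Delta$, the value substitution $\gamma'$, and the evaluation context $E'$ gives exactly $\PR{E'[e\gamma']} \le \PR{E'[e'\gamma']}$ (both terms being closed of type $(\phi_2\tau)\subst{\alpha}{\sigma'}$, which is the right-hand type accepted by $E'$). Chaining the two inequalities produces $\PRk{k}{E[e\gamma]} \le \PR{E'[e'\gamma']}$, i.e.\ membership of $(e\gamma,e'\gamma')$ in $\TT{\den{\Delta,\alpha\vdash\tau}(\extend{\phi}{\alpha}{r})}(n)$. Since $\sigma,\sigma',r$ were arbitrary this establishes the logical approximation, and a final appeal to Theorem~\ref{thm:CIU-theorem} delivers the desired contextual approximation. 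Apart from the asymmetric-type issue just discussed, I expect only the bookkeeping of the implicit type substitutions ($\phi_2$ on $\Delta$ and $\alpha \mapsto \sigma'$) and the $\alpha$-weakening transport of $(\gamma,\gamma')$ to require care; the entire quantitative content sits in the single CIU step, exactly as in Lemma~\ref{app:lem:func-extensionality}.
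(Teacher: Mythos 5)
Your proposal is correct and follows essentially the same route as the paper's proof: reduce to the logical relation via Theorem~\ref{thm:CIU-theorem}, place the two type abstractions directly in the value relation for $\ALL{\alpha}{\tau}$, use the fundamental property to relate $(e\gamma, e\gamma')$, and then exploit the hypothesis (as a CIU approximation at the right-hand instantiation type $\tau\subst{\alpha}{\sigma'}$, closed by $\phi_2$ and $\gamma'$) to replace $e\gamma'$ by $e'\gamma'$ inside $E'$. Your explicit treatment of the asymmetry between $\sigma$ and $\sigma'$ and of the weakening of $(\gamma,\gamma')$ to the extended context $\Delta,\alpha$ matches what the paper does implicitly.
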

\begin{proof}
  We again use Theorem~\ref{thm:CIU-theorem} multiple times. Let $n \in \NN$, $\phi \in
  \VRelD{\Delta}$ and $(\gamma, \gamma') \in \den{\Delta\vdash\Gamma}(\phi)(n)$. Let
  $v = \TFUN{e\gamma}$ and $v' = \TFUN{e'\gamma'}$. We show directly that $(v, v') \in
  \den{\Delta \vdash \ALL{\alpha}{\tau}}(\phi)(n)$.

  So take $\sigma, \sigma' \in \Type$ and $r \in \VRel{\sigma}{\sigma'}$ and we need to show
  $(e\gamma, e'\gamma') \in \TT{\den{\Delta, \alpha}(\extend{\phi}{\alpha}{r})}(n)$. Let $k \leq n$ and
  $(E, E')$ related at $k$. We have to show $\PRk{k}{E[e\gamma]} \leq \PR{E'[e'\gamma']}$.
  From Proposition~\ref{prop:logrel-is-compatible} we have
  \[(e\gamma, e\gamma') \in \TT{\den{\Delta, \alpha}(\extend{\phi}{\alpha}{r})}(n)\] and so
  $\PRk{k}{E[e\gamma]} \leq \PR{E'[e\gamma']}$. Let $\vec{\sigma}$ be the types for the right hand side
  in $\phi$. Then $E' \in \Stk{\tau\subst{\Delta, \alpha}{\vec{\sigma}, \sigma'}}$. Using
  the assumption of the lemma we get that $e\gamma' \ciuapproxrel e'\gamma'$ at the type
  $\tau\subst{\Delta, \alpha}{\vec{\sigma}, \sigma'}$ which immediately implies that
  $\PR{E'[e\gamma']} \leq \PR{E'[e'\gamma']}$ concluding the proof.
\end{proof}

\section{The probability of termination}
\label{app:sec:prob-conv}

We prove the claims from Section~\ref{sec:examples} about the termination probability.

\begin{proposition}
  \label{app:prop:left-computable}
  For any expression $e$, $\PR{e}$ is a left-computable real number.
\end{proposition}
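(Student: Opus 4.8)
The plan is to use the explicit fixed-point characterization $\PR{e} = \sup_{n\in\omega}\Phi^n(\bot)(e)$ coming from the least-fixed-point construction of $\PR{\cdot}$, and to exhibit the approximants $q_n = \Phi^n(\bot)(e)$ as a computable increasing sequence of rational numbers whose supremum is $\PR{e}$. First I would observe that $\bot \leq \Phi(\bot)$ and that $\Phi$ is monotone by Lemma~\ref{app:lem:prob-convergence-cont}, so the sequence $q_n = \Phi^n(\bot)(e)$ is increasing, and by definition of the least fixed point we have $\PR{e} = \sup_{n\in\omega} q_n$. It therefore suffices to prove that $n \mapsto q_n$ is a computable sequence taking values in $\QQ$.

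Next I would show by induction on $n$ that each $q_n$ is rational. The base case is immediate since $q_0 = \bot(e) = 0$. For the inductive step, if $e \in \Vals$ then $q_{n+1} = 1$; otherwise $q_{n+1} = \sum_{e \stepsto{p} e'} p \cdot \Phi^n(\bot)(e')$. Two facts make this a finite $\QQ$-linear combination. First, every weight $p$ occurring in the one-step relation is rational: it is $1$ for a $\beta$- or unfold-fold reduction and $\frac{1}{m}$ for a choice reduction $\RAND\,\underline m$. Second, the sum is finite, because by determinacy of the call-by-value decomposition of $e$ into an evaluation context and a redex, $e$ has at most one successor when the redex is deterministic, and exactly $m$ successors when the redex is $\RAND\,\underline m$. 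Since the $\Phi^n(\bot)(e')$ are rational by the induction hypothesis, so is $q_{n+1}$.

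Finally I would argue computability. The operational semantics is decidable: given a closed term $e$ one can effectively test whether $e \in \Vals$ and, if not, produce the finite list of pairs $(p, e')$ with $e \stepsto{p} e'$ together with their rational weights. Consequently $\Phi^n(\bot)(e)$ can be computed by recursion on $n$, applying the defining clause of $\Phi$ at each stage; the recursion terminates because only finitely many terms are reachable from $e$ in at most $n$ steps. Thus $\{q_n\}_{n\in\omega}$ is a computable increasing sequence of rationals with $\PR{e} = \sup_{n\in\omega} q_n$, so $\PR{e}$ is left-computable.

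The main obstacle is not any individual calculation but making the computability claim fully precise: one must check that finite branching of the reduction relation together with its decidability genuinely yields an effective procedure computing the approximants $q_n$, and that this procedure halts for each $n$. The rationality and monotonicity steps are routine once the fixed-point characterization and Lemma~\ref{app:lem:prob-convergence-cont} are in hand.
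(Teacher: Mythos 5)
Your proof is correct and follows essentially the same route as the paper's: both use the Kleene iteration characterization $\PR{e} = \sup_{n\in\omega}\Phi^n(\bot)(e)$, prove by induction that each approximant is rational (via finiteness of the sums and rationality of the weights), and conclude computability from decidability and finite branching of the one-step reduction relation. Your additional attention to monotonicity of the approximants and to why the branching is finite only makes explicit what the paper leaves implicit.
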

\begin{proof}
  We first prove by induction that for any $n$, $\Phi^n(\bot)$ restricts to a map
  $\Exps \to [0,1]\cap \QQ$. The proof is simple since the function $\bot$
  clearly maps into rationals and for the inductive step we use the fact that the sums in
  the definition of $\Phi$ are always finite, and the rational numbers are closed under
  finite sums.

  To conclude the proof we have by definition that $\PR{e} =
  \sup_{n\in\omega}\Phi^n(\bot)(e)$ and we have just shown that all the numbers
  $\Phi^n(\bot)(e)$ are rational. Moreover the sequence
  $\left\{\Phi^n(\bot)(e)\right\}_{n\in\NN}$
  is computable, since for a given $n$ we only need to
  check all the reductions from $e$ of length at most $n$ to determine the value of
  $\Phi^n(\bot)(e)$
  and the reduction relation $\stepsto{p}$ is naturally computable.
\end{proof}

\begin{example}
  \label{app:ex:prob-conv-noncomputable}
  To see that the probability of termination can also be non-computable we informally
  describe a program whose probability of termination would allow us to solve the halting
  problem were it computable.

  The program we construct is recursively defined as $T = \TAPP{\TAPP{\FIX}},\phi$
  where
  \begin{align*}
    \phi = \FUN{f}{\FUN{x}{t\,x \oplus (\Omega \oplus f\,(\syn{succ}~x))}}
  \end{align*}
  where $t\,x$ is a program that runs the $x$-th Turing machine on the empty input and
  does not use any choice reductions. Thus $\PR{t\,x} \in \{0,1\}$. It is well known that
  the empty string acceptance problem is undecidable. Note that we put $\Omega$ in the
  program to ensure that every second digit in binary will be $0$. It is an easy
  computation to show that
  \begin{align*}
    \PR{T\,\underline 1} = \sum_{n=0}^{\infty}\frac{1}{2^{2n+1}}p_{n+1}
  \end{align*}
  where $p_n = 1$ if the $n$-th Turing machine terminates on the empty input and $0$
  otherwise. If $\PR{T\,\underline 1}$ were computable we could decide whether a given
  Turing machine accepts the empty string by computing its index $n$ and then computing
  the first $2n$ digits of $\PR{T\,\underline 1}$.
\end{example}

We will now generalize the last example and show that any left-computable real arises as
the probability of termination of a program. Technically, we show that given a term of the
language that computes an increasing bounded sequence of rationals (represented as pairs
of naturals) we can define a program that terminates with probability the supremum of the
sequence. We then use the fact that our language $\lang$ is Turing complete to claim that
any computable sequence of rationals can be represented as such a term of $\lang$.

\begin{proposition}
  \label{app:prop:every-left-computable-is-termprob}
  For every left-computable real in $[0,1]$ there is a program $e_r$ of type $\ONE\to\ONE$
  such that $\PR{e_r\,\unitval} = r$.
\end{proposition}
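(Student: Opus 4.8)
The plan is to realize $r$ as a telescoping sum of non-negative rationals, and to build a program that terminates in ``round $n$'' with absolute probability equal to the $n$-th increment. First I would normalize the data. Since $r$ is left-computable, fix a computable increasing sequence $\{q_n\}_{n\in\omega}$ of rationals with $r = \sup_n q_n$; replacing each $q_n$ by $\max(0,\min(1,q_n))$ keeps the sequence computable and increasing with the same supremum, so I may assume $0 \le q_n \le 1$. Setting $q_{-1} = 0$ and $d_n = q_n - q_{n-1} \ge 0$, we then have $\sum_{n=0}^{\infty} d_n = \sup_n q_n = r$, with every partial sum $\sum_{k \le n} d_k = q_n$ lying in $[0,1]$.

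Next I would define, for each $n$, the conditional termination probability $c_n = \frac{q_n - q_{n-1}}{1 - q_{n-1}} \in [0,1]$ (when $q_{n-1} = 1$, and hence $r = 1$, I set $c_n = 1$; this round is reached with probability $0$, so the choice is immaterial). Writing $c_n = a_n/b_n$ with $a_n, b_n \in \NN$ and $b_n \ge 1$, the map $n \mapsto (a_n, b_n)$ is computable, so by Turing completeness of $\lang$ there is a closed term $\syn{cond} : \NAT \to \NAT \times \NAT$ computing it. I would then take $e_r = \FUN{\_}{\syn{loop}\,\underline 0}$, where $\syn{loop} : \NAT \to \ONE$ is $\TAPP{\TAPP{\FIX}}$ applied to the functional that, on input $n$, computes $(a_n,b_n) = \syn{cond}\,n$, draws $k = \RAND\,b_n$, and returns $\unitval$ if $k \le a_n$ and otherwise calls itself on $\SUCC\,n$. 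Since $\RAND\,\underline b$ is uniform on $\{1,\dots,b\}$ and numerals start at $\underline 1$, in each round the program terminates with conditional probability $c_n$ and proceeds to the next round with conditional probability $1 - c_n$. The residual mass $1 - r$ is carried entirely by the single non-terminating path that advances through the rounds forever, so no explicit divergent term is needed.

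It then remains to show $\PR{e_r\,\unitval} = r$. Since $e_r\,\unitval \stepstopure \syn{loop}\,\underline 0$ via choice-free ($\beta$- and unfold--fold) reductions, Lemma~\ref{lem:choice-free-reductions-dont-change} gives $\PR{e_r\,\unitval} = \PR{\syn{loop}\,\underline 0}$. By telescoping, the probability of reaching round $n$ is $\prod_{k<n}(1-c_k) = 1 - q_{n-1}$, and the probability of terminating exactly in round $n$ is $(1 - q_{n-1})\,c_n = d_n$. I would make this rigorous through the least-fixed-point characterization $\PR{\cdot} = \sup_m \Phi^m(\bot)$. For the upper bound, within $m$ reduction steps the program can have entered at most round $m$, so $\Phi^m(\bot)(\syn{loop}\,\underline 0)$ is bounded by the total probability of terminating in rounds $0,\dots,m$, namely $q_m \le r$; taking the supremum gives $\PR{\syn{loop}\,\underline 0} \le r$. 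For the lower bound, all executions terminating in rounds $0,\dots,N$ are finite paths of bounded length $M(N)$ (each round consumes a fixed number of steps, and a $\RAND$ step has the same length whatever its outcome), so $\Phi^{M(N)}(\bot)(\syn{loop}\,\underline 0) \ge q_N$; hence $\PR{\syn{loop}\,\underline 0} \ge \sup_N q_N = r$. The two bounds yield $\PR{\syn{loop}\,\underline 0} = r$.

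The main obstacle is the computation of $\PR{\syn{loop}\,\underline 0}$ rather than the construction itself: the recurrence $L_n = c_n + (1-c_n)L_{n+1}$ satisfied by $L_n = \PR{\syn{loop}\,\underline n}$ does \emph{not} determine the $L_n$ (the constant solution $L_n \equiv 1$ satisfies it as well), so one genuinely has to invoke minimality of the least fixed point, which is precisely what the two-sided $\Phi^m(\bot)$ estimate supplies. The remaining work is bookkeeping: verifying $c_n \in [0,1]$, handling the degenerate case $q_{n-1} = 1$, and confirming that $\RAND\,b_n$ followed by the comparison $k \le a_n$ realizes termination probability $a_n/b_n$. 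The appeal to Turing completeness to obtain $\syn{cond}$ is routine given $\FIX$ and the arithmetic primitives.
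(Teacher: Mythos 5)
Your proposal is correct and takes essentially the same approach as the paper: the paper likewise realizes $r$ by a recursive loop that in round $n$ terminates with conditional probability $\frac{q_n - q_{n-1}}{1 - q_{n-1}}$ (implemented there by rescaling the sequence term at each recursive call, rather than by passing a round counter to a separate $\syn{cond}$ function), and likewise pins down the termination probability by a two-sided sandwich exploiting minimality of the least fixed point. The only notable difference is bookkeeping: where you bound the plain approximants $\Phi^m(\bot)$ directly, the paper computes the stratified probabilities $\PRk{2+2n}{e\,r} = \frac{k_n}{\ell_n}$ and invokes Proposition~\ref{prop:prob-smaller}, which is the same sandwich argument in slightly different clothing.
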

\begin{proof}
  So let $r : \NAT \to \NAT \times \NAT$ compute an increasing sequence of rationals in
  the interval $[0,1]$. Additionally assume that for all $n \in \NN$.
  \begin{align*}
    r\,\underline n \stepstopure \PAIR{\underline {k_n}}{\underline {\ell_n}}
  \end{align*}
  for some $k_n, \ell_n \in \NN$. That is, $r$ does not use choice reductions. This is not
  an essential limitation, but simplifies the argument which we are about to give.

  First we define a recursive function $e$ of type $e : (\NAT \to \NAT \times \NAT) \to
  \ONE$ as $e = \TAPP{\TAPP{\FIX}}\,\phi$ where
  \begin{align*}
    \phi = \FUN{f}{\FUN{r}{&\syn{let}~(\underline k,\underline\ell)~=~r\,\underline 1~\syn{in}\\
        &\syn{let}~y~=~\RAND{\underline \ell}~\syn{in}\\
        &\IFT{y \leq \underline k}{\unitval}{f\,r'}}}
  \end{align*}
  and
  \begin{align*}
    r' = \FUN{z}{\frac{r\,(\syn{succ}\,z) - (\underline k, \underline \ell)}{1 -
        (\underline k, \underline \ell)}}
  \end{align*}
  and subtraction and division is implemented in the obvious way. Note that the condition
  in $\phi$ ensures that $(\underline k, \underline \ell)$ does not represent the rational
  number $1$ and therefore division would make sense. But technically, since we implement
  rationals with pairs of naturals no exception can occur and we just represent the pair
  with the second component being $\underline 0$.

  Let $f$ and $r$ be values of the appropriate type. We have
  \begin{align*}
    \PRk{m+1}{\phi\,f\,r} \leq \frac{k_1}{\ell_1}\PRk{m}{\unitval} + \frac{\ell_1 -
      k_1}{\ell_1}\cdot\PRk{m}{f\,r'}
  \end{align*}
  where $r\,\underline 1 \stepstopure (\underline k_1, \underline \ell_1)$. The inequality
  comes from the fact that applying $r$ might take some unfold-fold reductions. Iterating
  this we get
  \begin{align*}
    \PRk{m+1+2n}{e\,r} \leq \frac{k_n}{\ell_n} + \frac{\ell_n -
      k_n}{\ell_n}\cdot\PRk{m+1}{e\,r^{(n)}}
  \end{align*}
  where $r\,\underline n \stepstopure (\underline k_n, \underline \ell_n)$ and
  \[r^{(n)} = \FUN{z}{\frac{r\,(\syn{succ}^n\,z) - (\underline k_n, \underline \ell_n)}{1
      - (\underline k_n, \underline \ell_n)}}\] is the $n$-th iteration of the $'$ used on
  $r$ in $\phi$.

  It is easy to see that $\PRk{1}{e\,r^{(n)}} = 0$ since it takes at least one unfold-fold
  and one choice reduction to terminate.  Thus picking $m=1$ we have $\PRk{2+2n}{e\,r} =
  \frac{k_n}{\ell_n}$ and thus
  \begin{align*}
    \sup_{n\in\omega}\PRk{n}{e\,r} \leq \sup_{n\in\omega}\frac{k_n}{\ell_n}
  \end{align*}

  Using the same reasoning as above we also have
  \begin{align*}
    \PR{e\,r} \geq \frac{k_n}{\ell_n} + \frac{\ell_n - k_n}{\ell_n}\cdot\PR{e\,r^{(n)}}
    \geq \frac{k_n}{\ell_n}
  \end{align*}
  which shows (using Proposition~\ref{prop:prob-smaller}) that
  \begin{align*}
    \sup_{n\in\omega}\frac{k_n}{\ell_n} \leq \PR{e\,r} \leq \sup_{n\in\omega}\PRk{n}{e\,r}
    \leq \sup_{n\in\omega}\frac{k_n}{\ell_n}
  \end{align*}
  and so
  \begin{align*}
    \sup_{n\in\omega}\frac{k_n}{\ell_n} = \PR{e\,r}.
  \end{align*}
\end{proof}

\begin{figure}[htb]
  \centering
  \begin{align*}
    e \oplus e \ctxequivrel e \qquad       e_1 \oplus e_2 \ctxequivrel e_2 \oplus e_1 \qquad
    e \oplus \Omega \ctxapproxrel e\\
    \text{if } e_1 \stepstopure e_2 \text{ then } e_1 \ctxequivrel e_2 \qquad
    \text{if } e_1 \oplus e_2 \ctxequivrel e_1 \text{ then } e_1 \ctxequivrel e_2
  \end{align*}
  \caption{Basic properties of $\ctxapproxrel$ and $\ctxequivrel$. We write $\Omega$ for
    any diverging term (i.e. $\PR{\Omega} = 0$) and $e \oplus e'$ as syntactic sugar for
    $\IFZERO{\RAND{\underline 2}}{e}{e'}$. Note that the choice when evaluating $e \oplus
    e'$ is made \emph{before} $e$ and $e'$ are evaluated.}
  \label{app:fig:basic-properties-ctx}
\end{figure}

\section{Distributions}
\label{app:sec:distributions}

We now define distributions and prove some of their properties and properties of the
probability of termination which are used in the examples.

By a distribution we mean a subprobability measure on the discrete space $\Vals$
of values.
Let
\[
\dists = \{ f : \Vals \to [0,1] | \sum_{v\in\Vals}f(v) \leq 1\}
\]
be the space of subprobability measures on $\Vals$. To be precise, $f \in \dists$ are not
measures, but given any $f$ we can define a subprobability measure $\mu_f(A) = \sum_{v\in
  A}f(v)$ and given any subprobability measure $\mu$, we can define $f_\mu \in \dists$ as
the Radon-Nikodym derivative with respect to the counting measure. Or in more prosaic
terms $f_\mu(v) = \mu\left(\{v\}\right)$. It is easy to see that these two operations are
mutually inverse and since $f\in\dists$ are easier to work with we choose this
presentation.

\begin{lemma}
  \label{app:lem:dists-cpo}
  $\dists$ ordered pointwise is a pointed $\omega$-cpo.
\end{lemma}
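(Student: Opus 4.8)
The plan is to verify the three defining properties of a pointed $\omega$-cpo: that the pointwise relation is a partial order, that $\dists$ has a least element, and that every $\omega$-chain in $\dists$ has a least upper bound lying in $\dists$.

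The first two points are immediate. The pointwise order, $f \leq g$ iff $f(v) \leq g(v)$ for all $v \in \Vals$, is a partial order because it is inherited from the product order on $[0,1]^{\Vals}$, so reflexivity, transitivity and antisymmetry all hold coordinatewise. The constant function $\bot = \lambda v.\,0$ lies in $\dists$ since $\sum_{v \in \Vals} 0 = 0 \leq 1$, and it is pointwise below every element, so it is the least element.

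The substance is in the third point. Given an $\omega$-chain $\seq{f}$ in $\dists$, I would define its candidate supremum $f$ pointwise by $f(v) = \sup_{n\in\omega} f_n(v)$. Since each $f_n(v) \in [0,1]$ and $[0,1]$ is a complete lattice, $f(v) \in [0,1]$ is well defined, and $f$ is an upper bound of the chain; moreover it is the least one, because any upper bound $g$ satisfies $f_n(v) \leq g(v)$ for all $n$, whence $f(v) = \sup_n f_n(v) \leq g(v)$. The only nontrivial step is to check that $f$ actually lies in $\dists$, i.e. that the constraint $\sum_{v \in \Vals} f(v) \leq 1$ is preserved in the limit.

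This last point is the main obstacle, and it amounts to exchanging a supremum with a (possibly infinite) sum. I would argue as follows. For any \emph{finite} subset $F \subseteq \Vals$, continuity of finite addition (the same fact used in Lemma~\ref{app:lem:prob-convergence-cont}) together with monotonicity of the chain gives $\sum_{v \in F} f(v) = \sum_{v \in F}\sup_n f_n(v) = \sup_n \sum_{v \in F} f_n(v) \leq \sup_n \sum_{v \in \Vals} f_n(v) \leq 1$, where the final inequality uses $f_n \in \dists$ for each $n$. Since the sum of a nonnegative function is by definition the supremum of its finite partial sums, $\sum_{v\in\Vals} f(v) = \sup_{F \text{ finite}} \sum_{v\in F} f(v) \leq 1$, so $f \in \dists$. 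This shows that $f$ is the least upper bound of the chain inside $\dists$, completing the verification that $\dists$ is a pointed $\omega$-cpo.
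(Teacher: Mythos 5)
Your proof is correct, and its skeleton matches the paper's: the bottom element is the constant-zero function, the candidate least upper bound of a chain is the pointwise supremum, and the only substantive issue is verifying that this supremum still satisfies $\sum_{v\in\Vals} f(v) \leq 1$. Where you genuinely differ is in how that exchange of supremum and infinite sum is justified. The paper invokes Fatou's lemma with respect to the counting measure on $\Vals$: since the chain is monotone, $\sup_{n} f_n(v) = \liminf_{n\to\infty} f_n(v)$, and hence $\sum_{v\in\Vals}\sup_{n} f_n(v) \leq \liminf_{n\to\infty} \sum_{v\in\Vals} f_n(v) \leq 1$. You instead argue elementarily: for each finite $F \subseteq \Vals$ you commute the supremum past the finite sum (using monotonicity of the chain, which is exactly where that hypothesis is needed), bound each $\sum_{v\in F} f_n(v)$ by $1$, and then use the fact that the unordered sum of a nonnegative function is the supremum of its finite partial sums. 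The two routes buy different things: the paper's is a one-line appeal to a standard measure-theoretic result, consistent with the toolbox it uses elsewhere in the same appendix (Tonelli and dominated convergence in Lemma~\ref{app:lem:prob-is-sum-of-dist}); yours is self-contained and, in effect, an inline proof of the monotone convergence theorem for counting measure in this special case. Your version also isolates the role of monotonicity, which Fatou's lemma does not require, while the paper's version generalizes immediately to non-monotone sequences. Either argument is complete and correct.
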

\begin{proof}
  The bottom element is the everywhere $0$ function. Let $\{f_n\}_{n\in\omega}$
  be an $\omega$-chain. Define the limit function $f$ as the pointwise supremum
  \begin{align*}
    f(v) = \sup_{n\in\omega}f_n(v).
  \end{align*}
  Clearly all pointwise suprema exist and $f$ is the least upper bound, provided we can
  show that $f\in\dists$. To show this last fact we need to show
  \begin{align*}
    \sum_{v\in\Vals}\sup_{n\in\omega}f_n(v) \leq 1.
  \end{align*}
  but this is a simple consequence of Fatou's lemma since from the assumption that
  $\{f_n\}_{n\in\omega}$ we have
  $\sup_{n\in\omega}f_n(v) = \lim_{n\to\infty}f_n(v) = \liminf_{n\to\infty}f_n(v)$ and so
  by Fatou's lemma (relative to the counting measure on $\Vals$) we have
  \begin{align*}
    \sum_{v\in\Vals}\sup_{n\in\omega}f_n(v) \leq
    \liminf_{n\to\infty}\left(\sum_{v\in\Vals}f_n(v)\right) \leq \liminf_{n\to\infty} 1 = 1.
  \end{align*}
\end{proof}

Now define $\Xi : (\Exps \to \dists) \to (\Exps \to \dists)$ as follows
\begin{align*}
  \Xi(\phi)(e) =
  \begin{cases}
    \delta_e & \text{ if } e\in\Vals\\
    \displaystyle\sum_{e \stepsto{p}{e'}} p \cdot \phi\left(e'\right) &\text{otherwise}
  \end{cases}
\end{align*}
where $\delta_e$ is (the density function of) the Dirac measure at point $e$.  Since
$\dists$ is an $\omega$-cpo so is $\Exps \to \dists$ ordered pointwise. It is easy to see
that in this ordering $\Xi$ is monotone and continuous and so by Kleene's fixed point
theorem it has a least fixed point reached in $\omega$ iterations. Let $\Dl =
\sup_{n\in\omega}\left(\Xi^n(\bot)\right)$ be this fixed point.

\begin{lemma}
  \label{app:lem:dist-nonzero-path}
  Let $e \in \Exps$ and $v \in \Vals$. If $\Dl(e)(v) > 0$ then there exists a path $\pi$
  from $e$ to $v$, i.e. $e$ steps to $v$.
\end{lemma}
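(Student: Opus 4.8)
The plan is to reduce the statement about the least fixed point $\Dl$ to a statement about its finite approximants $\Xi^n(\bot)$ and then argue by induction on $n$. First I would observe that, since the order on $\Exps\to\dists$ is pointwise and suprema in $\dists$ are computed pointwise (Lemma~\ref{app:lem:dists-cpo}), we have $\Dl(e)(v) = \sup_{n\in\omega}\Xi^n(\bot)(e)(v)$. The sequence $\{\Xi^n(\bot)(e)(v)\}_{n\in\omega}$ is an increasing chain of non-negative reals, because $\Xi$ is monotone and $\bot$ is the least element; hence its supremum is strictly positive only if some term $\Xi^N(\bot)(e)(v)$ is already strictly positive. Thus it suffices to prove, for every $n$, that $\Xi^n(\bot)(e)(v) > 0$ implies the existence of a path from $e$ to $v$.

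I would carry out this implication by induction on $n$. The base case $n = 0$ is vacuous, since $\bot$ is the everywhere-zero function and so $\Xi^0(\bot)(e)(v) = 0$. For the inductive step, write $\phi = \Xi^n(\bot)$ and assume $\Xi(\phi)(e)(v) > 0$; I split on whether $e$ is a value. If $e \in \Vals$ then $\Xi(\phi)(e) = \delta_e$, so $\Xi(\phi)(e)(v) > 0$ forces $v = e$, and the empty path witnesses $\redpath{\pi}{e}{v}$. If $e \notin \Vals$ then $\Xi(\phi)(e)(v) = \sum_{e \stepsto{p} e'} p\cdot \phi(e')(v)$, and this being strictly positive means at least one summand is strictly positive; that is, there is a reduction $e \stepsto{p} e'$ with $p > 0$ and $\phi(e')(v) > 0$. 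Applying the induction hypothesis to $e'$ yields a path $\pi'$ with $\redpath{\pi'}{e'}{v}$, and prepending the step $e \stepsto{p} e'$ produces the desired path from $e$ to $v$.

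I do not expect any genuine obstacle here: the argument is a routine ``a positive supremum is already witnessed at a finite stage'' reduction followed by a one-line induction. The only points requiring a little care are justifying that the supremum may be pushed inside evaluation at $(e,v)$ (immediate from pointwise suprema) and handling the value case, where the path realizing $e \leadsto^* v$ is the reflexive (empty) one rather than an honest reduction step.
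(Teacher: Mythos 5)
Your proposal is correct and is essentially the paper's own proof in unfolded form: the paper runs Scott induction on the set $\Sl = \left\{ f : \Exps \to \dists \isetsep \forall e, v,\ f(e)(v) > 0 \implies \exists \pi,\ \redpath{\pi}{e}{v}\right\}$, and your three ingredients --- the vacuous base case at $\bot$, the observation that a positive supremum of an increasing chain is already witnessed at some finite stage, and the value/non-value case split for the step through $\Xi$ --- are precisely the paper's claims that $\bot \in \Sl$, that $\Sl$ is closed under suprema of $\omega$-chains, and that $\Sl$ is closed under $\Xi$. Since the paper defines $\Dl$ as $\sup_{n\in\omega}\Xi^n(\bot)$, your explicit induction on the Kleene approximants carries the same content, merely spelling out the step case the paper dismisses as ``easy to see''.
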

\begin{proof}
  We use Scott induction. Define
  \begin{align*}
    \Sl = \left\{ f : \Exps \to \dists \isetsep \forall e, v, f(e)(v) > 0 \implies \exists \pi, \redpath{\pi}{e}{v}\right\}
  \end{align*}
  The set $\Sl$ contains $\bot$. To see that it is closed under $\omega$-chains observe
  that if $\left(\sup_{n\in\omega}f_n\right)(e)(v) > 0$ then there must be $n\in\omega$,
  such that $f_n(e)(v)>0$ so we may use the path from $e$ to $v$ that we know exists from
  the assumption that $f_n \in\Sl$.

  It is similarly easy to see that given $f \in \Sl$ we have $\Xi(f) \in \Sl$. Thus we
  have that $\Dl \in \Sl$ concluding the proof.
\end{proof}

\begin{lemma}
  \label{app:lem:prob-is-sum-of-dist}
  For any expression $e\in\Exps$ we have
  \begin{align*}
    \sum_{v\in\Vals}\Dl(e)(v) = \PR{e}
  \end{align*}
\end{lemma}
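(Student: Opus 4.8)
The plan is to show that the operation of taking the \emph{total mass} of a distribution intertwines the two least-fixed-point constructions defining $\Dl$ and $\PR{\cdot}$. Define the total-mass functional $M : (\Exps \to \dists) \to \Fl$ by $M(\phi)(e) = \sum_{v\in\Vals}\phi(e)(v)$; this indeed lands in $\Fl = \Exps\to\interval$ since each $\phi(e)\in\dists$ has total mass at most $1$, and it is clearly monotone for the pointwise orders. The lemma will follow once I establish $M(\Dl) = \PR{\cdot}$ and evaluate both sides at $e$, because the left-hand side is by definition $\sum_{v\in\Vals}\Dl(e)(v)$.

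First I would check that $M$ commutes with the two step functions, i.e.\ $M \circ \Xi = \Phi \circ M$. On a value $e$ both sides give $1$, since $\Xi(\phi)(e) = \delta_e$ has total mass $1$ while $\Phi(M(\phi))(e) = 1$. On a non-value $e$ the sum over one-step reductions $e \stepsto{p} e'$ is finite (the reduction relation is finitely branching, the only branching coming from $\RAND$), so it may be interchanged with the countable sum over $\Vals$, giving
\[ M(\Xi(\phi))(e) = \sum_{v\in\Vals}\sum_{e\stepsto{p}e'} p\cdot\phi(e')(v) = \sum_{e\stepsto{p}e'} p\cdot M(\phi)(e') = \Phi(M(\phi))(e). \]
Since $M(\bot)$ is the constant-$0$ function, which is exactly $\bot\in\Fl$, a trivial induction on $n$ then yields $M(\Xi^n(\bot)) = \Phi^n(\bot)$ for all $n$.

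Next I would show that $M$ preserves suprema of $\omega$-chains. For an increasing chain $\{\phi_n\}$ the suprema in $\Exps\to\dists$ are computed pointwise by Lemma~\ref{app:lem:dists-cpo}, so $M(\sup_n\phi_n)(e) = \sum_{v\in\Vals}\sup_n\phi_n(e)(v)$, and interchanging the supremum of an increasing sequence of nonnegative summands with the sum over the countable set $\Vals$ is precisely the monotone convergence theorem (relative to the counting measure). This gives $M(\sup_n\phi_n) = \sup_n M(\phi_n)$. Combining the two facts with $\Dl = \sup_n\Xi^n(\bot)$ and $\PR{\cdot} = \sup_n\Phi^n(\bot)$ yields
\[ M(\Dl) = M\!\left(\sup_n\Xi^n(\bot)\right) = \sup_n M(\Xi^n(\bot)) = \sup_n\Phi^n(\bot) = \PR{\cdot}, \]
and evaluating at $e$ gives the claim.

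The only genuinely analytic point is the sum/supremum interchange in the continuity step, where countable additivity forces an appeal to monotone convergence rather than a finitary argument; by contrast, the finite interchange in the commutation step is routine since the operational semantics is finitely branching. Everything else reduces to unfolding the definitions of $\Xi$, $\Phi$, and $\dists$.
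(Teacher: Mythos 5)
Your proposal is correct and takes essentially the same route as the paper: your commutation property $M \circ \Xi = \Phi \circ M$ together with $M(\bot) = \bot$ is precisely the paper's induction showing $\sum_{v\in\Vals}\Xi^{n}(\bot)(e)(v) = \Phi^{n}(\bot)(e)$ for all $n$, and your continuity of $M$ is precisely the paper's final exchange of supremum and sum. The only cosmetic differences are in the analytic citations --- the paper invokes Tonelli's theorem for the sum interchange and the dominated convergence theorem for the limit step, where you use finite branching of the reduction relation and monotone convergence (arguably the more direct tool, since the approximants increase pointwise).
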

\begin{proof}
  First we show by induction on $n$ that all the finite approximations of $\PR{e}$ and
  $\Dl(e)$ agree.
  \begin{itemize}
  \item The base case is trivial since by definition \[\sum_{v\in\Vals}\Xi^0(\bot)(e)(v) = 0 = \Phi^0(\bot)(e)\]
  \item For the inductive case we consider two cases. If $e \in \Vals$ then both sides are
    $1$. In the other case we have
    \begin{align*}
      \sum_{v\in\Vals}\Xi^{n+1}(\bot)(e)(v) &=
      \sum_{v\in\Vals}\left(\sum_{e \stepsto{p}{e'}} p \cdot \Xi^n(e')\right)(v)\\ &=
      \sum_{v\in\Vals}\left(\sum_{e \stepsto{p}{e'}} p \cdot \Xi^n(e')(v)\right)
      \intertext{by Tonelli's theorem we can we can interchange the sums to get}
      &=\sum_{e \stepsto{p}{e'}} \left(p \sum_{v\in\Vals} \Xi^n(e')(v)\right)\\
      &=\sum_{e \stepsto{p}{e'}} p\cdot \Phi^n(\bot)(e') =
        \Phi^{n+1}(\bot)(e)
    \end{align*}
  \end{itemize}
  Thus we have that for all $n$, 
  \begin{align*}
    \sum_{v\in\Vals}\Xi^{n}(\bot)(e)(v) &= \Phi^{n}(\bot)(e)
    \intertext{and so}
    \sup_{n\in\omega}\left(\sum_{v\in\Vals}\Xi^{n}(\bot)(e)(v)\right)
    &= \sup_{n\in\omega}\left(\Phi^{n}(\bot)(e)\right) = \PR{e}
  \end{align*}
  By the dominated convergence theorem we can exchange the $\sup$ (which is the limit) and
  the sum on the left to get
  \begin{align*}
    \sup_{n\in\omega}\left(\sum_{v\in\Vals}\Xi^{n}(\bot)(e)(v)\right) &=
    \sum_{v\in\Vals}\sup_{n\in\omega}\left(\Xi^{n}(\bot)(e)(v)\right) \\ &= \sum_{v\in\Vals}\Dl(e)(v)
  \end{align*}
  as required.
\end{proof}

\begin{proposition}[Monadic bind for distributions]
  \label{app:prop:dist-bind}
  Let $e \in \Exps$ and $E$ an evaluation context of appropriate type.
  \begin{align*}
    \Dl\left(E[e]\right) = \sum_{v\in\Vals}\Dl(e)(v)\cdot\Dl\left(E[v]\right).
  \end{align*}
\end{proposition}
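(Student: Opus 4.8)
The plan is to establish the identity pointwise, i.e.\ to prove $\Dl(E[e])(u) = \sum_{v\in\Vals}\Dl(e)(v)\cdot\Dl(E[v])(u)$ for every $u\in\Vals$, and to treat the two inequalities $\le$ and $\ge$ separately. In both cases I would argue by induction on the finite approximants of the least fixed point, writing $\Dl = \sup_{n\in\omega}\Xi^n(\bot)$ as in the construction of $\Dl$, and then pass to the supremum at the end. Two facts about $\Dl$ will be used throughout: first, since $\Dl$ is a fixed point of $\Xi$, we have $\Dl(g) = \delta_g$ whenever $g\in\Vals$ and $\Dl(g) = \sum_{g\stepsto{p}g'}p\cdot\Dl(g')$ otherwise; second, since the defining chain $\{\Xi^n(\bot)\}_{n\in\omega}$ is increasing, $\Xi^n(\bot)\le\Dl$ for every $n$.

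The main technical ingredient is a standard property of call-by-value evaluation contexts, obtained from unique decomposition together with the composition law $E[E'[e]] = (\ecomp{E}{E'})[e]$: if $e\notin\Vals$ then $E[e]\notin\Vals$, and the one-step reductions out of $E[e]$ are exactly the reductions $E[e]\stepsto{p}E[e']$ induced by reductions $e\stepsto{p}e'$, with matching weights. Indeed, a non-value $e$ decomposes uniquely as $e = E_1[r]$ for a basic redex $r$, so $E[e] = (\ecomp{E}{E_1})[r]$ is the unique redex decomposition of $E[e]$, whose reductions are precisely $E[E_1[r']] = E[e']$. Consequently, when $e\notin\Vals$ the fixed-point equation specializes to $\Dl(E[e]) = \sum_{e\stepsto{p}e'}p\cdot\Dl(E[e'])$, and likewise for each finite approximant.

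For the inequality $\ge$ I would show by induction on $n$ that $\sum_{v\in\Vals}\Xi^n(\bot)(e)(v)\cdot\Dl(E[v]) \le \Dl(E[e])$ and then take suprema over $n$, the right-hand side being independent of $n$. The base case is trivial. In the inductive step I split on whether $e$ is a value. If $e\in\Vals$ then $\Xi^{n+1}(\bot)(e) = \delta_e$ and the sum collapses to $\Dl(E[e])$, giving equality. If $e\notin\Vals$ then $\Xi^{n+1}(\bot)(e) = \sum_{e\stepsto{p}e'}p\cdot\Xi^n(\bot)(e')$; substituting, interchanging the two nonnegative sums by Tonelli's theorem, applying the induction hypothesis termwise, and using the reduction-structure identity above yields the bound $\sum_{e\stepsto{p}e'}p\cdot\Dl(E[e']) = \Dl(E[e])$. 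The inequality $\le$ is entirely parallel: I would show by induction on $n$ that $\Xi^n(\bot)(E[e]) \le \sum_{v\in\Vals}\Dl(e)(v)\cdot\Dl(E[v])$, again splitting on whether $e$ is a value. When $e\in\Vals$ the right-hand side equals $\Dl(E[e])$ and the bound is immediate from $\Xi^n(\bot)\le\Dl$; when $e\notin\Vals$ the argument mirrors the $\ge$ case, using the reduction-structure identity, Tonelli, and the induction hypothesis.

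I expect the main obstacle to be the bookkeeping in the non-value case: namely justifying the reduction-structure lemma carefully and the interchange of the two sums. The interchange is legitimate because all summands are nonnegative and the branching of $\stepsto{p}$ is at most countable, so Tonelli applies with no integrability side conditions. The value case, which might a priori appear to require a separate analysis of whether $E[e]$ is itself a value, is in fact handled uniformly in both directions: since $e\in\Vals$ forces $\Dl(e) = \delta_e$, the right-hand side reduces to $\Dl(E[e])$ and the comparison becomes trivial.
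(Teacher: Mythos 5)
Your proof is correct, but it follows a genuinely different route from the paper's. The paper proves the identity by first establishing, by induction on $\ell$, an explicit path-sum decomposition of the approximants, $\Xi^\ell(\bot)(E[e]) = \sum_{v\in\Vals}\sum_{\redpath{\pi}{e}{v},\,\len{\pi}\leq\ell}\weight{\pi}\cdot\Xi^{\ell-\len{\pi}}(\bot)(E[v])$, together with the characterization $\Dl(e)(v) = \sum_{\redpath{\pi}{e}{v}}\weight{\pi}$, and then obtains the equality in one chain by interchanging the supremum over $\ell$ with the sum over $v$ (justified by monotonicity in $\ell$). You instead avoid path combinatorics entirely: you split the equality into two inequalities, each proved by induction on the approximants $\Xi^n(\bot)$, using only the fixed-point equations for $\Dl$ and the unique-decomposition property of call-by-value evaluation contexts (reductions out of $E[e]$ for non-value $e$ are exactly the $E$-images of reductions out of $e$). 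Note that the paper's ``easy induction'' for its path-sum identity secretly rests on the same decomposition fact, so neither proof escapes it; you simply make it explicit. Your approach buys a more elementary, self-contained argument with no bookkeeping of path lengths and offsets $\ell - \len{\pi}$; the paper's approach buys reusable intermediate formulas, which it indeed exploits immediately afterwards (Corollary~\ref{app:cor:dist-bind} is read off from the path characterization, and Proposition~\ref{app:prop:prk-eval-ctxts} is proved in the same style). One small detail to make explicit in your $\geq$ direction: after establishing $\sum_{v\in\Vals}\Xi^n(\bot)(e)(v)\cdot\Dl(E[v])(u) \leq \Dl(E[e])(u)$ for all $n$, passing to $\sum_{v\in\Vals}\Dl(e)(v)\cdot\Dl(E[v])(u) \leq \Dl(E[e])(u)$ requires interchanging the supremum over $n$ with the sum over $v$; this is monotone convergence for the counting measure (the summands are nonnegative and increasing in $n$), the same kind of interchange the paper performs, so it is routine --- but it should be stated, since it is exactly the step where the pointwise suprema defining $\Dl(e)$ enter.
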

\begin{proof}
  It is easy to show by induction on $\ell$ that
  \begin{align}
    \label{app:eq:bind-approx}
    \forall e \in \Exps, \Xi^\ell\left(\bot\right)\left(E[e]\right) =
    \sum_{v\in\Vals}\sum_{\substack{\redpath{\pi}{e}{v}\\\len{\pi}\leq \ell}}\weight{\pi}\cdot\Xi^{\ell-\len{\pi}}\left(E[v]\right)
  \end{align}
  (using the fact that the length of the empty path is $0$ and its weight $1$).

  Similarly it is easy to show by induction on $\ell$ that
  \begin{align}
    \label{app:eq:approx-distribution-paths}
    \forall e \in \Exps, \Xi^{\ell+1}\left(\bot\right)\left(e\right)(v) &=
    \sum_{\substack{\redpath{\pi}{e}{v}\\\len{\pi}\leq \ell}}\weight{\pi}
  \end{align}
  which immediately implies
  \begin{align}
    \forall e \in \Exps, \Dl(e)(v) &=
    \sum_{\redpath{\pi}{e}{v}}\weight{\pi}
  \end{align}
  Using these we have
  \begin{align*}
    \Dl(E[e]) &= \sup_{\ell\in\omega}
                 \sum_{v\in\Vals}\sum_{\substack{\redpath{\pi}{e}{v}\\\len{\pi}\leq
    \ell}}\weight{\pi}\cdot\Xi^{\ell-\len{\pi}}\left(E[v]\right)
    \intertext{and since for each $v$ the sequence $\displaystyle\sum_{\substack{\redpath{\pi}{e}{v}\\\len{\pi}\leq
    \ell}}\weight{\pi}\cdot\Xi^{\ell-\len{\pi}}\left(E[v]\right)$ is increasing with
    $\ell$ we have}
    &= \sum_{v\in\Vals}
                 \sup_{\ell\in\omega}\sum_{\substack{\redpath{\pi}{e}{v}\\\len{\pi}\leq
    \ell}}\weight{\pi}\cdot\Xi^{\ell-\len{\pi}}\left(E[v]\right)\\
              &= \sum_{v\in\Vals} \sum_{\redpath{\pi}{e}{v}}\weight{\pi}\cdot\Dl\left(E[v]\right)\\
              &= \sum_{v\in\Vals}
                \Dl\left(E[v]\right)\sum_{\redpath{\pi}{e}{v}}\weight{\pi}\\
              &= \sum_{v\in\Vals}\Dl(e)(v)\cdot\Dl\left(E[v]\right) 
  \end{align*}
\end{proof}
\begin{corollary}
  \label{app:cor:dist-bind}
  Let $e \in \Exps$ be typeable and $E$ an evaluation context of appropriate type. Then
  $\PR{E[e]} = \sum_{\redpath{\pi}{e}{v}}\Wl(\pi)\cdot\PR{E[v]}$.
\end{corollary}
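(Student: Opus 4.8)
The plan is to derive the corollary directly from the monadic bind identity of Proposition~\ref{app:prop:dist-bind} together with Lemma~\ref{app:lem:prob-is-sum-of-dist}, by summing the distributional equality over all resulting values and recognising the termination probabilities as total masses. First I would rewrite the left-hand side as the total mass of the distribution $\Dl(E[e])$: by Lemma~\ref{app:lem:prob-is-sum-of-dist},
\[
  \PR{E[e]} = \sum_{w\in\Vals}\Dl(E[e])(w).
\]

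Next I would substitute the bind identity $\Dl(E[e]) = \sum_{v\in\Vals}\Dl(e)(v)\cdot\Dl(E[v])$ from Proposition~\ref{app:prop:dist-bind}, producing the double sum
\[
  \PR{E[e]} = \sum_{w\in\Vals}\sum_{v\in\Vals}\Dl(e)(v)\cdot\Dl(E[v])(w).
\]
Since every summand is non-negative, Tonelli's theorem (with respect to the counting measure on $\Vals\times\Vals$) allows me to interchange the two sums, and then the inner sum over $w$ is again a total mass, so that a second application of Lemma~\ref{app:lem:prob-is-sum-of-dist} gives
\[
  \PR{E[e]} = \sum_{v\in\Vals}\Dl(e)(v)\sum_{w\in\Vals}\Dl(E[v])(w) = \sum_{v\in\Vals}\Dl(e)(v)\cdot\PR{E[v]}.
\]

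Finally I would replace $\Dl(e)(v)$ by the path-sum $\sum_{\redpath{\pi}{e}{v}}\weight{\pi}$ that was already established in the proof of Proposition~\ref{app:prop:dist-bind}. Substituting yields $\PR{E[e]} = \sum_{v\in\Vals}\left(\sum_{\redpath{\pi}{e}{v}}\weight{\pi}\right)\PR{E[v]}$, and collapsing the outer sum over values $v$ together with the inner sum over paths $\pi$ ending in $v$ into a single sum over all value-terminating paths from $e$ produces exactly $\sum_{\redpath{\pi}{e}{v}}\weight{\pi}\cdot\PR{E[v]}$, as required.

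The only genuine subtlety is the handling of these potentially infinite sums: the interchange of summation order and the final re-indexing both rely essentially on non-negativity of all terms, so that Tonelli's theorem applies and the unconditional rearrangement is valid. I do not expect any real obstacle here beyond being careful to invoke non-negativity. Typeability of $e$ is used only to ensure that $E[e]$ and each $E[v]$ are genuine terms, so that all the distributions and termination probabilities appearing are well-defined.
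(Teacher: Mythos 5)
Your proof is correct and takes exactly the route the paper intends for this corollary (which it leaves implicit): combine Proposition~\ref{app:prop:dist-bind} with Lemma~\ref{app:lem:prob-is-sum-of-dist} applied twice, interchange the sums by non-negativity, and substitute the path-sum representation $\Dl(e)(v) = \sum_{\redpath{\pi}{e}{v}}\weight{\pi}$ established in the proposition's proof. Your intermediate identity $\PR{E[e]} = \sum_{v\in\Vals}\Dl(e)(v)\cdot\PR{E[v]}$ is precisely Corollary~\ref{app:cor:pr-and-dists}, which the paper states alongside this one.
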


\begin{corollary}
  \label{app:cor:pr-and-dists}
  For any term $e$ and evaluation context $E$ the equality
  \begin{align*}
    \PR{E[e]} = \sum_{v\in\Vals}\Dl(e)(v)\cdot\PR{E[v]}
  \end{align*}
  holds.
\end{corollary}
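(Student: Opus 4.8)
The plan is to obtain this identity as an immediate consequence of the monadic bind property for distributions (Proposition~\ref{app:prop:dist-bind}) combined with the fact that the total mass of $\Dl(e)$ recovers the termination probability (Lemma~\ref{app:lem:prob-is-sum-of-dist}). First I would apply Lemma~\ref{app:lem:prob-is-sum-of-dist} to the term $E[e]$ to rewrite the left-hand side as $\PR{E[e]} = \sum_{u\in\Vals}\Dl(E[e])(u)$.

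Next I would expand the distribution $\Dl(E[e])$ using Proposition~\ref{app:prop:dist-bind}, so that pointwise $\Dl(E[e])(u) = \sum_{v\in\Vals}\Dl(e)(v)\cdot\Dl(E[v])(u)$, giving the double sum $\sum_{u\in\Vals}\sum_{v\in\Vals}\Dl(e)(v)\cdot\Dl(E[v])(u)$. I would then interchange the order of summation and factor $\Dl(e)(v)$ out of the inner sum, arriving at $\sum_{v\in\Vals}\Dl(e)(v)\cdot\bigl(\sum_{u\in\Vals}\Dl(E[v])(u)\bigr)$.

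Finally I would recognize the inner sum $\sum_{u\in\Vals}\Dl(E[v])(u)$ as $\PR{E[v]}$ by a second application of Lemma~\ref{app:lem:prob-is-sum-of-dist}, this time to the term $E[v]$, which yields exactly the desired equality $\PR{E[e]} = \sum_{v\in\Vals}\Dl(e)(v)\cdot\PR{E[v]}$. There is no genuine difficulty in this argument, since it merely chains two already-established results; the one point requiring care is the interchange of the two sums over $\Vals$, which is legitimate because every summand is nonnegative, so that Tonelli's theorem relative to the counting measure applies, exactly as in the proof of Lemma~\ref{app:lem:prob-is-sum-of-dist}.
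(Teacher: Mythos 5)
Your proof is correct and matches the paper's intended derivation: the paper states this as an immediate corollary of Proposition~\ref{app:prop:dist-bind} and Lemma~\ref{app:lem:prob-is-sum-of-dist} (the same pair it cites for the subsequent point-mass corollary), which is exactly the chain you carry out. The summation interchange you justify via Tonelli's theorem is indeed the only point needing care, and your nonnegativity argument handles it.
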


\begin{corollary}
  \label{app:cor:point-mass-bind-prob}
  Let $e \in \Exps$ and $E$ an evaluation context. Suppose $\Dl(e) = p\cdot\delta_v$ for
  some $v\in\Vals$ and $p\in [0,1]$. Then $\PR{E[e]} = p\cdot\PR{E[v]}$.
\end{corollary}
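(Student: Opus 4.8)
The plan is to read this off directly from Corollary~\ref{app:cor:pr-and-dists}, which already states
\begin{align*}
  \PR{E[e]} = \sum_{v'\in\Vals}\Dl(e)(v')\cdot\PR{E[v']}.
\end{align*}
First I would substitute the hypothesis $\Dl(e) = p\cdot\delta_v$ into the right-hand side. Here $\delta_v$ is the density function of the Dirac measure at $v$ (the same $\delta$ used in the definition of $\Xi$ in Section~\ref{app:sec:distributions}), so $\delta_v(v') = 1$ when $v' = v$ and $\delta_v(v') = 0$ otherwise; hence $\Dl(e)(v') = p$ for $v' = v$ and $\Dl(e)(v') = 0$ for all other $v'$. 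Consequently every summand with $v' \neq v$ contributes $0\cdot\PR{E[v']} = 0$, and the entire series collapses to the single surviving term $p\cdot\PR{E[v]}$, which is exactly the desired equality.

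The argument is therefore essentially a one-line specialization of the previous corollary, and there is no real obstacle: the only point that needs a word of care is the convention for $\delta_v$, namely that it is the \emph{density} (a $\{0,1\}$-valued function on $\Vals$ concentrated at $v$), so that $p\cdot\delta_v$ has value $p$ at $v$ and $0$ elsewhere, matching the way $\dists$ was set up as a space of density functions rather than measures. (If one preferred to avoid invoking Corollary~\ref{app:cor:pr-and-dists}, the same conclusion could be obtained from Corollary~\ref{app:cor:dist-bind} by grouping the reduction paths from $e$ according to their endpoint value, but the route through Corollary~\ref{app:cor:pr-and-dists} is the most economical.)
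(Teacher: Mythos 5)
Your proof is correct and takes essentially the same route as the paper: the paper's proof simply cites Proposition~\ref{app:prop:dist-bind} together with Lemma~\ref{app:lem:prob-is-sum-of-dist}, whose combination is exactly the content of Corollary~\ref{app:cor:pr-and-dists} that you invoke, after which the point-mass substitution collapses the sum to the single term $p\cdot\PR{E[v]}$ just as you describe.
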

\begin{proof}
  Use Proposition~\ref{app:prop:dist-bind} and Lemma~\ref{app:lem:prob-is-sum-of-dist}.
\end{proof}

\begin{proposition}
  \label{app:prop:prk-eval-ctxts}
  For any evaluation context $E$ and term $e$ and any $k\in\NN$, 
  \begin{align*}
    \PRk{k}{E[e]} \leq \sum_{\redpath{\pi}{e}{v}}\weight{\pi}\cdot\PRk{k}{E[v]}
  \end{align*}
\end{proposition}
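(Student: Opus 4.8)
The plan is to prove the inequality by induction on $k$, uniformly in the evaluation context $E$ and the term $e$. The base case $k=0$ is immediate, since $\PRk{0}{E[e]} = 0$ while the right-hand side is a sum of nonnegative terms. For the inductive step, assume the claim holds at index $k$ for all $E$ and $e$, and fix $E$, $e$; we must bound $\PRk{k+1}{E[e]}$. The first move is to normalize the hole content along pure reductions. Since a single reduction that is neither an unfold--fold nor a choice reduction is deterministic (the active redex is unique and its contractum is unique away from $\RAND$), the term $e$ has a unique maximal pure reduction sequence $e \stepstozeropure \bar e$ in which $\bar e$ is a value, is stuck, or has an unfold--fold or choice reduction as its active step (or else $e$ reduces purely forever). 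If $e$ diverges purely then no path $\redpath{\pi}{e}{v}$ to a value exists and $E[e]$ likewise never reaches a value nor a counted reduction, so both sides are $0$; otherwise $E[e] \stepstozeropure E[\bar e]$ and, by Lemma~\ref{lem:pure-reductions-dont-change}, $\PRk{k+1}{E[e]} = \PRk{k+1}{E[\bar e]}$. Moreover, because this pure prefix is forced and has weight $1$, every path $\redpath{\pi}{e}{v}$ factors uniquely as the prefix followed by a path $\redpath{\pi'}{\bar e}{v}$, so the right-hand sides for $e$ and for $\bar e$ coincide. Thus it suffices to treat the case $e = \bar e$.

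I would then split on the shape of $\bar e$. If $\bar e$ is a value $v_0$, the only path to a value is the empty one, and both sides equal $\PRk{k+1}{E[v_0]}$. If $\bar e$ is stuck, then $E[\bar e]$ is stuck, so the left-hand side is $0$ and there are no paths to values, so the right-hand side is $0$. If $\bar e = D[\UNFOLD{(\FOLD{w})}]$, then $E[\bar e] = (\ecomp{E}{D})[\UNFOLD{(\FOLD{w})}]$ has this unfold--fold step as its active reduction, and Lemma~\ref{lem:prob-unfold-fold-choice} gives $\PRk{k+1}{E[\bar e]} = \PRk{k}{E[D[w]]}$. I apply the induction hypothesis at index $k$ to the context $E$ and term $D[w]$, then use monotonicity $\PRk{k}{E[v]} \leq \PRk{k+1}{E[v]}$, and finally reindex the path sum by prepending the weight-$1$ unfold--fold step (so paths from $\bar e$ to $v$ correspond to paths from $D[w]$ to $v$). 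The choice case $\bar e = D[\RAND{\underline n}]$ is analogous: Lemma~\ref{lem:prob-unfold-fold-choice} yields $\PRk{k+1}{E[\bar e]} = \frac{1}{n}\sum_{j=1}^{n}\PRk{k}{E[D[\underline j]]}$, I invoke the induction hypothesis at index $k$ for each successor $E[D[\underline j]]$, bump the index back up by monotonicity, and reassemble using the fact that every path from $\bar e$ to a value begins with one of the $n$ weight-$\frac1n$ choice steps followed by a path from some $D[\underline j]$.

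The main obstacle is the bookkeeping connecting reductions of the hole content with reductions of $E[e]$ and with the path sum. Concretely, I must repeatedly use that, when the hole content is a non-value, the active redex of $E[t]$ coincides with that of $t$ (unique decomposition of evaluation contexts, together with $E[t] = (\ecomp{E}{D})[\,\cdot\,]$ when $t = D[\,\cdot\,]$), so that lifting reductions through $E$ preserves both their type (pure, unfold--fold, or choice) and their weight; and I must check that the forced, weight-$1$ pure prefix lets the path sums for $e$ and $\bar e$ be identified, and that prepending the single counted step sets up a weight-preserving bijection between paths. The reason the statement is an inequality rather than the equality of Corollary~\ref{app:cor:dist-bind} is exactly the index mismatch in these two cases: the counted reductions used to carry $e$ to a value consume part of the budget $k+1$, whereas the right-hand side grants the full budget $k+1$ to each $E[v]$; the monotonicity step $\PRk{k}{E[v]} \leq \PRk{k+1}{E[v]}$ is precisely what bridges this gap.
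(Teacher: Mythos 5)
Your proof is correct and takes essentially the same route as the paper, which states only that ``the proof proceeds by induction on $k$'': you induct on $k$, normalize the pure (uncounted) prefix via Lemma~\ref{lem:pure-reductions-dont-change}, dispatch the counted unfold--fold and choice steps via Lemma~\ref{lem:prob-unfold-fold-choice}, and bridge the resulting index mismatch with monotonicity of $\PRk{k}{\cdot}$ in $k$. Your write-up simply supplies the case analysis and path-factorization bookkeeping that the paper's one-line proof leaves implicit.
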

The proof proceeds by induction on $k$. 

\section{Further examples}
In this section we show further equivalences which did not fit into the paper proper due
to space restrictions.

\subsubsection{Fair coin from an unfair one}
\label{app:sec:fair-coin-from}

Given an unfair coin, that is, a coin that comes up heads with probability $p$ and tails
with probability $1-p$, where $0 < p < 1$ we can derive an infinite sequence of fair coin
tosses using the procedure proposed by von Neumann. The procedure follows from the
observation that if we toss an unfair coin twice, the likelihood of getting (H, T) is the
same as the likelihood of getting (T, H). So the procedure works as follows

\begin{itemize}
\item Toss the coin twice
\item If the result is (H, T) or (T, H) return the result of the first toss
\item Else repeat the process
\end{itemize}

We only consider rational $p$ in this section (for a computable $p$ we could proceed
similarly, but the details would be more involved, since the function which returns $1$
with probability $p$ and $0$ with probability $1-p$ is a bit more challenging to write).

Let $1 \leq k < n$ be two natural numbers and $p = \frac{k}{n}$. Below we define $e_p :
\ONE \to \BOOL$ to be the term implementing the von Neumann procedure for generating fair
coin tosses from an unfair coin $t_p$ which returns $\TRUE$ with probability $p$ and
$\FALSE$ with proability $1-p$.  We will show that $e_p$ is contextually equivalent to
$\FUN{x}{\TRUE \oplus \FALSE}$.
We define $e_p$ as 
\begin{align*}
  e_p = \TAPP{\TAPP{\FIX}} \phi
\end{align*}
where
\begin{align*}
  \BOOL &= \ONE + \ONE\\
  \TRUE &= \INL{\unitval}\\
  \FALSE &= \INR{\unitval}\\
  e \equiv e' &= \MATCH{e}{\_}{e'}{\_}{\MATCH{e'}{\_}{\FALSE}{\_}{\TRUE}}\\
  \IFT{e}{e_1}{e_2} &= \MATCH{e}{\_}{e_1}{\_}{e_2}\\
  t_p &= \FUN{\unitval}{\syn{let}~y~=\RAND{\underline n}~\syn{in}~(y \leq \underline k)}
\end{align*}
and
\begin{align*}
  \phi = \FUN{f}{\FUN{\unitval}{&\syn{let}~x~=~t_p\, \unitval~\syn{in}\\
                                &\syn{let}~y~=~t_p\,\unitval~\syn{in}\\
                                &\IFT{x \equiv y}{f\,\unitval}{x}}}.
\end{align*}

By a simple calculation using the operational semantics we can see that given any
evaluation context $E$, we have
$\PR{E[t_p\,\unitval]} = \frac{k}{n}\PR{E[\TRUE]} + \frac{n-k}{n}\PR{E[\FALSE]}$.
Given any value $f$ of type ($\ONE \to \BOOL$) and any evaluation context $E$
with the hole of type $\BOOL$ we compute that $\PR{E[\phi\,f\,\unitval]}$ is equal to
$\frac{k^2+(n-k)^2}{n^2}\PR{E[f\,\unitval]} + 2\cdot\frac{k\cdot(n-k)}{n^2}\PR{E[\TRUE \oplus \FALSE]}$.
Finally for $e_p$ and any evaluation context $E$ with hole of type $\BOOL$ we have
\begin{align*}
  \PR{E[e_p\,\unitval]} &= \PR{\phi\,e_p\,\unitval}
  = \frac{k^2+(n-k)^2}{n^2}\PR{E[e_p\,\unitval]}\\ &+
                          2\cdot\frac{k\cdot(n-k)}{n^2}\PR{E[\TRUE \oplus \FALSE]}.
\end{align*}
from which we have by simple algebraic manipulation that
$\PR{E[e_p\,\unitval]} = \PR{E[\TRUE\oplus\FALSE]}$.

It is now straightforward to show
${\logequiv{\emp}{\emp}{e_p}{\FUN{\unitval}{\TRUE\oplus\FALSE}}{\ONE\to\BOOL}}$
since both $e_p$ and $\FUN{\unitval}{\TRUE\oplus\FALSE}$ are values, so we can show them
related in the value relation. The proof uses reflexivity of $\logequivrel$.

Alternatively, we could have used Theorem~\ref{thm:CIU-theorem} and showed directly that
$e_p\,\unitval$ and $\TRUE\oplus\FALSE$ are CIU-equivalent and then used extensionality
for values to conclude the proof.

\subsubsection{A hesitant identity function}
\label{app:sec:hesit-ident-funct}

We consider the identity function $e$ that does not return
immediately, but instead when applied to a value $v$ flips a coin whether to return $v$ or
call itself recursively with the same argument. We show that this function is contextually
equivalent to the identity function $\FUN{x}{x}$. The reason for this is, intuitively,
that even though $e$ when applied may diverge, the probability of it doing so is $0$.

\begin{example}
  \label{app:ex:hesitant-identity}
  Let $e = \TAPP{\TAPP{\FIX}} \left(\FUN{f}{\FUN{x}{(x \oplus f\,x)}}\right) : \alpha \to
  \alpha$.
  We have
  \[\logapprox{\alpha}{\emp}{e}{\FUN{x}{x}}{\alpha\to\alpha}
  \]
  and 
  \[\logapprox{\alpha}{\emp}{\FUN{x}{x}}{e}{\alpha\to\alpha}.\]
\end{example}
\begin{proof}
  We prove the two approximations separately. Let $\phi \in \VRelD{\alpha}$,
  $n\in\NN$. Since $e$ and $\FUN{x}{x}$ are values we show them directly related in the
  value relation. In both cases let $\phi = \FUN{f}{\FUN{x}{(x \oplus f\,x)}}$ and $h =
  \FUN{z}{\delta_\phi\,(\FOLD{\delta_\phi})\,z}$.
  \begin{itemize}
  \item By definition of the
    interpretation of function types we have to show, given $k\leq n$ and $(v, v') \in
    \phi_r(\alpha)(k)$, that $(e\,v, (\FUN{x}{x})\,v') \in \TT{\phi_r(\alpha)}(k)$.

    It is straightforward to see that $e\,v \stepstopure \phi\,e\,v$ using exactly one
    unfold-fold reduction.

    Now let $(E, E')$ be related at $k$.
    We proceed by induction and show that for every $\ell \leq k$, 
    $\PRk{\ell}{E[e\,v]}\leq\PR{E'[v']}$ which suffices by
    Lemma~\ref{lem:choice-free-reductions-dont-change}. 
    When $\ell = 0$ there is nothing to prove.
    So let $\ell = \ell' + 1$.
    \begin{align*}
      \PRk{\ell}{E[e\,v]} = \PRk{\ell'}{\phi\,e\,v} = \PRk{\ell'}{E[v \oplus e\,v]}.
    \end{align*}
    If $\ell' = 0$ we are trivially done. So suppose $\ell' = \ell'' + 1$ to get using
    Lemma~\ref{lem:prob-unfold-fold-choice} 
    \begin{align*}
      \PRk{\ell'}{E[v \oplus e\,v]} = \frac{1}{2}\PRk{\ell''}{E[v]} + \frac{1}{2}\PRk{\ell''}{e\,v}
    \end{align*}
    Using the fact that $\ell'' \leq k$ and monotonicity we have 
    \[\PRk{\ell''}{E[v]} \leq \PR{E'[v']}.\]
    Using the induction hypothesis we have
    \[\PRk{\ell''}{e\,v} \leq \PR{E'[v']}\]
    which together conclude the proof.
  \item Again by definition of the interpretation of function types we have to show, given
    $k\leq n$ and $(v, v') \in \phi_r(\alpha)(k)$, that $\left((\FUN{x}{x})\,v', e\,v\right)  \in
    \TT{\phi_r(\alpha)}(k)$.

    Again we have that $e\,v' \stepstopure \phi\,e\,v'$ using exactly one unfold-fold
    reduction.  Let $\ell \leq k$ and $(E, E')$ related at $\ell$. Using
    Lemma~\ref{lem:choice-free-reductions-dont-change} and the fact that $\PR{\cdot}$ is a
    fixed point of $\Phi$ we have
    \begin{align*}
      \PR{E'[e\,v']} &= \PR{E'[\phi\,e\,v']}\\
      &= \frac{1}{2}\PR{E'[v']} + \frac{1}{2}\PR{E'[e\,v']}
    \end{align*}
    and from this we get $\frac{1}{2}\PR{E'[e\,v']} = \frac{1}{2}\PR{E'[v']}$
    by simple algebraic manipulation and thus $\PR{E'[e\,v']} = \PR{E'[v']}$. Using this property
    it is a triviality to finish the proof.
  \end{itemize}
\end{proof}

\subsection{Further simple examples}
\label{app:sec:exampl-from-bisim}

The following example is a proof of \emph{perfect security} for the one-time pad
encryption scheme. Define the following functions

\begin{align*}
  \syn{not} &: \BOOL \to \BOOL\\
  \syn{not} &= \FUN{x}{\IFT{x}{\FALSE}{\TRUE}}\\
  \syn{xor} &: \BOOL\to\BOOL\to\BOOL\\
  \syn{xor} &= \FUN{x}{\FUN{y}{\IFT{x}{\syn{not}~y}{y}}}\\
  \syn{gen} &: \BOOL\\
  \syn{gen} &= \TRUE \oplus \FALSE
\end{align*}
$\syn{xor}$ is supposed to be the encryption function, with the first argument the
plaintext and the second one the encryption key.

We now encode a game with two players. The first player chooses two plaintexts and gives
them to the second player, who encrypts one of them (using $\syn{xor}$) chosen at random
with uniform probability and gives the result back to the first player. The first player
should not be able to guess which of the plaintexts was encrypted. This is expressed as
contextual equivalence of the following two programs
\begin{align*}
  \syn{exp} &= \FUN{x}{\FUN{y}{\syn{xor}~(x \oplus y)~\syn{gen}}}\\
  \syn{rnd} &= \FUN{x}{\FUN{y}{\syn{gen}}}
\end{align*}

To show $\syn{exp} \ctxequivrel \syn{rnd}$ we first use extensionality for values so we
only need to show that for all $v, u \in \Val{\BOOL}$
\begin{align*}
  \syn{xor}~(v \oplus u)~\syn{gen} \ctxequivrel \syn{gen}
\end{align*}
and the easiest way to do this is by using CIU equivalence. Given an evaluation context
$E$ we have
\begin{align*}
  \PR{E[\syn{xor}~(v \oplus u)~\syn{gen}]} = 
   \frac{1}{4}\left(\begin{array}{l}\PR{E[\syn{xor}~v~\TRUE]} +\\
                      \PR{E[\syn{xor}~v~\FALSE]} +\\
                      \PR{E[\syn{xor}~u~\TRUE]} +\\
                      \PR{E[\syn{xor}~u~\FALSE]}\end{array}\right)
\end{align*}
and by the canonical forms lemma $u$ and $v$ can be either $\TRUE$ or $\FALSE$. It is easy
to see that the sum evaluates to 
\begin{align*}
  \frac{1}{4}(2\cdot\PR{E[\TRUE]} + 2\cdot\PR{E[\FALSE]})
\end{align*}
quickly leading to the desired conclusion.

If we had used the logical relation directly we would not need the canonical forms lemma,
but then we would have to take care of step-indexing.

A similar example is when in one instance we choose to encrypt the first plaintext and in
the second instance the second one. Since the key is generated uniformly at random,
the first player should not be able to distinguish those two
instances. Concretely, this is expressed as contextual equivalence of the following two programs
\begin{align*}
  \syn{exp}_1 &= \FUN{x}{\FUN{y}{\syn{xor}~x~\syn{gen}}}\\
  \syn{exp}_2 &= \FUN{x}{\FUN{y}{\syn{xor}~y~\syn{gen}}}
\end{align*}

The proof is basically the same as the one above. Use extensionality and then CIU
equivalence.

\subsection{Restrictions in the free theorem are necessary}
\label{app:sec:restr-necessary}

We show that the free theorem in Section~\ref{sec:free-theorem} does not hold
without the assumptions on the behaviour of functions $f$ and $g$.

First, if $f = (\FUN{x}{\underline 1}) \oplus (\FUN{x}{\underline 2})$, $g$ is the identity function
$\FUN{x}{x}$ and $xs$ is the list $[\unitval, \unitval]$ then
the term $\TAPP{\TAPP{\syn{map}}}(f \comp g)\,xs$ can reduce to the list $[\underline
1,\underline 2]$, however 
the term $((\TAPP{\TAPP{\syn{map}}}\,f) \comp (\TAPP{\TAPP{\syn{map}}}\,g))\,xs$ cannot.
The reason is that in the first case the reduction of $f$ is performed for each element of
the list separately, but in the latter case, $f$ is first reduced to a value and then the
same value is applied to all the elements of the list. Technically, the condition we need
for $f$ is that there exists a value $f'$, such that $f \ctxequivrel f'$, but this version is
easily derived from the version stated above by congruence.

Second, if $g$ diverges with a non-zero probability for some value $v$, we take $m$ to be
the constant function returning the empty list and the list $xs$ to be the singleton list
containing only the value $v$. Then, if $f$ is any value,
$\TAPP{\TAPP{m}}\,(f \comp g)\,xs$ reduces to the empty list with probability $1$, however
$((\TAPP{\TAPP{m}}{f} \comp \TAPP{\TAPP{\syn{map}}}\,g))\,xs$ reduces to the empty list with
a probability smaller than $1$, since $g$ is still applied, since we are in a
\emph{call-by-value} language.

Third, if $g = \FUN{x}{\underline 1 \oplus \underline 2}$, $f$ is the identity function
and $xs$ is the singleton list containing $\unitval$ we take $m$ to be the function that
first appends the given list to itself and \emph{then} applies $\syn{map}$ to it. We then
have that $\TAPP{\TAPP{m}}\,(f \comp g)\,xs$ \emph{can} reduce to the list $[\underline
1,\underline 2]$, but $((\TAPP{\TAPP{m}}{f}) \comp (\TAPP{\TAPP{\syn{map}}}\,g))\,xs$ cannot,
since $g$ is only mapped over the singleton list producing lists $[\underline 1]$ and
$[\underline 2]$, which are then appended to themselves, giving lists $[\underline 1,
\underline 1]$ and $[\underline 2,\underline 2]$.

And last, if $m$ is not equivalent to a term of the form $\TFUN{\TFUN{\FUN{x}{e}}}$ then
the term on the left reduces to two different (not equivalent) values (or even diverges),
but the term on the right does not. We can use this to construct a distinguishing
evaluation context.

\subsection{A property of $\syn{map}$}
\label{app:sec:property-synmap}

The result in Section~\ref{sec:free-theorem} does not allow us to conclude
\begin{align*}
  \TAPP{\TAPP{\syn{map}}}\,(f \comp g) \ctxequivrel \TAPP{\TAPP{\syn{map}}}\,f\comp \TAPP{\TAPP{\syn{map}}}\,g.
\end{align*}
for all $f \in \Val{\sigma \to \rho}$ and $g\in \Val{\tau\to\sigma}$, however we can show,
using the definition of $\syn{map}$, that this does in fact hold. By using
extensionality (Lemma~\ref{lem:func-extensionality}) we need to show for any list $xs$ we
have
\begin{align*}
  \TAPP{\TAPP{\syn{map}}}\,(f \comp g)\,xs \ctxequivrel \left(\TAPP{\TAPP{\syn{map}}}\,f\comp \TAPP{\TAPP{\syn{map}}}\,g\right)\,xs.
\end{align*}

If $f$ and $g$ are values, $E$ an evaluation context and $xs$ a list of length $n$, it is
easy to see that
\begin{align*}
  \PR{E[\syn{map}\,f\,xs]} &= \sum_{us}\left(\prod_{i=1}^n\Dl(f\,x_i)(u_i)\right)\cdot\PR{E[us]}
\end{align*}
where the first sum is over all the lists of length $n$ and $x_i$ and $u_i$ are the $i$-th
elements of lists $xs$ and $us$, respectively. This then gives us
that
\[\PR{E[\syn{map}\,f\,(\syn{map}\,g\,xs)]}\]
is equal to
\begin{align*}
  &\sum_{vs}\left(\prod_{i=1}^n\Dl(g\,x_i)(v_i)\right)\cdot\PR{E[\syn{map}\,f\,vs]}\\
  &= \sum_{vs}\left(\prod_{i=1}^n\Dl(g\,x_i)(v_i)\right)\cdot\left(\sum_{us}\left(\prod_{i=1}^n\Dl(f\,v_i)(u_i)\right)\cdot\PR{E[us]}\right)\\
  &= \sum_{vs}\sum_{us}\left(\prod_{i=1}^n\Dl(g\,x_i)(v_i)\cdot\Dl(f\,v_i)(u_i)\right)\cdot\PR{E[us]}.
\end{align*}

On the other hand, we have that $\PR{E[\syn{map}\,(f\comp g)\,xs]}$ is equal to
\begin{align*}
  \sum_{us}\left(\prod_{i=1}^n\Dl((f \comp g)\,x_i)(u_i)\right)\cdot\PR{E[us]}
\end{align*}
and 
\begin{align*}
  \Dl((f \comp g)\,x_i)(u_i) = \sum_{v} \Dl(g\,x_i)(v)\cdot \Dl(f\,v)(u_i)
\end{align*}
together giving us 
\begin{align*}
  \sum_{us}\left(\prod_{i=1}^n\left(\sum_{v} \Dl(g\,x_i)(v)\cdot \Dl(f\,v)(u_i)\right)\right)\cdot\PR{E[us]}
\end{align*}
which by Fubini's theorem and the fact that lists of length $n$ correspond to $n$-tuples, is equal to
\begin{align*}
  \sum_{us}\sum_{vs}\left(\prod_{i=1}^n\left(\Dl(g\,x_i)(v_i)\cdot \Dl(f\,v_i)(u_i)\right)\right)\cdot\PR{E[us]}
\end{align*}
which is the same as $\PR{E[\syn{map}\,f\,(\syn{map}\,g\,xs)]}$.

If $f$ and $g$ are not equivalent to values, then the above result for $\syn{map}$ does not
hold. Consider, for instance, $f = \FUN{x}{\underline 1} \oplus \FUN{x}{\underline 2}$
and $g$ the identity or conversely, when applied to the list $xs = [\unitval,
\unitval]$. The expression $\TAPP{\TAPP{\syn{map}}}\,(f \comp g)\,xs$ can reduce to the
list $[1,2]$, whereas the expression $(\TAPP{\TAPP{\syn{map}}}\,f\comp
\TAPP{\TAPP{\syn{map}}}\,g)\,xs$ cannot. We can generalize this to show that if $f$ is
not equivalent to a value or $g$ is not, then the stated equality does not hold.

\end{document}